\let\emph\textit
\DeclareMathAlphabet\mathbfcal{OMS}{cmsy}{b}{n} 
\newtheorem{assumption}{Assumption}
\newtheorem{theorem}{Theorem}
\newtheorem{lemma}{Lemma}
\def\BibTeX{{\rm B\kern-.05em{\sc i\kern-.025em b}\kern-.08em
    T\kern-.1667em\lower.7ex\hbox{E}\kern-.125emX}}
\begin{document}
\title{Quantum Conformal Prediction for\\ Reliable Uncertainty Quantification in\\ Quantum Machine Learning}
\author{Sangwoo Park,~\IEEEmembership{Member,~IEEE,} and 
        Osvaldo Simeone,~\IEEEmembership{Fellow,~IEEE}%
\thanks{Code can be found at \url{https://github.com/kclip/quantum-CP}.}
\thanks{{\color{black}The authors are with the King’s Communications, Learning \& Information Processing (KCLIP) lab within the Centre for Intelligent Information Processing Systems (CIIPS), Department of Engineering, King’s College London, London WC2R 2LS, U.K.  (e-mail: sangwoo.park@kcl.ac.uk; osvaldo.simeone@kcl.ac.uk).}}
\thanks{{\color{black}This work was supported by the European Research Council (ERC) under the
European Union’s Horizon 2020 Research and Innovation Programme (grant agreement No. 725732), by the European Union’s Horizon Europe project CENTRIC (101096379),  by an Open Fellowship of the EPSRC (EP/W024101/1),  by the EPSRC project (EP/X011852/1), and by  Project REASON, a UK Government funded project under the Future Open Networks Research Challenge (FONRC) sponsored by the Department of Science Innovation and Technology (DSIT).}}}

\pagenumbering{arabic}

\maketitle
\thispagestyle{plain}
\pagestyle{plain}

\begin{abstract}
In this work, we aim at augmenting the decisions output by quantum models with  ``error bars'' that provide finite-sample coverage guarantees. Quantum models implement  implicit probabilistic predictors that produce multiple random decisions for each input through measurement shots. Randomness arises not only from the inherent stochasticity of quantum measurements, but also from quantum gate noise and quantum measurement noise caused by noisy hardware. Furthermore, quantum noise may be correlated across shots and it may present drifts in time. This paper proposes to leverage such randomness to define prediction sets for both classification and regression that provably capture the uncertainty of the model. The approach builds on probabilistic conformal prediction (PCP), while accounting for the unique features of quantum models. Among the key technical innovations, we introduce a new general class of non-conformity scores that address the presence of quantum noise, including possible drifts. Experimental results, using both simulators and current quantum computers, confirm the theoretical calibration guarantees of the proposed framework.
\end{abstract}


\begin{IEEEkeywords}
Quantum machine learning, conformal prediction, generalization analysis, uncertainty quantification. 
\end{IEEEkeywords}

\IEEEpeerreviewmaketitle

\section{Introduction}
\label{sec:intro}
Quantum machine learning (QML) is currently viewed as a promising paradigm for the optimization of algorithms that can leverage existing noisy intermediate scale quantum (NISQ) computers \cite{biamonte2017quantum,schuld2021machine, simeone2022introduction}. As for classical machine learning, in order for QML to be useful for sensitive decision-making application, it is essential to endow it with the capacity to reliably quantify uncertainty \cite{tran2022plex}.

\begin{figure}
\center
\includegraphics[width=0.75\columnwidth]{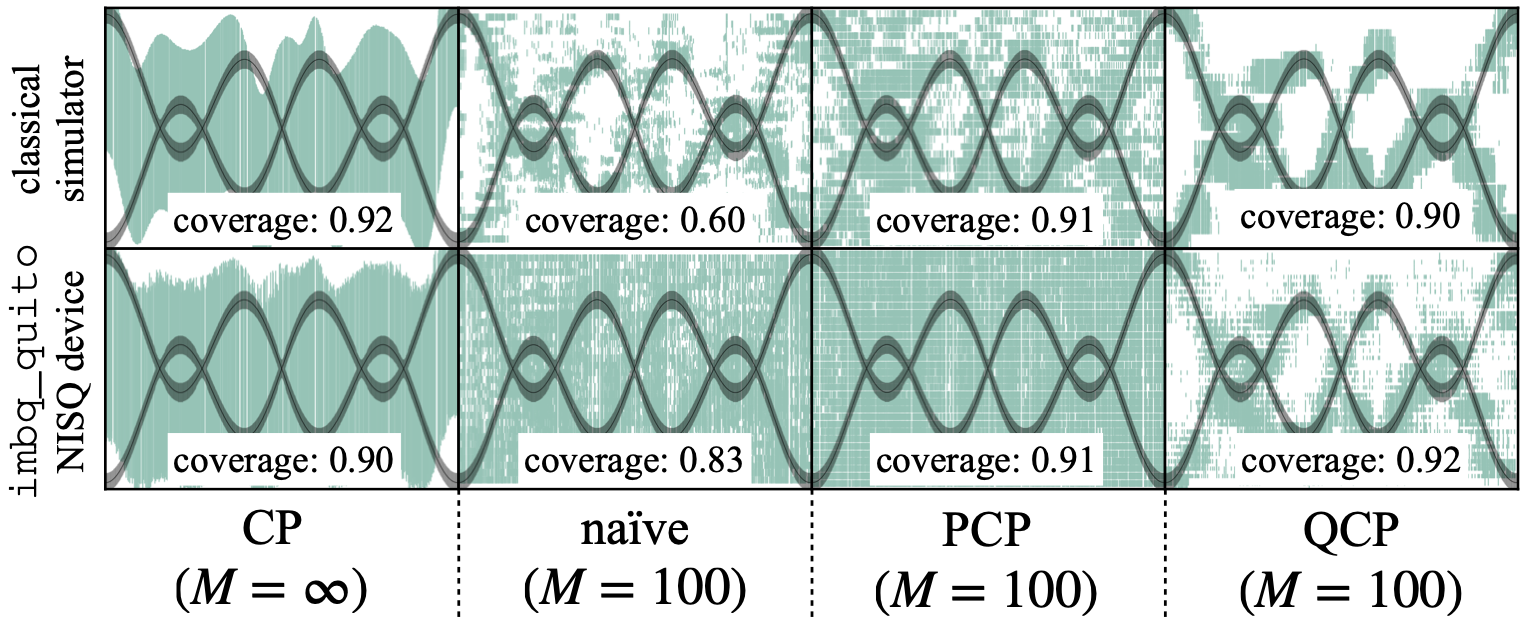}
  \caption{This paper addresses the problem of using $M$ samples drawn from quantum model to produce error bars  at coverage probability $1-\alpha$ for the target variable given a test input. The ground-truth (unknown) minimal $1-\alpha$ conditional support for a regression problem  is shown as the gray area. Predicted intervals are shown in green that are  produced by CP \cite{vovk2022algorithmic} applied to a conventional predictor based on the expected value over a large number $(M=20,000)$ of measurements shots (Sec.~\ref{subsec:CP_for_infinite_shot}); by a na\"ive predictor applied to $M=100$ shots (Sec.~\ref{sec:experimental_results});  by probabilistic CP \cite{wang2022probabilistic} applied to $M=100$ shots; and by QCP (this work), also applied to $M=100$ shots.} \label{fig:regression_learning_intro}
\end{figure}

This paper introduces a novel framework, referred to as \emph{quantum conformal prediction} (QCP), that augments QML models with provably reliable ``error bars''. QCP applies a post-hoc calibration step based on held-out calibration, or validation, data to a pre-trained quantum model, and it builds on \emph{conformal prediction} (CP), a general calibration methodology that is currently experiencing renewed attention in the area of classical machine learning \cite{vovk2022algorithmic,tibshirani2019conformal, barber2021predictive, angelopoulos2021gentle}. Unlike classical CP, QCP takes into account the unique nature of quantum models as \emph{probabilistic} machines, due to the inherent randomness of quantum measurements and to the presence of quantum hardware noise \cite{simeone2022introduction, cai2022quantum}.

\subsection{Quantum Circuits as Implicit Probabilistic Models} Quantum models implement  \emph{implicit} probabilistic predictors that produce multiple random decisions for each input through measurement shots (see, e.g., \cite[Sec.~3]{simeone2022introduction}). Randomness arises not only from the inherent stochasticity of quantum measurements, but also from quantum gate noise and quantum measurement noise caused by noisy hardware \cite{georgopoulos2021modeling, bravyi2021mitigating, smith2021qubit}. Furthermore, quantum noise may be \emph{correlated} across shots and it may present \emph{drifts} \cite{schwarz2011detecting, van2013quantum}. In this regard, while quantum error correction codes \cite{calderbank1996good} are too complex to run on NISQ hardware, \emph{quantum error mitigation} (QEM) is often implemented as a classical post-processing tool to mitigate the bias caused by quantum noise, while increasing the variance of the measurement outputs \cite{cai2022quantum, jose2022error}.

The goal of QCP is to assign \emph{well-calibrated} ``error bars'' to the decisions made by a pre-trained QML probabilistic model. The error bars are formally subsets of the output space, and \emph{calibration} refers to the property of containing the true target with probability no smaller than a predetermined \emph{coverage} level. We wish to ensure calibration guarantees that hold irrespective of the size of the training data set, of the ansatz of the QML model, of the training algorithm, of the number of shots, and of the type, correlation, and drift of  quantum hardware noise.

\subsection{Related Work} \label{subsec:related_work}
CP is a \emph{post-hoc} calibration methodology that yields a set  prediction, rather than a single point prediction, that contains the true target with predetermined coverage level \cite{vovk2022algorithmic}. Given a new test input, CP constructs a predicted set by collecting all possible outputs that \emph{conform} well with held-out  calibration data. Conformity is measured by a \emph{scoring function} that is typically defined based on the loss accrued by the pre-trained model on the given data point.

Recent work \cite{wang2022probabilistic} introduced a novel CP scheme, referred to as \emph{probabilistic CP} (PCP), that applies to \emph{classical} sampling-based  predictors. The main motivation of reference \cite{wang2022probabilistic} in leveraging probabilistic outputs is to address the mismatch between the assumed probabilistic model and the ground-truth distribution, which may require disconnected predictive sets.

QCP is inspired by reference \cite{wang2022probabilistic}, although our motivation stems from the fact that quantum models are \emph{inherently} implicit probabilistic models. As we explain next, to account for the unique features of QML probabilistic models, we introduce a new class of scoring functions that address the presence of quantum noise, including possible correlation and drifts \cite{schwarz2011detecting}.

Generalization analysis for QML also aims at quantifying test performance, although the focus is on determining scaling laws as a function of the size of the training set (see, e.g., \cite{banchi2023statistical} for an overview). Accordingly, as we briefly discuss in Appendix~\ref{supp:sec:gen_vs_cp}, quantum generalization analysis fails to provide operationally meaningful error bars, unlike QCP.

\subsection{Main Contributions}
Our main contributions are as follows.\\
    \noindent $\bullet$ We introduce QCP, a post-hoc calibration methodology for QML models that produces predictive sets with coverage guarantees that hold irrespective of the size of the training data set, of the ansatz of the QML model, of the training algorithm, of the number of shots, and of the type, correlation, and drift of  quantum hardware noise. \\
\noindent $\bullet$ We detail experimental results based on both simulation and quantum hardware implementations. The experiments confirm the  theoretical calibration guarantees of QCP, while also illustrating its merits in terms of informativeness of the predicted set (see Fig.~\ref{fig:regression_learning_intro} for a preview). 



\section{Conventional Conformal Prediction for Deterministic Quantum Models}
\label{subsec:quantum_models}
\label{sec:CCP}
In this section, we introduce, for reference, a direct application of conventional CP \cite{vovk2022algorithmic, angelopoulos2021gentle} to predictions obtained via QML models. To this end, we make the conventional assumption that the outputs of the quantum model are obtained by averaging over a large number of shots, as in most of the literature on the subject (see, e.g., \cite{schuld2021machine}). Probabilistic models are studied in the next section. Throughout,   bold fonts are used to denote random variables.

\subsection{Quantum Circuits as Deterministic Models} \label{sec:pqc_deterministic}
A   \emph{parameterized quantum circuit} (PQC) encodes the classical input $x \in \mathcal{X}$ into a \emph{parameterized quantum embedding} defined by the state of an $n$-qubits state (see, e.g., \cite{schuld2021machine, simeone2022introduction}).  The state is described by a $2^n\times 2^n$ density matrix $\rho(x|\theta)$, which is a positive semi-definite matrix with unitary trace, i.e., $\text{Tr}(\rho(x|\theta))=1$, where $\text{Tr}(\cdot)$ represents the trace operator. We write $N=2^n$ for the dimension of the Hilbert space on which density matrix $\rho(x|\theta)$ operates. The state $\rho(x|\theta)$ is obtained via the application of a parameterized unitary matrix $U(x|\theta)$ to a fiducial state $|0\rangle$ for the register of $n$ qubits, yielding \begin{align}\rho(x|\theta)=U(x|\theta)|0\rangle \langle 0| U(x|\theta)^\dagger,\label{eq:quantum_embedding}\end{align} where $\dagger$ represents the complex conjugate transpose operation.

Deterministic predictors can be in principle implemented via a PQC by considering the expected value of some \emph{observable} $O$ as the output of the model. An observable is defined by an $N\times N$ Hermitian matrix  $\label{eq:observable} O= \sum_{j=1}^{N'} o_j \Pi_j$ with real eigenvalues $\{o_j\}_{j=1}^{N'}$ and projection matrices $\{ \Pi_j \}_{j=1}^{N'}$, where $N' \leq N$ is the number of distinct eigenvalues of the observable $O$.  Accordingly, for a scalar target variable $y \in \mathbb{R}$, the deterministic predictor is given by the expectation \begin{align}
   \label{eq:deterministic_output_pqc}
    \hat{y} = \text{Tr}(O \rho(x|\theta))=\langle O  \rangle_{\rho(x|\theta)}.
\end{align}Importantly, evaluating the expectation in \eqref{eq:deterministic_output_pqc} to a desired level of accuracy requires carrying out a  large number of measurements. 


Training of the deterministic parametric predictor  \eqref{eq:deterministic_output_pqc} leverages a training set $\mathcal{D}^\text{tr} = \{ z^\text{tr}[i]=(x^\text{tr}[i],y^\text{tr}[i]) \}_{i=1}^{|\mathcal{D}^\text{tr}|}$, and it produces an optimized parameter vector $\theta_{\mathcal{D}^\mathrm{tr}}$.

\begin{figure*}
  \centering
  \includegraphics[width=0.88\textwidth]{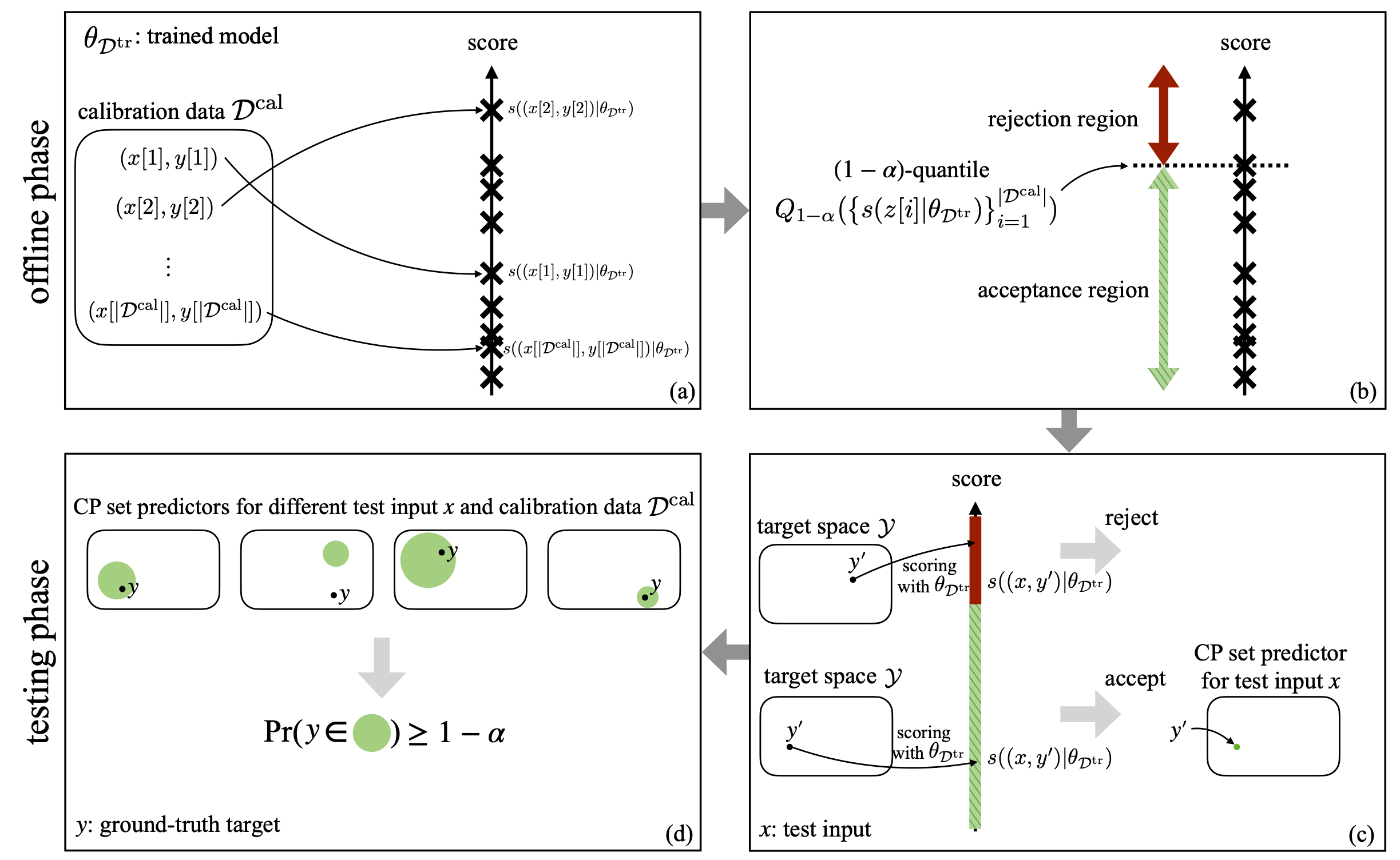}
  \caption{Illustration of conventional CP for deterministic (likelihood-based) models: (a) Based on any trained model parameter vector  $\theta_{\mathcal{D}^\text{tr}}$ and given a scoring function $s(\cdot|\theta_{\mathcal{D}^\text{tr}})$, e.g., the quadratic loss, CP first computes scores for every calibration example in an offline phase; (b) Based on the  $|\mathcal{D}^\text{cal}|$ calibration scores, CP divides the real line into an acceptance region and a rejection region by computing the $(1-\alpha)$-quantile of the calibration scores; (c) CP constructs a set predictor for a test input $x$ by collecting all  candidate outputs $y' \in \mathcal{Y}$ whose scores lie in the acceptance region; (d) The obtained CP set predictor (green circle) is guaranteed to satisfy the validity condition \eqref{eq:relfirst}.}
  \label{fig:conven_CP_cartoon}
\end{figure*}

\subsection{Conformal Prediction for Deterministic PQC Models} \label{subsec:CP_for_infinite_shot}
Assume now access to a pre-trained model and to a calibration data set $\mathcal{D}^\text{cal} = \{ z[i]=(x[i],y[i]) \}_{i=1}^{|\mathcal{D}^\text{cal}|}$.  Following the CP methodology, we are interested in this section in designing a \emph{set predictor} $\Gamma(x|\mathcal{D}^\text{cal}, \theta_{\mathcal{D}^\text{tr}}) \subseteq \mathcal{Y}$ that satisfies the \emph{calibration} condition \begin{equation}\label{eq:relfirst} \Pr\big(\mathbf{y} \in {\Gamma}(\mathbf{x}|\mathbfcal{D}^\text{cal}, \theta_{\mathcal{D}^\text{tr}}) \big) \geq 1-\alpha,\end{equation} for some predetermined \emph{coverage level} $1-\alpha \in (0,1)$. In (\ref{eq:relfirst}),  the probability $\Pr(\cdot)$ is taken over the joint distribution $p(\mathcal{D}^\text{cal},z)$ of calibration data set $\mathcal{D}^\text{cal}$ and test data point $z=(x,y)$.  Condition \eqref{eq:relfirst} stipulates that the predictive set ${\Gamma}(\mathbf{x}|\mathbfcal{D}^\text{cal}, \theta_{\mathcal{D}^\text{tr}})$ must include the true output $\mathbf{y}$ with probability no smaller than $1-\alpha$.

Conventional CP can be directly applied for this purpose. To this end, let us fix  a scoring function $s(z=(x,y)|\theta_{\mathcal{D}^\text{tr}})$ obtained from a loss function such as the quadratic loss $s(z|\theta_{\mathcal{D}^\text{tr}})=(y-\langle O  \rangle_{\rho(x|\theta_{\mathcal{D}^\mathrm{tr}})})^2$, and assume that the miscoverage level satisfies the inequality $\alpha \geq 1/(|\mathcal{D}^\text{cal}|+1)$. During an \emph{offline} calibration phase, CP computes the $\lceil (1-\alpha)(|\mathcal{D}^\text{cal}|+1) \rceil$-th smallest value among the scores $\{ s(z[1]|\theta_{\mathcal{D}^\text{tr}}), ..., s(z[|\mathcal{D}^\text{cal}|]|\theta_{\mathcal{D}^\text{tr}}) \}$ evaluated on the calibration data set, which we denote as $Q_{1-\alpha}( \{  s(z[i]| \theta_{\mathcal{D}^\text{tr}}  )  \}_{i=1}^{|\mathcal{D}^\text{cal}|} )$.

During the \emph{test} phase, CP constructs a predictive set for test input $x$ by including all output values $y\in\mathcal{Y}$ whose score is no larger than $Q_{1-\alpha}( \{  s(z[i]| \theta_{\mathcal{D}^\text{tr}} )  \}_{i=1}^{|\mathcal{D}^\text{cal}|} )$, i.e., 
\begin{align}
    \label{eq:classical_CP_set_predictor}
    \Gamma(x|\mathcal{D}^\text{cal},\theta_{\mathcal{D}^\text{tr}}) =&\Big\{ y' \in \mathcal{Y} \hspace{-0.1cm}\: : \hspace{-0.1cm}\: s((x,y')|\theta_{\mathcal{D}^\text{tr}})  \:\leq\:  Q_{1-\alpha} \big(\big\{ s({z}[i]|\theta_{\mathcal{D}^\text{tr}})\big\}_{i=1}^{ |\mathcal{D}^\text{cal}| } \big) \Big\}.
\end{align}
The overall procedure of CP is illustrated in Fig.~\ref{fig:conven_CP_cartoon}. It can be proved that this predictor satisfies the reliability condition \eqref{eq:relfirst}  as long as the calibration and test data are independent and identically distributed (i.i.d.) (for extensions, see \cite{vovk2022algorithmic, angelopoulos2021gentle, cp2023}).



\section{Quantum Circuits as Implicit Probabilistic Models}
\label{sec:PQC_as_implicit}
In this section, we provide the necessary background on PQCs used as probabilistic models by first assuming ideal, noiseless, quantum circuits, and then covering the impact of quantum hardware noise.

\subsection{Noiseless Quantum Circuits as Implicit Probabilistic Models} 
\label{subsec:pqcs_as_implicit_prob_model}
As discussed in Sec.~\ref{subsec:quantum_models}, the output of PQCs is typically taken to be the expectation \eqref{eq:deterministic_output_pqc}. However, the exact evaluation of this quantity requires running the PQC for an arbitrarily large number of times in order to average over a, theoretically infinite, number of shots. As in Sec.~\ref{sec:pqc_deterministic}, let us fix a trained model parameter vector $\theta_{\mathcal{D}^\text{tr}}$, as well as a projective measurement defined by the projection matrices $\{\Pi_j\}_{j=1}^{N'}$ and corresponding numerical outputs. For any input $x$, by Born's rule, the output $\hat{\mathbf{y}}$ obtained from the model equals value $o_j$ with probability \begin{align}
    \label{eq:single_measurement_distribution}
     p(y=o_j|x,\theta_{\mathcal{D}^\text{tr}})=\text{Tr}(\Pi_j \rho(x|\theta_{\mathcal{D}^\text{tr}})).
\end{align} 
The distribution  \eqref{eq:single_measurement_distribution} is generally not directly accessible. Rather, the model is \emph{implicit}, in the sense that all that can be observed by a user are samples $\hat{\mathbf{y}}\sim  p(y|x,\theta_{\mathcal{D}^\text{tr}})$ drawn from it.

More precisely, in an ideal implementation, each new $m$-th measurement shot for a given input $x$ produces an independent output $\hat{\mathbf{y}}^{m}$ with distribution \eqref{eq:single_measurement_distribution}. Accordingly, given an input $x$, one obtains $M$ independent measurements $\hat{\mathbf{y}}^{1:M} = \{ \hat{\mathbf{y}}^m \}_{m=1}^M$ as 
\begin{align} \label{eq:noiseless_iid_pqc}
\hat{\mathbf{y}}^m\underset{\text{i.i.d.}}{\sim} p(y|x,\theta_{\mathcal{D}^\text{tr}}).  
\end{align}

\subsection{Noisy Quantum Circuits as Implicit Probabilistic Models}
\label{subsec:noisy_quantum}

Treating a PQC as an implicit probabilistic model has the additional advantage that one can seamlessly account for the presence of \emph{quantum hardware noise}. As summarized in reference \cite{georgopoulos2021modeling}, there are different sources of quantum noise, with the most relevant for our study being \emph{gate noise}, affecting the internal operations of the quantum circuit, and \emph{measurement noise}, impairing the output measurements.

\emph{Gate noise} can be modelled by appending quantum channels, such as depolarizing and amplitude damping channels, to all, or some, of the gates in the circuit. Overall, the presence of gate noise produces a modified density matrix $\tilde{\rho}(x|\theta_{\mathcal{D}^\text{tr}})$ in lieu of the noiseless density $\rho(x|\theta_{\mathcal{D}^\text{tr}})$. For instance, using Pauli channels to model noise, the noisy density $\tilde{\rho}(x|\theta_{\mathcal{D}^\text{tr}})$ can be expressed as 
\begin{align} \label{eq:gate_noise}
    \tilde{\rho}(x|\theta_{\mathcal{D}^\text{tr}}) = (1-\gamma)\cdot \rho(x|\theta_{\mathcal{D}^\text{tr}}) + \gamma \cdot \rho^{\mathcal{N}}(x|\theta_{\mathcal{D}^\text{tr}}),
\end{align}
where $\rho^{\mathcal{N}}(x|\theta_{\mathcal{D}^\text{tr}})$ represents the \emph{noise density matrix} and $\gamma \in [0,1]$ is a parameter \cite{koczor2021dominant}. As an example, the noise density matrix may be the fully mixed state \cite{schwarz2011detecting}. Note that parameter $\gamma$ determines the level of noise, with $\gamma=0$ indicating a noiseless circuit and $\gamma=1$ a completely noisy circuit whose output state does not depend on the input $x$.

The noisy PQC can be also viewed as an implicit probabilistic model in that, for any input $x$,  it produces a sample $\tilde{\mathbf{y}}$ equal to $o_j$ with probability\begin{align}
    \label{eq:single_measurement_distribution_noisy}
     p(\tilde{y}=o_j|x,\theta_{\mathcal{D}^\text{tr}})=\text{Tr}(\Pi_j \tilde{\rho}(x|\theta_{\mathcal{D}^\text{tr}}))
\end{align} for all $j=1,...,N'$. 
The bias in the expectation \eqref{eq:single_measurement_distribution_noisy} caused by quantum noise can be mitigated via classical post-processing by various \emph{QEM techniques}, while generally increasing the variance of the output \cite{cai2022quantum, jose2022error}.


\setlength{\abovedisplayskip}{3pt}
\setlength{\belowdisplayskip}{0pt}

\emph{Measurement noise} can be modelled as a classical channel that affects the measurement outputs \cite{bravyi2021mitigating, alexander2020qiskit}. Given the output $\tilde{\mathbf{y}} \sim p(y|x, \theta_{\mathcal{D}^\text{tr}})$ from \eqref{eq:single_measurement_distribution_noisy}, measurement noise causes the recorded value $\hat{\mathbf{y}} = \hat{y}$ to be the noisy version 
\begin{align} \label{eq:measurement_noise_single}
    \hat{\mathbf{y}} \sim p(\hat{y}|\tilde{{y}}),
\end{align}
with some conditional distribution $p(\hat{y}|\tilde{y})$. Marginalizing out the noise-free measurement output $\tilde{\mathbf{y}}$ yields the effective distribution of the output $\hat{\mathbf{y}}$ given the input $x$ as  \cite{jayakumar2023universal}
\begin{align}
p(\hat{{y}}=o_i|x,\theta_{\mathcal{D}^\text{tr}}) &= \sum_{j=1}^{N'} p(\hat{y}=o_i|\tilde{y}=o_j) p(\tilde{y}=o_j|x,\theta_{\mathcal{D}^\text{tr}})  \nonumber\\ &= \sum_{j=1}^{N'} p(\hat{y}=o_i|\tilde{y}=o_j)   \text{Tr}(\Pi_j \tilde{\rho}(x|\theta_{\mathcal{D}^\text{tr}})).
\label{eq:single_measurement_distribution_noisy_full}     
\end{align}

\setlength{\abovedisplayskip}{8pt}
\setlength{\belowdisplayskip}{8pt}

Overall, under both gate and measurement noise, given an input $x$, the PQC produces samples
\begin{align} \label{eq:noisy_shot_iid}
\hat{\mathbf{y}}^m\underset{\text{i.i.d.}}{\sim} p(\hat{y}|x,\theta_{\mathcal{D}^\text{tr}})
\end{align}
for $m=1,...,M$, with distribution \eqref{eq:single_measurement_distribution_noisy_full}. QEM for measurement noise aims at inverting  the classical channel $p(\hat{y}|\tilde{y})$ based on an estimate of the channel, which may increase the measurement variance \cite{nation2021scalable, bravyi2021mitigating, georgopoulos2021modeling}.

\subsection{Modelling Quantum Noise Drift}
\label{subsec:noisy_quantum_corr}

So far, by \eqref{eq:noisy_shot_iid}, we have assumed the independence and statistical equivalence of the measurement shots for any given input $x$.  However, in general, quantum noise may exhibit correlations \cite{sarovar2020detecting, bravyi2021mitigating} and drifts \cite{schwarz2011detecting, van2013quantum},  which break the i.i.d. assumption across measurement shots. In the presence of noise correlation and/or drift across shots, one needs to consider the joint distribution $p(\hat{y}^{1:M}|x, \theta_{\mathcal{D}^\text{tr}})$ of all shots $\hat{\mathbf{y}}^{1:M}$ produced for a given input $x$, i.e., $\hat{\mathbf{y}}^{1:M} \sim p(\hat{y}^{1:M}|x,\theta_{\mathcal{D}^\text{tr}})$. 

This more general setting can be modelled in a manner similar to \eqref{eq:single_measurement_distribution_noisy_full}. In particular, the joint distribution $p(\hat{y}^{1:M}|x, \theta_{\mathcal{D}^\text{tr}})$ can be written as
\begin{align} \label{eq:joint_disribution_M_shots}
    p(\hat{y}^{1:M}|x,\theta_{\mathcal{D}^\text{tr}}) = \sum_{\tilde{y}^{1:M}} p(\hat{y}^{1:M}|\tilde{y}^{1:M}) p(\tilde{y}^{1:M}|x,\theta_{\mathcal{D}^\text{tr}}),
\end{align}
where the conditional distribution $p(\hat{y}^{1:M}|\tilde{y}^{1:M})$ represents the classical channel modelling measurement noise, and $p(\tilde{y}^{1:M}|x,\theta_{\mathcal{D}^\text{tr}})$ describes the joint distribution of the observations accounting also for gate noise. The classical channel $p(\hat{y}^{1:M}|\tilde{y}^{1:M})$ may be modelled by using Markov models that capture correlations across shots of the measurement noise \cite{bravyi2021mitigating}. 


A possible model for the distribution $p(\tilde{y}^{1:M}|x,\theta_{\mathcal{D}^\text{tr}})$ was introduced in \cite{schwarz2011detecting, van2013quantum} to study drifts of gate noise in quantum circuits. According to this model, the joint distribution is written as 
\begin{align}\label{eq:drift_joint_distribution}
    p(\tilde{y}^{1:M}|x,\theta_{\mathcal{D}^\text{tr}}) = \prod_{m=1}^M p(\tilde{y}^m|x,\theta_{\mathcal{D}^\text{tr}}), 
\end{align}
with each probability in \eqref{eq:drift_joint_distribution} dependent on a different density matrix $\tilde{\rho}_m(x|\theta_{\mathcal{D}^\text{tr}})$ as
\begin{align}     p(\tilde{{y}}^m=o_j|x,\theta_{\mathcal{D}^\text{tr}})=\text{Tr}(\Pi_j \tilde{\rho}_m(x|\theta_{\mathcal{D}^\text{tr}})).
\end{align}
Assuming Pauli noise channels, the $m$-th density matrix may be expressed by generalizing the expression \eqref{eq:gate_noise} in which the parameter $\gamma$ now depends on the shot index $m$, i.e., 
\begin{align}  \label{eq:decoherence_evolution} 
    \tilde{\rho}_m(x|\theta_{\mathcal{D}^\text{tr}}) = (1-\gamma_m)\cdot \rho(x|\theta_{\mathcal{D}^\text{tr}}) + \gamma_m \cdot \rho^{\mathcal{N}}(x|\theta_{\mathcal{D}^\text{tr}}),
\end{align}
where $\gamma_m \in [0,1]$. Reference \cite{schwarz2011detecting} defined the $m$-th parameter $\gamma_m$ as a function of a parameter $\tau > 0$ as $\gamma_m = 1-e^{-m/\tau}$.

\begin{figure*}
\centering
\includegraphics[width=0.88\textwidth]{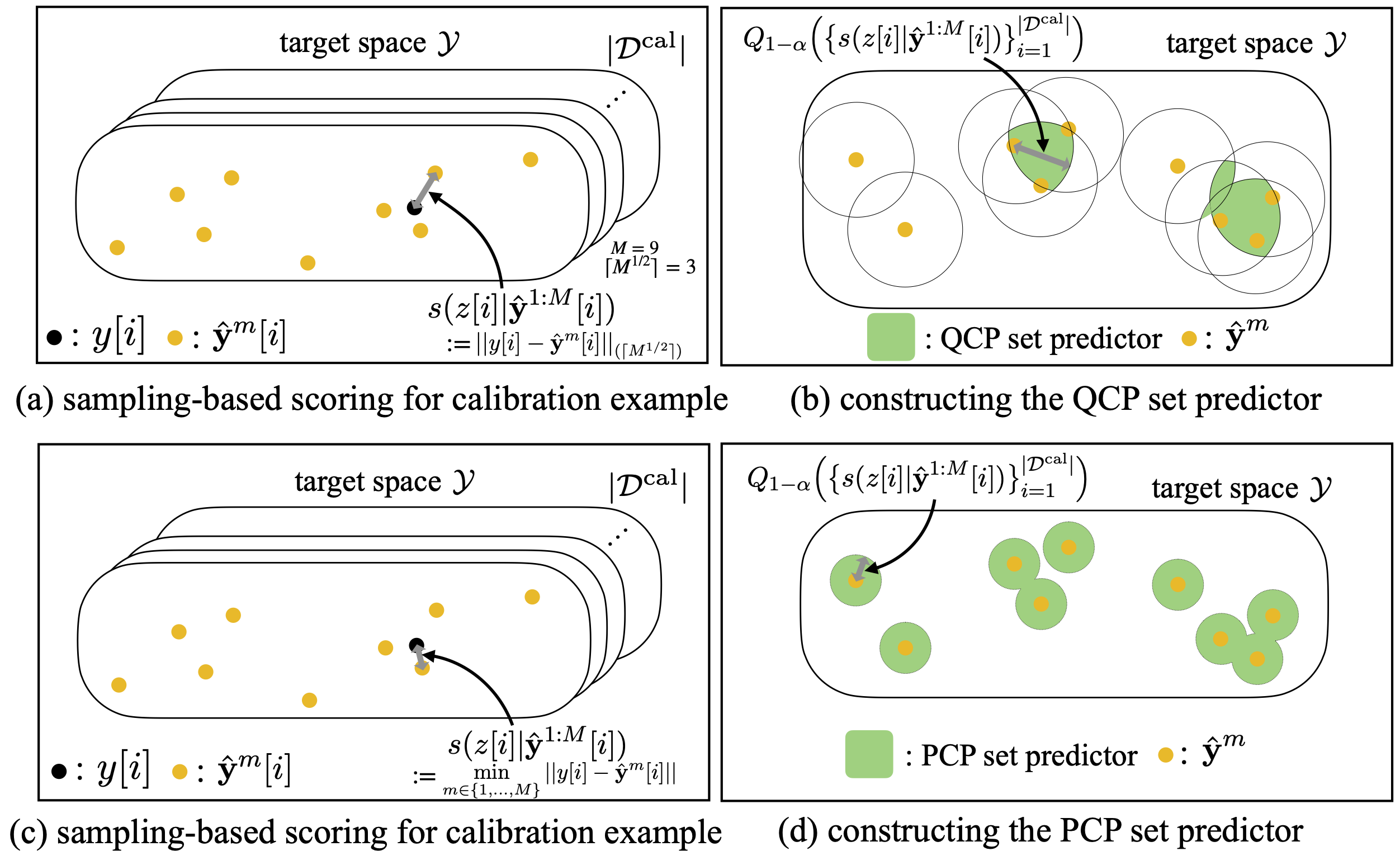}
\caption{(a) In QCP, the score for each calibration example $z[i]=(x[i],y[i])$ is evaluated based on independent random predictions $\hat{\mathbf{y}}^{1:M}[i]$ produced by the trained model as in  \eqref{eq:joint_disribution_M_shots}. Using the proposed scoring function \eqref{eq:pcp_scoring_from_generalized},  the score is obtained by evaluating the $k$-th smallest distance between the true output $y[i]$ and the $M$ random predictions $\hat{\mathbf{y}}^m[i]$.  (b) The QCP set predictor uses the $M$ random predictions $\hat{\mathbf{y}}^{1:M}$ for test input $x$ in order to evaluate the score of every possible candidate output $y'\in\mathcal{Y}$. QCP yields a predictive set given by disjoint regions in the plane. By not relying solely on the closest prediction as PCP \cite{wang2022probabilistic}, as depicted in (c) and (d), the prediction sets produced by QCP can be more robust to noisy predictions (see Sec.~\ref{sec:experimental_results}). This set is guaranteed to contain the true output $\mathbf{y}$ with the predetermined coverage level $1-\alpha$.}
  \label{fig:qcp_scoring} \vspace{-0.3cm}
\end{figure*}

\section{Quantum Conformal Prediction} \label{sec:qcp}
\label{sec:QCP}
In this section, we introduce QCP, a post-hoc calibration scheme for QML that treats PQCs as implicit, i.e., likelihood-free, probabilistic models. We focus on the general model \eqref{eq:joint_disribution_M_shots}, which accounts for quantum gate noise and measurement noise, including also correlation and drifts (see Sec.~\ref{sec:PQC_as_implicit}).

\subsection{Scoring Functions via Non-Parametric Density Estimation} 
\label{subsubsec:quantum_scoring}
The main idea underlying QCP is to build predictive intervals based on a new class of scoring functions that can account for: (\emph{i}) the availability of multiple random predictions, or shots; (\emph{ii}) the inherent noise caused by imperfect models affected by quantum gate noise and measurement noise; and (\emph{iii}) the possible correlation and drifts of the measurement shots. The proposed class of scoring functions builds on \emph{non-parametric density estimation} (see, e.g., \cite{inzenman1991recent, lei2014distribution}), and it recovers the PCP scoring function \cite{wang2022probabilistic} as a special case. 

Given $M$ samples $\hat{\mathbf{y}}^{1:M}$ obtained from \eqref{eq:joint_disribution_M_shots}, we consider scoring functions of the form 
\begin{align} \label{eq:new_class_scoring}
    s(z=(x,y)|\hat{\mathbf{y}}^{1:M}) = \frac{1}{\hat{p}(y|x,\theta_{\mathcal{D}^\text{tr}})},
\end{align} where $\hat{p}(y|x,\theta_{\mathcal{D}^\text{tr}})$ is a non-parametric estimate of the predictive distribution $p(y|x,\theta_{\mathcal{D}^\text{tr}})$ obtained using the samples $\hat{\mathbf{y}}^{1:M}$. Specifically, we focus on non-parametric density estimators of the general form
\begin{align} \label{eq:general_non_para_DE}
    \hat{p}(y|x,\theta_{\mathcal{D}^\text{tr}}) = \frac{1}{M}  \sum_{m=1}^M w_m \cdot  K\bigg( \frac{y - \hat{\mathbf{y}}^m}{h(y,\hat{\mathbf{y}}^{1:M})} \bigg),
\end{align}
where $\{w_m\}_{m=1}^M$ are non-negative weights, $K(\cdot)$ is a \emph{kernel function}, and $h(\cdot,\cdot)$ is a \emph{bandwidth function}. Conventional estimators set the weights $w_m$ to $1$ \cite{cacoullos1964estimation, hand1982kernel}, but, as we will see, more general choices for the weights can better account for the presence of noise drift. Estimators of the form \eqref{eq:general_non_para_DE} recover standard kernel density estimation by assuming a constant bandwidth function; the standard histogram for discrete target space $\mathcal{Y}$ by choosing the kernel $K(x)=\mathbbm{1}(x=0)$, where $\mathbbm{1}(\cdot)$ is the indicator function; and  $k$-nearest neighbor ($k$-NN) density estimators \cite{loftsgaarden1965nonparametric}. 

$k$-NN estimators are obtained by setting the bandwidth function as $h(y, \hat{\mathbf{y}}^{1:M}) = || y - \hat{\mathbf{y}}^m||_{(k)}$, where $|| y - \hat{\mathbf{y}}^m||_{(k)}$ is the $k$-th smallest Euclidean distance in the set of distances $\{ ||y - \hat{\mathbf{y}}^m || \}_{m=1}^M$, and by choosing a rectangular kernel. Accordingly, with the $k$-NN estimator, neglecting constants, we can write the scoring function \eqref{eq:new_class_scoring} as 
\begin{align} 
    \label{eq:pcp_scoring_from_generalized_with_k}
    s(z=(x,y)|\hat{\mathbf{y}}^{1:M}) = || y-\hat{\mathbf{y}}^m||_{(k)}.  
\end{align} 
The $k$-NN density estimation has the property of \emph{asymptotic consistency}, i.e., it tends to the true distribution $p(y|x,\theta)$  pointwise  in probability as long as the parameter $k$ is chosen to grow with $M$, so that the conditions $\lim_{M \rightarrow \infty}k(M)=\infty$ and $\lim_{M \rightarrow \infty}k(M)/M=0$ are verified \cite{loftsgaarden1965nonparametric}.

The scoring function introduced by PCP \cite{wang2022probabilistic} can be obtained as a special case of the $k$-NN scoring function \eqref{eq:pcp_scoring_from_generalized} by setting $k=1$. As mentioned, the choice $k=1$ does not assure asymptotic consistency \cite{loftsgaarden1965nonparametric}. Furthermore, it tends to be sensitive to noise, as it relies solely on the closest sample  as illustrated in Fig.~\ref{fig:qcp_scoring}.

\subsection{Quantum Conformal Prediction} 
\label{subsec:qcp}
QCP adopts the  scoring functions \eqref{eq:new_class_scoring} with the following specific choices: (\emph{i}) for classification problems, we adopt the histogram-based score 
\begin{align} \label{eq:qcp_scoring_classification}
    s(z=(x,y)|\hat{\mathbf{y}}^{1:M}) = \frac{M}{\sum_{m=1}^M w_m \cdot \mathbbm{1}(y=\hat{\mathbf{y}}^m)};
\end{align}
while (\emph{ii}) for regression problems, we adopt the $k$-NN score \eqref{eq:pcp_scoring_from_generalized_with_k}
\begin{align}
    \label{eq:pcp_scoring_from_generalized}
    s(z=(x,y)|\hat{\mathbf{y}}^{1:M}) = || y-\hat{\mathbf{y}}^m||_{( \lceil M^{1/2} \rceil )},
\end{align} 
where the choice of the $k=\lceil M^{1/2} \rceil$ in \eqref{eq:pcp_scoring_from_generalized} is motivated by asymptotic consistency \cite{loftsgaarden1965nonparametric}.

\emph{Offline phase: } During the offline calibration phase, QCP produces $M$ predictions $\hat{\mathbf{y}}^{1:M}[i]$ for each of the calibration examples $z[i] \in \mathcal{D}^\text{cal}$ by using the pre-trained PQC.  Predictions are distributed as in \eqref{eq:joint_disribution_M_shots}, accounting for quantum gate noise and measurement noise, including also correlation and drifts.  After computing the scores for every calibration example $z[i]$ based on the corresponding $M$ predictions $\hat{\mathbf{y}}^{1:M}[i]$, QCP evaluates the $\lceil (1-\alpha)(|\mathcal{D}^\text{cal}|+1)\rceil$-th smallest score (for $\alpha \geq 1/(|\mathcal{D}^\text{cal}|+1)$), which is denoted as $Q_{1-\alpha}( \{  s(z[i]| \hat{\mathbf{y}}^{1:M}[i]  )  \}_{i=1}^{|\mathcal{D}^\text{cal}|} )$ as in Sec. 3.

\emph{Test phase: }  For a test input $x$, QCP produces  $M$ predictions $\hat{\mathbf{y}}^{1:M}$, which follow \eqref{eq:joint_disribution_M_shots}, via the same pre-trained PQC. These predictions are used to construct the predicted  set  as
\begin{align}
    \label{eq:QCP_set_predictor}
    \mathbf{\Gamma}_M(x|\mathcal{D}^\text{cal},\theta_{\mathcal{D}^\text{tr}}) =&\Big\{ y' \in \mathcal{Y} \hspace{-0.1cm}\: : \hspace{-0.1cm}\: s((x,y')|\hat{\mathbf{y}}^{1:M} )  \:\leq\:  Q_{1-\alpha} \big(\big\{ s({z}[i]| \hat{\mathbf{y}}^{1:M}[i] )\big\}_{i=1}^{ |\mathcal{D}^\text{cal}| } \big) \Big\}.
\end{align}

We now provide explicit expression for the QCP set predictor for both regression and classification problem. 

\emph{QCP for regression:} Using \eqref{eq:pcp_scoring_from_generalized}, the QCP set predictor \eqref{eq:QCP_set_predictor} is given as
\begin{align}
\label{eq:closed_form_genreal_scoring_pcp_and_qcp}
&\mathbf{\Gamma}_M(x|\mathcal{D}^\text{cal},\theta_{\mathcal{D}^\text{tr}}) =\Big\{ y' \in \mathcal{Y} \hspace{-0.1cm}\: : \hspace{-0.1cm}\: \sum_{m=1}^M \mathbbm{1}\big(y' \in B_{Q(\mathcal{D}^\text{cal})}(\hat{\mathbf{y}}^m)\big) \geq \lceil M^{1/2} \rceil \Big\},
 \end{align}
in which $B_{q(\mathcal{D}^\text{cal})}(y)$ is the closed interval with center point $y$, i.e., $B_r(y) = \{ y' \in \mathcal{Y}: || y - y'|| \leq r \}$, and radius $Q(\mathcal{D}^\text{cal})=  Q_{1-\alpha} \big(\big\{ 
|| y[i] - \hat{\mathbf{y}}^m[i] ||_{(\lceil M^{1/2} \rceil)}
\big\}_{i=1}^{ |\mathcal{D}^\text{cal}| } \big)$. 

\emph{QCP for classification:} Using the weighted histogram  in \eqref{eq:qcp_scoring_classification}, the QCP set predictor \eqref{eq:QCP_set_predictor} is evaluated as
\begin{align}
\label{eq:qcp_set_predictor_classification}
&\mathbf{\Gamma}_M(x|\mathcal{D}^\text{cal},\theta_{\mathcal{D}^\text{tr}}) =\Big\{ y' \in \mathcal{Y} \hspace{-0.1cm}\: : \hspace{-0.1cm}\: \sum_{m=1}^M w_m \cdot \mathbbm{1}(y'=\hat{\mathbf{y}}^m)  \geq \frac{M}{Q(\mathcal{D}^\text{cal})}\Big\}.
\end{align}
An illustration of QCP is provided in Fig.~\ref{fig:QCP_cartoon}, while an algorithmic description can be found in  Algorithm~\ref{alg:QCP}.
We will discuss specific choice for the weights $\{w_m\}_{m=1}^M$ in Sec.~\ref{sec:experimental_results}.

\begin{figure*}[t]
  \centering
  \includegraphics[width=0.9\textwidth]{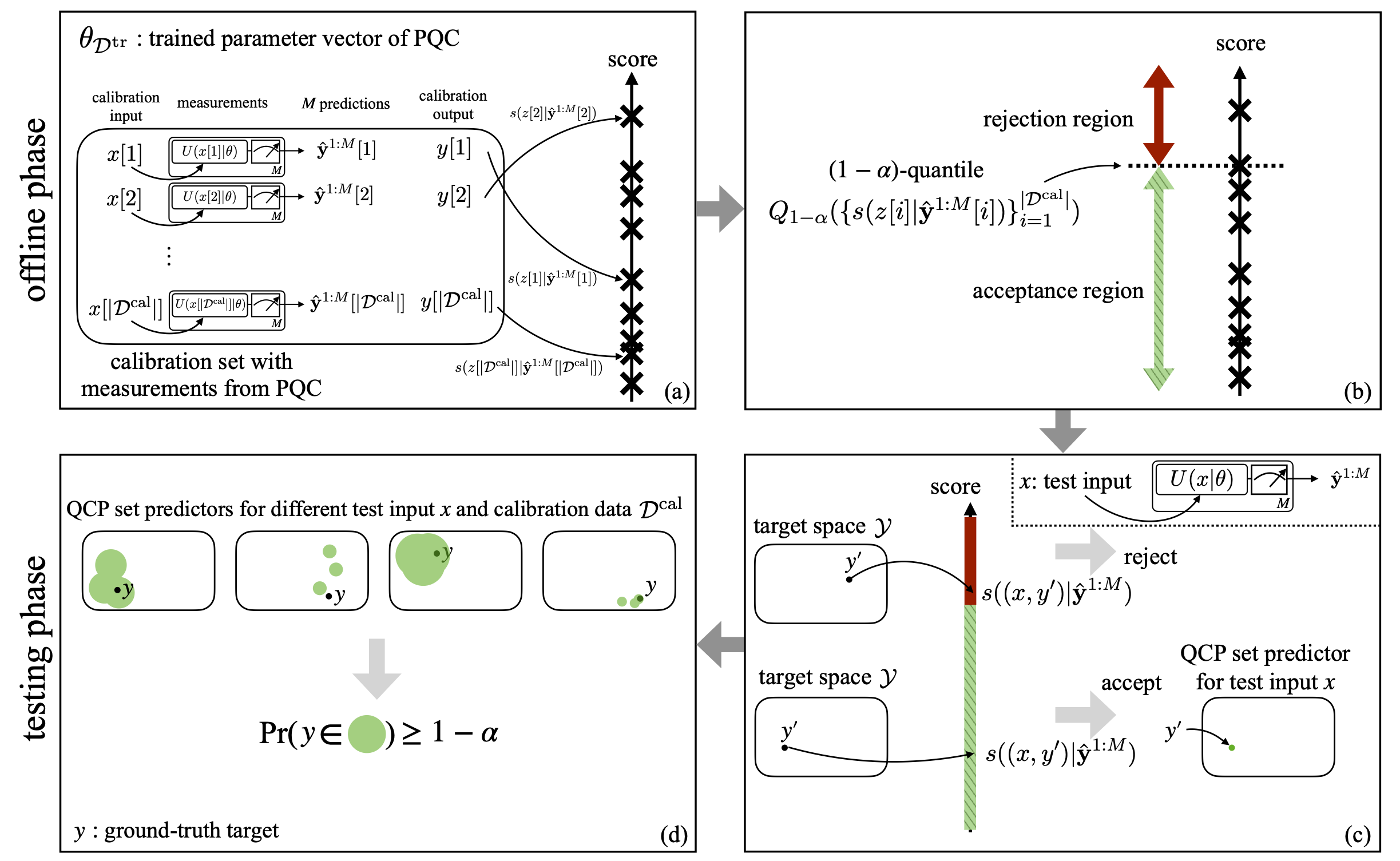}
  \caption{Illustration of QCP for a PQC with trained parameter vector  $\theta_{\mathcal{D}^\text{tr}}$ and for a scoring function $s((x,y)|\hat{\mathbf{y}}^{1:M})$ (see Sec.~4 of the main text): In the \emph{offline phase}, (a) QCP first computes a score for each calibration example based on  $M$ random predictions produced by the PQC; (b) then, based on the  obtained $|\mathcal{D}^\text{cal}|$ calibration scores, QCP divides the real line into an acceptance region and a rejection region by using as a threshold the $(1-\alpha)$-quantile of the calibration scores. In the \emph{testing} phase,  (c) QCP constructs a set predictor for a test input $x$ by including in the set all  candidate outputs $y' \in \mathcal{Y}$ whose scores lie in the acceptance region.  (d) The obtained QCP set predictor (green circle) is guaranteed to satisfy the validity condition \eqref{eq:validity_condition_for_all_theorems}.}
  \label{fig:QCP_cartoon}
\end{figure*}

\begin{algorithm}[hbt!] 
\DontPrintSemicolon
\LinesNumbered
\smallskip
\KwIn{trained parameter vector $\theta_{\mathcal{D}^\text{tr}}$ for the PQC; calibration data set $\mathcal{D}^\text{cal}=\{ z[i]=(x[i],y[i]) \}_{i=1}^{|\mathcal{D}^\text{cal}|}$; test input $x$; number of shots $M\geq 1$; desired coverage level $1-\alpha$}
\KwOut{well-calibrated predictive set $\Gamma_M(x|\mathcal{D}^\text{cal},\theta_{\mathcal{D}^\text{tr}})$ at coverage level $1-\alpha$  }
\vspace{0.15cm}
\hrule
\vspace{0.15cm}
{\bf choose}  scoring function $s( (x,y)|\hat{\mathbf{y}}^{1:M})$ as outlined  in  Sec.~4 of the main text\\
\texttt{Offline phase}\\
\For{{\em $i$-th calibration data example $z[i]$ with $i=1,...,|\mathcal{D}^\text{cal}|$ }}{\vspace{0.1cm}
given input $x[i]$, produce $M$ independent predictions $\hat{\mathbf{y}}^{1:M}[i]=\{ \hat{\mathbf{y}}^{m}[i] \}_{m=1}^{M}$ via $M$ shots of the PQC \\
compute the corresponding score $s(z[i]|\hat{\mathbf{y}}^{1:M}[i])$
}
find the  $\lceil(1-\alpha)(|\mathcal{D}^\text{cal}|+1)\rceil$-th smallest value among the obtained scores, which is denoted as $Q_{1-\alpha}( \{  s(z[i]|\hat{\mathbf{y}}^{1:M}[i])  \}_{i=1}^{|\mathcal{D}^\text{cal}|} )$ \\
\texttt{Testing phase for input $x$}\\
{\bf initialize} predictive set $\mathbf{\Gamma}_M(x|\mathcal{D}^\text{cal},\theta_{\mathcal{D}^\text{tr}}) \leftarrow \{ \}$\\
for test input $x$, make $M$ predictions $\hat{\mathbf{y}}^{1:M}=\{ \hat{\mathbf{y}}^{m} \}_{m=1}^{M}$ via PQC  \\
\For{{\em candidate output $y' \in \mathcal{Y}$}}{\vspace{0.1cm}
compute the corresponding score $s((x,y')|\hat{\mathbf{y}}^{1:M})$\\
\hspace{-0.25cm}\uIf{ ${s}((x,y')|\hat{\mathbf{y}}^{1:M})  \:\leq\:  Q_{1-\alpha}( \{  s(z[i]|\hat{\mathbf{y}}^{1:M}[i])  \}_{i=1}^{|\mathcal{D}^\text{cal}|} )$}{
$\mathbf{\Gamma}_M(x|\mathcal{D}^\text{cal},\theta_{\mathcal{D}^\text{tr}}) \leftarrow \mathbf{\Gamma}_M(x|\mathcal{D}^\text{cal},\theta_{\mathcal{D}^\text{tr}})  \cup \{ y' \}$
  }
}
{\bf return} the predictive set $\mathbf{\Gamma}_M(x|\mathcal{D}^\text{cal},\theta_{\mathcal{D}^\text{tr}})$ for test input $x$
\caption{Quantum Conformal Prediction (QCP)}
\label{alg:QCP}
\end{algorithm}

\subsection{Theoretical Calibration Guarantees}
\label{subsubsec:qcp_theoretical}

The QCP set predictor \eqref{eq:QCP_set_predictor} is well calibrated at coverage level $1-\alpha$, as detailed in the next theorem proved in Appendix~\ref{appendix:proofs}.

\begin{theorem}[Calibration of QCP]\label{ther:qcp} Assuming that calibration data $\mathbfcal{D}^\text{cal}$ and test data $\mathbf{z}$ are i.i.d., while allowing for arbitrary correlation and drifts across shots as per  \eqref{eq:joint_disribution_M_shots},  for any coverage level $1-\alpha \in (0,1)$ and for any pre-trained PQC with model parameter vector $\theta_{\mathcal{D}^{\text{tr}}}$, the QCP set predictor \eqref{eq:QCP_set_predictor} satisfies the inequality \begin{align}
    \label{eq:validity_qcp}
    \Pr( \mathbf{y} \in \mathbf{\Gamma}_M(\mathbf{x}|\mathbfcal{D}^\text{cal},\theta_{\mathcal{D}^\text{tr}}) ) \geq 1-\alpha,
\end{align}with probability taken over the joint distribution of the test data $\mathbf{z}$, of the calibration data set $\mathbfcal{D}^\text{cal}$, and of the independent random predictions $\hat{\mathbf{y}}^{1:M}$ and $\{\hat{\mathbf{y}}^{1:M}[i]\}_{i=1}^{|\mathcal{D}^\text{cal}|}$ produced by the PQC for test and calibration points, respectively.
\end{theorem}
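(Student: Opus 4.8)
The plan is to reduce the claim to the standard exchangeability argument underlying conformal prediction, applied not to the raw data points but to \emph{augmented} tuples that bundle each data point with the batch of $M$ predictions generated for it. Write $n = |\mathcal{D}^\text{cal}|$. For the test point define the pair $\mathbf{v} = (\mathbf{z}, \hat{\mathbf{y}}^{1:M})$, and for each calibration index $i$ define $\mathbf{v}[i] = (\mathbf{z}[i], \hat{\mathbf{y}}^{1:M}[i])$. The first step is to show that the collection $\{\mathbf{v}[1], \ldots, \mathbf{v}[n], \mathbf{v}\}$ is exchangeable; in fact it is i.i.d. This follows by conditioning: the data points $\mathbf{z}[i]$ and $\mathbf{z}$ are i.i.d.\ by assumption, and, conditioned on its own input, each prediction batch is drawn from the fixed conditional law $p(\hat{y}^{1:M}|x,\theta_{\mathcal{D}^\text{tr}})$ of \eqref{eq:joint_disribution_M_shots}, independently of all other data points and all other prediction batches. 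Hence each $\mathbf{v}[i]$ (and $\mathbf{v}$) is a draw from the common joint law $p(z)\,p(\hat{y}^{1:M}|x,\theta_{\mathcal{D}^\text{tr}})$, and the draws are mutually independent. Crucially, the arbitrary correlation and drift of the quantum noise \emph{within} a batch, as permitted by \eqref{eq:joint_disribution_M_shots}, is immaterial here: it only shapes the common batch-level conditional law, not the independence \emph{across} batches.

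Second, since the scoring function $s(\cdot|\cdot)$ is a fixed measurable map, the scores $\mathbf{S}[i] := s(\mathbf{z}[i]|\hat{\mathbf{y}}^{1:M}[i])$ and $\mathbf{S} := s(\mathbf{z}|\hat{\mathbf{y}}^{1:M})$ are obtained by applying the same function to the exchangeable tuples $\mathbf{v}[i]$ and $\mathbf{v}$, and are therefore themselves exchangeable real-valued random variables. Third, I invoke the standard quantile lemma for exchangeable variables: if $\mathbf{S}[1], \ldots, \mathbf{S}[n], \mathbf{S}$ are exchangeable, then for $\alpha \geq 1/(n+1)$ the empirical $\lceil(1-\alpha)(n+1)\rceil$-th order statistic $Q_{1-\alpha}(\{\mathbf{S}[i]\}_{i=1}^n)$ satisfies $\Pr\big(\mathbf{S} \leq Q_{1-\alpha}(\{\mathbf{S}[i]\}_{i=1}^n)\big) \geq 1-\alpha$, with the inequality preserved even in the presence of ties because the event on the left is closed (it includes $\mathbf{S} = Q_{1-\alpha}$). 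Finally, by the definition \eqref{eq:QCP_set_predictor} of the set predictor, the event $\{\mathbf{y} \in \mathbf{\Gamma}_M(\mathbf{x}|\mathbfcal{D}^\text{cal},\theta_{\mathcal{D}^\text{tr}})\}$ coincides exactly with $\{s((\mathbf{x},\mathbf{y})|\hat{\mathbf{y}}^{1:M}) \leq Q_{1-\alpha}(\{\mathbf{S}[i]\}_{i=1}^n)\} = \{\mathbf{S} \leq Q_{1-\alpha}(\{\mathbf{S}[i]\}_{i=1}^n)\}$, so \eqref{eq:validity_qcp} follows; when $\alpha < 1/(n+1)$ one sets $Q_{1-\alpha} = +\infty$ and coverage is trivial. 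The closed forms \eqref{eq:closed_form_genreal_scoring_pcp_and_qcp} and \eqref{eq:qcp_set_predictor_classification} are mere algebraic rewritings of this same event for the scores \eqref{eq:pcp_scoring_from_generalized} and \eqref{eq:qcp_scoring_classification}, so they inherit the guarantee with no additional work.

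The step I expect to require the most care is the first one: making precise that augmenting each point with its \emph{own} independently resampled batch of $M$ noisy shots preserves i.i.d.-ness, and in particular verifying that no shared randomness leaks across batches of \emph{distinct} inputs. The drift model \eqref{eq:joint_disribution_M_shots}–\eqref{eq:decoherence_evolution} must be read so that the shot index $m$ runs within a single query and the parameters $\gamma_m$ are re-applied afresh for each data point; the correlation structure then lives entirely inside $p(\hat{y}^{1:M}|x,\theta_{\mathcal{D}^\text{tr}})$. Everything downstream of that observation is the textbook conformal-prediction argument, so the technical content of the proof is concentrated in correctly framing the batch-level model as a per-point conditional law.
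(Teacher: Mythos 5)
Your proposal is correct and follows essentially the same route as the paper: the paper's Appendix proof likewise establishes exchangeability of the score variables via a lemma requiring exactly your factorization (each prediction batch drawn from a common conditional law $p(\nu[i]\,|\,z[i])$ given its own data point, independently across points), and then concludes with the standard quantile lemma for exchangeable scores. Your additional remarks on tie-handling and on the trivial case $\alpha < 1/(|\mathcal{D}^\text{cal}|+1)$ are consistent with the paper's treatment and require no changes.
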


In Appendix~\ref{sec:finite_number_guarantee}, we further elaborate on the practical significance of the probability in \eqref{eq:validity_qcp} by studying the ratio of successful set predictions over a number of trials characterized by independent calibration and test data (see also \cite[Sec.~C]{angelopoulos2021gentle} for conventional CP in classical models).

\section{QCP for Quantum Data} \label{sec:application_to_quantum}
In the previous sections, we have focused on situations in which input data $x$ is of classical nature. Accordingly, given a pre-trained PQC, one can produce $M$ copies of the quantum embedding state $\rho(x|\theta_{\mathcal{D}^\text{tr}})$ to be used by QCP in order to yield the set predictor $\mathbf{\Gamma}_M(x|\mathcal{D}^\text{cal},\theta_{\mathcal{D}^\text{tr}})$. As we briefly discuss here, QCP can be equally well applied to settings in which the pre-trained model takes as input quantum data.

Assume that there are $C$ classes of quantum states, indexed by variable $y \in \{1,...,C\}$, such that the $y$-th class corresponds to a density matrix $\rho(y)$. The data generating mechanism is specified by a distribution $p(y)$ over the class index $y$ and by the set of density matrices $\rho(y)$ with $y\in \{1,...,C\}$. Given any pre-designed positive operator-valued measurement (POVM) defined by positive semidefinite matrices $\mathcal{P}=\{\mathrm{P}_1,...,\mathrm{P}_C\}$,  the POVM produces output $\hat{\mathbf{y}}$ with probability  $
    \hat{\mathbf{y}} \sim \text{Tr}(\mathrm{P}_y \rho ).$
The POVM $\mathcal{P}$ may be implemented via a pre-trained PQC with parameter vector $\theta_{\mathcal{D}^\text{tr}}$. 

The goal of QCP is to use the predesigned model operating on $M$ copies of a test state $\rho \in \{\rho(1),...,\rho(C)\}$, denoted as $\rho^{\otimes M}$, as well as $M$ copies of each example in the calibration data set $\mathcal{D}^\text{cal}$, to produce a set predictor $\mathbf{\Gamma}_M(\rho^{\otimes M}|\mathcal{D}^\text{cal},\mathcal{P})$ that contains the true label with predetermined coverage level $1-\alpha$. The calibration data is of the form $\mathcal{D}^\text{cal}=\{(\rho(y[i])^{\otimes M},y[i])\}_{i=1}^{|\mathcal{D}^\text{cal}|}$, where label  $y[i]\in \{1,...,C\}$ and $M$ copies of corresponding state $\rho(y[i])$ are available for each $i$-th calibration example. Mathematically, the reliability requirement can be expressed as the inequality \begin{align}
        \label{eq:qcp_QC_main}
        \Pr\big(\mathbf{y} \in \mathbf{\Gamma}_M(\rho(\mathbf{y})^{\otimes M}|\mathbfcal{D}^\text{cal},\mathcal{P})\big) \geq 1-\alpha,
    \end{align}
where the probability $\Pr(\cdot)$ is taken over the i.i.d. calibration and test labels $\mathbf{y}, \mathbf{y}[1],...,\mathbf{y}[|\mathcal{D}^\text{cal}|]\sim p(y)$  and over the independent random predictions $\hat{\mathbf{y}}^{1:M}$ and $\{\hat{\mathbf{y}}^{1:M}[i]\}_{i=1}^{|\mathcal{D}^\text{cal}|}$. QCP can be directly applied to the test predictions $\hat{\mathbf{y}}^{1:M}$ and to the calibration predictions $\{\hat{\mathbf{y}}^{1:M}[i]\}_{i=1}^{|\mathcal{D}^\text{cal}|}$ as described in Sec.~\ref{subsec:qcp}, and, by Theorem 3, it satisfies the reliability condition \eqref{eq:validity_qcp}. 

\section{Experimental Settings}
\label{sec:experimental_settings}
In the rest of this paper, we demonstrate the validity of the proposed QCP method by addressing both an unsupervised learning task, namely density learning, and a supervised learning task, namely regression. We start in this section by describing the experimental settings, including problem definition and assumed PQC ansatzes, while  the next section presents the experimental results. Numerical examples using a classical simulator  were run over a GPU server with single NVIDIA A100 card, while experiments using a NISQ computer were implemented on  the \texttt{imbq\_quito} device made available by IBM Quantum.  We cover first density learning and then regression. 

\begin{figure*}[t]
  \centering 
  \includegraphics[width=0.95\textwidth]{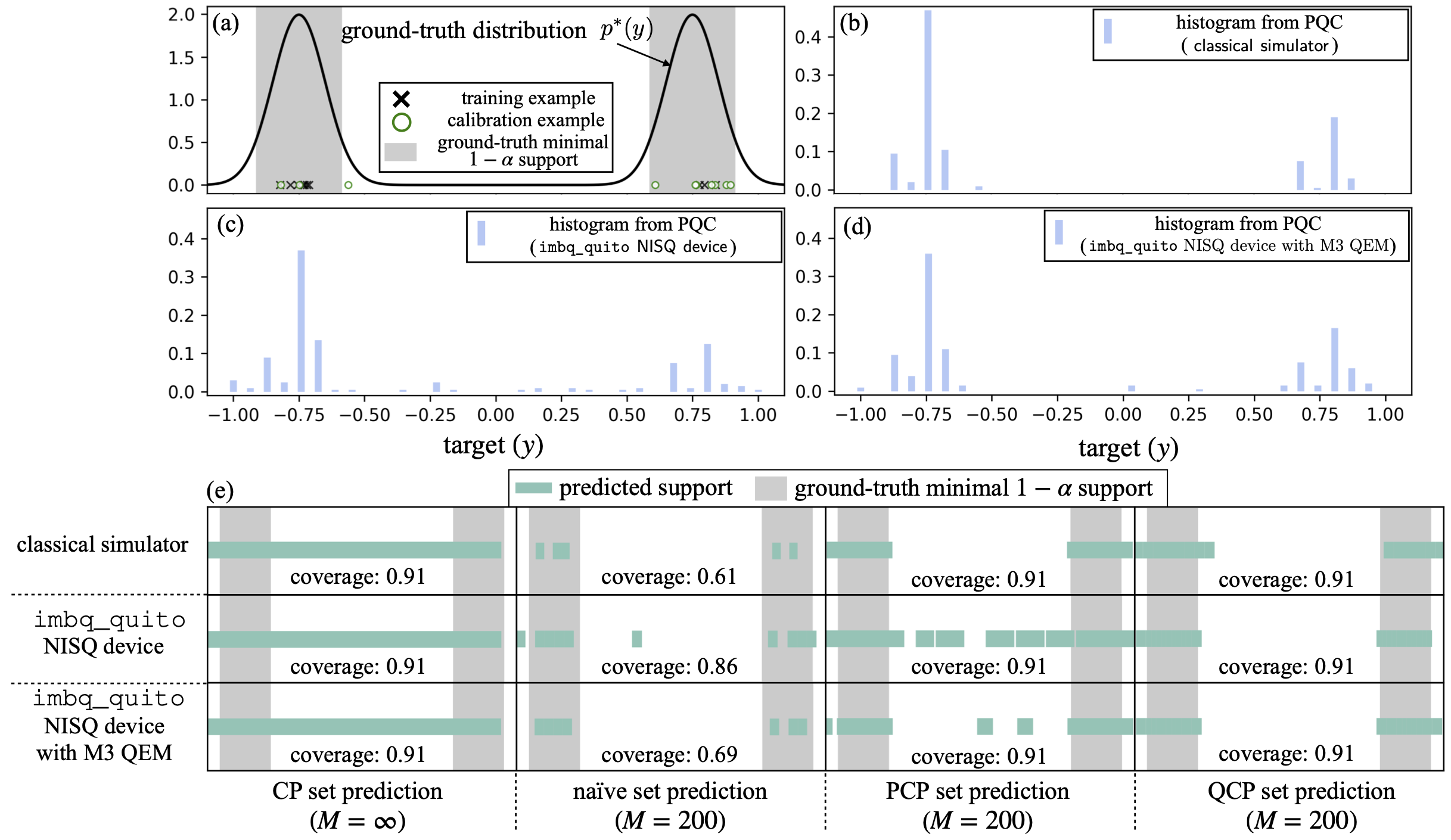}
  \caption{Illustration of the unsupervised learning problem of support estimation, in which the goal is to use the samples drawn from the PQC, as well as additional calibration data $\mathcal{D}^{\text{cal}}$, to estimate the support at coverage level $1-\alpha$ of the  ground-truth probability density $p^*(y)$. (a) Ground-truth (unknown) distribution $p^*(y)$ and smallest support set \eqref{eq:opt_set_predictor_de} with coverage $1-\alpha=0.9$ (gray area), along with $|\mathcal{D}^{\text{tr}}|=10$ training examples (crosses) and $|\mathcal{D}^{\text{cal}}|=10$ calibration examples (circles); Histogram of the $M=200$ samples $\hat{\mathbf{y}}^{1:M}$ obtained via a PQC trained using the data set shown in part (a) and implemented via: (b) a classical simulator; (c) the \texttt{imbq\_quito} NISQ device made available through IBM Quantum; and (d)  the \texttt{imbq\_quito} NISQ device with M3 quantum error mitigation (QEM) \cite{nation2021scalable}. (e) Predicted intervals produced by the na\"ive set predictor \eqref{eq:naive_set_predictor_de}, by {\color{black}conventional CP,  by PCP \cite{wang2022probabilistic}, and by QCP \eqref{eq:closed_form_genreal_scoring_pcp_and_qcp}} for one realization of the calibration data set. The estimated empirical coverage probability for the set predictors is also indicated.}

  \label{fig:density_learning_intro}
\end{figure*}

\subsection{Unsupervised Learning: Density Learning}
\label{subsec:unsup_learning}
Given a data set $\mathcal{D}=\{y[i]\}_{i=1}^{|\mathcal{D}|}$, with training samples $y[i]$ following an unknown population distribution $p^*(y)$ on the real line,  density estimation  aims at inferring some properties about the distribution $p^*(y)$ (see, e.g., \cite[Sec.~7.3]{simeone2022machine}).  We make the standard assumption that the data points in set $\mathcal{D}$ are drawn i.i.d. from the population distribution $p^*(y)$. Following \cite{lei2011efficient}, we specifically focus on the problem of reliably identifying a collection of intervals that are guaranteed to contain a test sample $y\sim p^*(y)$ with coverage probability at least $1-\alpha$, as illustrated in Fig.~\ref{fig:density_learning_intro}(a). We recall that, with CP, PCP, and QCP, the coverage probability is evaluated also with respect to the calibration data set. The ground-truth population distribution $p^*(y)$ is the mixture of Gaussians $p^*(y) = \frac{1}{2}\mathcal{N}(-0.75, 0.1^2)+ \frac{1}{2}\mathcal{N}(0.75, 0.1^2)$ as also shown in Fig.~\ref{fig:density_learning_intro}-(\emph{i}).

\subsubsection{Benchmarks}
\label{subsec:benchmarks_and_qcp_density_learning}
For  conventional CP and for QCP, set predictors are obtained from \eqref{eq:classical_CP_set_predictor} and \eqref{eq:closed_form_genreal_scoring_pcp_and_qcp}, respectively by disregarding the input $x$.  PCP set predictor can be obtained in the same way \cite{wang2022optimizing} (see also Appendix~\ref{sec:probabilistic_CP}). Accordingly, we denote the corresponding set predictors as $\Gamma(\mathbfcal{D}^\text{cal},\theta_{\mathcal{D}^\text{tr}})$ and  $\mathbf{\Gamma}_M(\mathbfcal{D}^\text{cal},\theta_{\mathcal{D}^\text{tr}})$, respectively. As benchmarks, we consider an ideal support estimator and a na\"ive support predictor based on a pre-trained PQC.

The \emph{smallest support set with coverage $1-\alpha$} is given by \begin{align}
    \label{eq:opt_set_predictor_de}
    {\Gamma}^\text{opt} = \arg\min_{\Gamma \in 2^{\mathcal{Y}}} |\Gamma| \text{ s.t. } \int_{y \in \Gamma} p^*(y)\mathrm{d}y \geq 1-\alpha.
\end{align}This is depicted in  Fig.~\ref{fig:density_learning_intro}-(\emph{i}) using  a gray shaded area. This set offers an idealized solution, and it cannot be evaluated in practice, since the ground-truth distribution  $p^*(y)$ is not known. 

Consider now a trained PQC that implements an implicit probabilistic model $p(y|\theta_{\mathcal{D}})$, where $\theta_{\mathcal{D}}$ is the parameter vector optimized based on data set $\mathcal{D}$. Replacing  the ground-truth distribution $p^*(y)$ with the trained probabilistic model $p(y|\theta_{\mathcal{D}})$ in \eqref{eq:opt_set_predictor_de}, a  \emph{na\"ive support predictor} can be obtained as 
\begin{align}
    \label{eq:naive_set_predictor_de}
    {\mathbf{\Gamma}}_M^\text{na\"ive}(\theta_{\mathcal{D}}) = \arg\min_{\Gamma \in 2^{\mathcal{Y}}} |\Gamma| \text{ s.t. } \int_{y \in \Gamma} p(y|\theta_{\mathcal{D}})\mathrm{d}y \geq 1-\alpha.
\end{align} To evaluate the set in (\ref{eq:naive_set_predictor_de}), samples  $\hat{\mathbf{y}}^{1:M}$ drawn from the trained model ${p}(y|\theta_{\mathcal{D}})$ are used  to approximate the integral in (\ref{eq:naive_set_predictor_de}) via Monte Carlo integration.

\begin{figure}
  \centering
  \includegraphics[width=12.5cm]{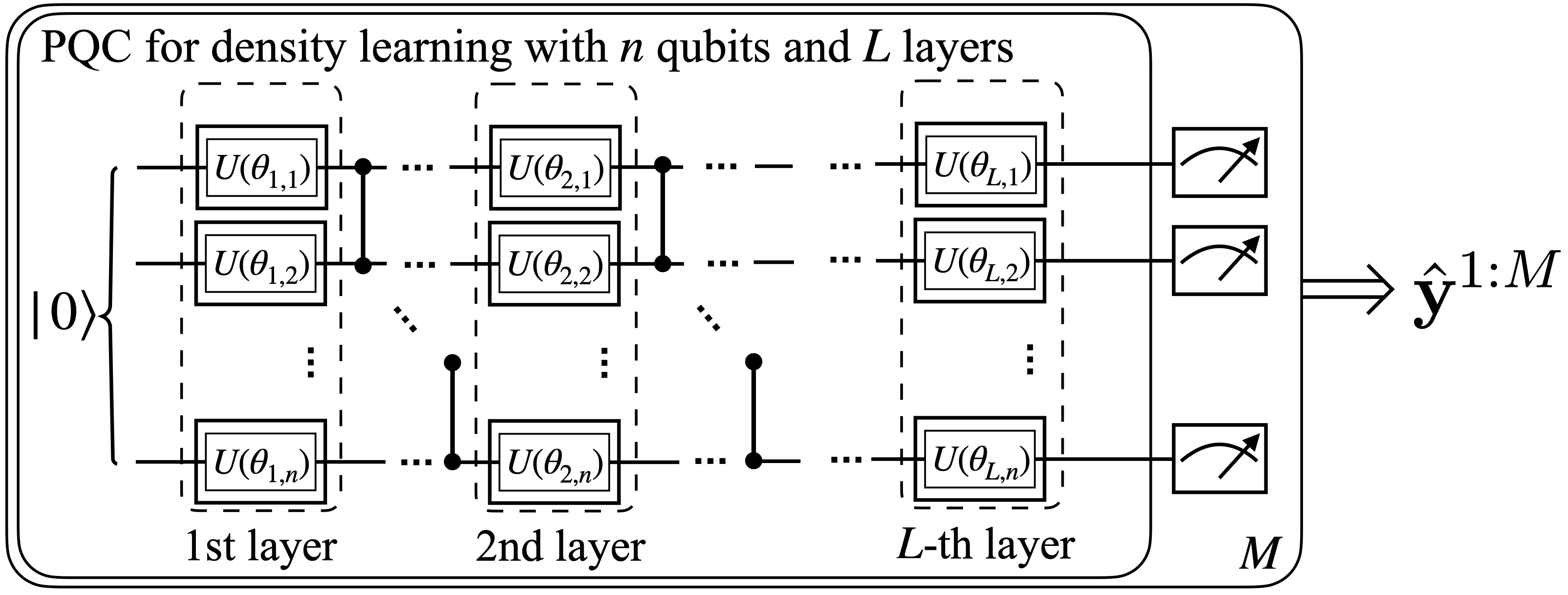}
  \caption{Illustration of the assumed hardware-efficient ansatz for density learning.} 
  \label{fig:pqc_ansatz_density_learning}
\end{figure}
\subsubsection{PQC Ansatz}
\label{subsubsec:pqc_ansatz_without_input}

We adopt the standard \emph{hardware-efficient ansatz}, which has also been previously used for the related unsupervised learning task of training generative quantum models \cite{du2022theory, gili2022evaluating,zoufal2019quantum}.  As shown in Fig.~\ref{fig:pqc_ansatz_density_learning}, the parameterized unitary matrix $U(\theta)$ operates on a register of $n$ qubits, which are initially in the fiducial state $|0\rangle$. The parameterized unitary matrix $U(\theta)$ consists of  $L$ layers applied sequentially as
\begin{align}
    \label{eq:L_layer_unitary_PQC_without_input}
    U(\theta) = U_L(\theta)\cdot U_{L-1}(\theta)\cdots U_1(\theta),
\end{align}
with  $U_l(\theta)$  being the unitary matrix corresponding to the $l$-th layer.

Following the hardware-efficient ansatz, the unitary $U_l(\theta)$ can be written as (see, e.g., \cite[Sec. 6.4.2]{simeone2022introduction})\begin{align}
    \label{eq:parameterized_unitary_for_density_learning}
    U_l(\theta) = U_{ent}(R(\theta_{l,1}^1, \theta_{l,1}^2, \theta_{l,1}^3 ) \otimes \cdots \otimes R(\theta_{l,n}^1, \theta_{l,n}^2, \theta_{l,n}^3 ) ),
\end{align}
where $U_{ent}$ is an \emph{entangling unitary}, while \begin{align}R(\theta^1, \theta^2, \theta^3)=R_Z(\theta^1)R_{Y}(\theta^2)R_{Z}(\theta^3)\end{align} represents a general parameterized single-qubit gate with Pauli rotations $R_{P}(\theta)=\exp(-i \frac{\theta}{2}P)$ for $P\in \{Y,Z\}$. By \eqref{eq:parameterized_unitary_for_density_learning},  the parameter vector $\theta$ is the collection of the angles $[\theta_{l,k}^1, \theta_{l,k}^2, \theta_{l,k}^3]$ for all $k$-th qubits and all $l$-th layers,  i.e., $\theta=\{\{ \theta_{l,k}^1,\theta_{l,k}^2,\theta_{l,k}^3 \}_{k=1}^{n}\}_{l=1}^L$.

\begin{figure*}
  \centering
  \includegraphics[width=0.98\textwidth]{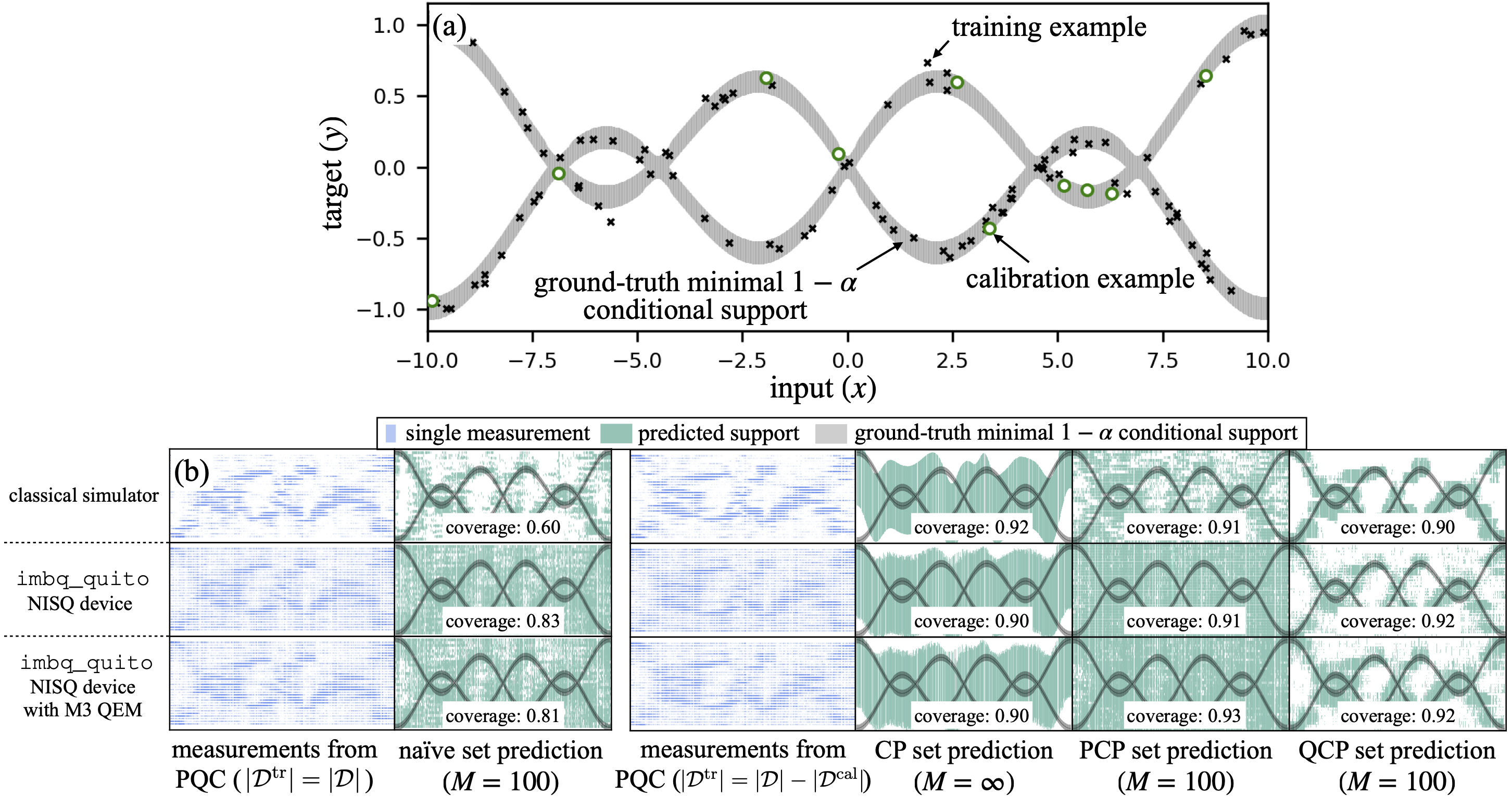}
  \caption{Illustration of the supervised learning problem of regression under study, in which the goal is to use the samples drawn from the PQC, as well as additional calibration data $\mathcal{D}^{\text{cal}}$, in order to produce a subset of predicted values at coverage probability $1-\alpha$ for the target $y$ given test input $x$. (a) Ground-truth (unknown) minimal $1-\alpha$ conditional support \eqref{eq:opt_set_predictor_de_cond} (gray area), along with $|\mathcal{D}^{\text{tr}}|=90$ training examples (crosses) and $|\mathcal{D}^{\text{cal}}|=10$ calibration examples (circles); (b) First and third columns: $M=100$ measurements obtained from a PQC  with trained model  based on the entire data set, $\theta_{\mathcal{D}}$ (first column) and based only on the data partition,  $\theta_{\mathcal{D}^\text{tr}}$  (third column), with the PQC being  implemented either using a classical simulator or the \texttt{imbq\_quito} NISQ device made available through IBM Quantum with or without M3 QEM \cite{nation2021scalable}. (b) Second and fourth columns: Predicted intervals produced by the na\"ive set predictor \eqref{eq:naive_set_predictor_de_cond}, {\color{black} by conventional CP, by PCP \cite{wang2022probabilistic},  and by QCP \eqref{eq:closed_form_genreal_scoring_pcp_and_qcp}} for one realization of the calibration data set. The estimated  empirical coverage probability  via na\"ive set prediction and QCP set prediction  is also indicated.} 
\label{fig:regression_learning_full}
\end{figure*}

Given the PQC described above, for a fixed model parameter vector $\theta$, $M$ predictions $\hat{\mathbf{y}}^{1:M}$ are obtained by measuring a quantum observable $O$ under the quantum state
\begin{align}\rho(\theta)=U(\theta)|0\rangle \langle 0| U(\theta)^\dagger\label{eq:quantum_embedding_without_input}\end{align}  produced as the output of the PQC. Measurements are implemented in the computational basis, i.e., with projection matrices $\Pi_j = |j\rangle \langle j|$, denoting as $|j\rangle$ the one-hot amplitude vector with $1$ at position $j$ for $j=1,...,N$. We choose the possible values  $\{o_j\}_{j=1}^{N}$ to be equally spaced in the interval $[-1,1]$. Accordingly, we set 
the $j$-th eigenvalue of the observable $O$ as $o_{j} = -1 + 2(j-1)/(N-1)$.

We adopt a PQC with $n=5$ qubits, and the entangling unitary consists of  controlled-Z (CZ) gates connecting successive qubits, i.e., $U_{ent} = \prod_{k=1}^{n-1}C^{Z}_{k,k+1}$, where $C^{Z}_{k,k+1}$ is the CZ gate between the $k$-th qubit and $(k+1)$-th qubit. We set number of layers to $L=2$.  We implemented the described PQC on (\emph{i}) a classical simulator; (\emph{ii}) on the \texttt{imbq\_quito} NISQ device made available through IBM Quantum; and (\emph{ii}) on the \texttt{imbq\_quito} NISQ device with M3 quantum error mitigation \cite{nation2021scalable}.


\subsection{Supervised Learning: Regression}
\label{subsec:supervised_learning}
In the regression problem under study, we aim  at predicting a scalar continuous-valued target $y$ given scalar input $x$, with input and output following the unknown ground-truth joint distribution $p^*(x,y)$. We assume access to a data set $\mathcal{D}=\{z[i]=(x[i],y[i])\}_{i=1}^{|\mathcal{D}|}$, with training samples $z[i]\sim p^*(x,y)$ drawn in an i.i.d. manner.

We specifically consider the mixture of two sinusoidal functions for the ground-truth distribution, as given by \begin{align}
    \label{eq:regression_target}
    p^*(y|x) = \frac{1}{2}\mathcal{N}(\mu(x), 0.05^2) + \frac{1}{2}\mathcal{N}(-\mu(x), 0.05^2) \text{ and } p^*(x) = \mathcal{U}(-10,10),
\end{align}
where we have denoted as  $\mathcal{U}(a,b)$ the uniform distribution in the interval $[a,b]$, and we set $\mu(x) = 0.5 \sin(0.8x) + 0.05x$. We note that bimodal distributions  such as (\ref{eq:regression_target}) cannot be represented by standard classical machine learning models used for regression that assume a Gaussian conditional distribution $p(y|x,\theta)$ with a parameterized mean \cite{masegosa2020learning, morningstar2022pacm, zecchin2022robust}. 

\subsubsection{Benchmarks}
Benchmarks for regression generalize the approaches described in Sec.~\ref{subsec:benchmarks_and_qcp_density_learning} for unsupervised learning. Accordingly, the \emph{smallest prediction interval with coverage $1-\alpha$ for input $x$} is given by \begin{align}
    {\Gamma}^\text{opt}(x) = &\arg\min_{\Gamma \in 2^{\mathcal{Y}}} |\Gamma| \text{ s.t. } \int_{y \in \Gamma} p^*(y|x)\mathrm{d}y \geq 1-\alpha.\label{eq:opt_set_predictor_de_cond}
\end{align} This is illustrated in  Fig.~\ref{fig:regression_learning_full}(a) as a gray shaded area. Since this ideal interval cannot be computed, a \emph{na\"ive} \emph{set predictor} alternative is to replace the ground-truth distribution $p^*(y|x)$ with a model $p(y|x,\theta_{\mathcal{D}})$ trained using the available data $\mathcal{D}$. This yields
\begin{align}
    {\mathbf{\Gamma}}_M^\text{na\"ive}(x|\theta_{\mathcal{D}}) = &\arg\min_{\Gamma \in 2^{\mathcal{Y}}} |\Gamma| \text{ s.t. } \int_{y \in \Gamma} p(y|x,\theta_{\mathcal{D}})\mathrm{d}y \geq 1-\alpha,\label{eq:naive_set_predictor_de_cond}
 \end{align} where the integral is approximated via Monte Carlo integration using samples $\hat{\mathbf{y}}^{1:M}$ drawn from the model $p(y|x,\theta_{\mathcal{D}})$.

\subsubsection{PQC Ansatz}
\label{subsubsec:pqc_ansatz_with_input}
As for the problem of density estimation, we adopt the \emph{hardware-efficient ansatz}, with the key difference that the parameterized unitary matrix $U(x|\theta)$ also encodes the input $x$. In this regard, we  follow the \emph{data reuploading} strategy introduced in \cite{perez2020data}, which  encodes the input $x$  across all layers of the PQC architecture. This has been shown to offer significant advantages in terms of model expressivity \cite{perez2020data}. We explore three different data encoding solutions in order of complexity: (\emph{i}) conventional \emph{fixed angle encoding} \cite{schuld2021effect}; (\emph{ii})  \emph{learned linear angle encoding} as studied in \cite{perez2020data}; and (\emph{iii})  \emph{learned non-linear angle encoding} which is related to the design recently studied in \cite{miao2023neural}.  

\begin{figure}
  \centering
  \includegraphics[width=13.5cm]{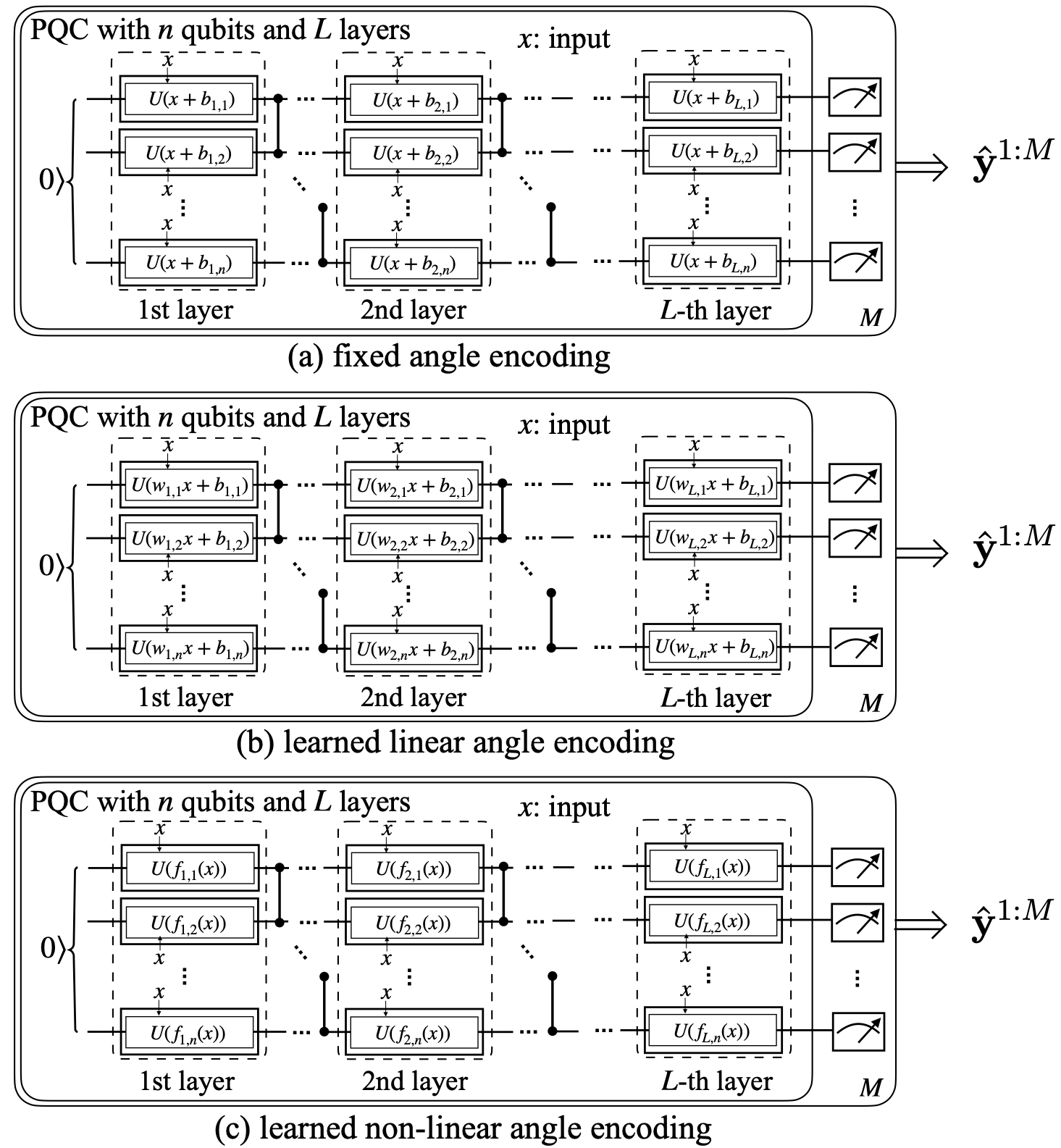}
  \caption{Illustration of the considered angle encoding strategies for the hardware-efficient ansatz adopted for the regression problem: (a) fixed angle encoding \cite{schuld2021effect} has no trainable parameters for the data encoding block; (b) learned linear angle encoding assumes trainable linear weight parameters that are multiplied with the input $x$ \cite{perez2020data}; (c) learned \emph{non-linear} angle encoding strategy maps the input $x$ to the gate-controlling angles via a classical neural network.} 
\label{fig:pqc_angle_encodings}
\end{figure}

The parameterized unitary matrix $U(x|\theta)$ with $L$ layers is defined as (cf. \eqref{eq:L_layer_unitary_PQC_without_input})
\begin{align}
    \label{eq:L_layer_unitary_PQC}
    U(x|\theta) = U_L(x|\theta)\cdot U_{L-1}(x|\theta)\cdots U_1(x|\theta),
\end{align}
with the unitary matrix $U_l(x|\theta)$ for each $l$-th layer modelled as (cf. \eqref{eq:parameterized_unitary_for_density_learning})
\begin{align}
    &U_l(x|\theta) = U_{ent}(R(\phi_{l,1}^1, \phi_{l,1}^2, \phi_{l,1}^3 ) \otimes \cdots \otimes R(\phi_{l,n}^1, \phi_{l,n}^2, \phi_{l,n}^3 ) ),
\end{align}
with $U_{ent}$ and $R(\phi^1, \phi^2, \phi^3)$ defined as in Sec.~\ref{subsubsec:pqc_ansatz_without_input}. Unlike Sec.~\ref{subsubsec:pqc_ansatz_without_input}, however, the three angles $[\phi_{l,k}^1, \phi_{l,k}^2, \phi_{l,k}^3]$ for $k$-th qubit at the $l$-th layer may encode information about the input $x$. Note that we did not indicate this dependence explicitly in the notation in order to avoid clutter. We set $L=5$. 

Angle encoding can be generally expressed with a parameterized (classical) function $f(\cdot|\theta_{l,k}):\mathcal{X}\rightarrow [0,2\pi]\times[0,2\pi]\times[0,2\pi]$ that takes as input $x$ and outputs three angles. The function is parameterized by a vector $\theta_{l,k}$, and is generally written as 
\begin{align}
    \label{eq:general_angle_encoding}
    f(x|\theta_{l,k}) = [\phi_{l,k}^1, \phi_{l,k}^2, \phi_{l,k}^3].
\end{align}
Accordingly, the parameter vector $\theta$ contains $n$ parameter vectors $\theta_{l,k}$ for each $l$-th layer, i.e., we have  
$\theta = \{\{ \theta_{l,k} \}_{k=1}^{n}\}_{l=1}^L$. As  mentioned, we consider 
three types of encoding functions, which are listed in order of increased generality.

Denoting as  $f^i(x|\theta_{l,k})$ the $i$-th output of the function $f(x|\theta_{l,k})$ in (\ref{eq:general_angle_encoding}),   \emph{conventional angle encoding} sets \cite{perez2020data}
\begin{align}
    \phi_{l,k}^i = f^i(x|\theta_{l,k}) =x + b_{l,k}^i,
\end{align}
where the model parameter vector encompasses the scalar biases $ \theta_{l,k} = \{ b_{l,k}^i  \}_{i=1}^3$.

\emph{Learning linear angle encoding} sets the function $f(x|\theta_{l,k})$ as \cite{perez2020data} 
\begin{align}
    \phi_{l,k}^i = f^i(x|\theta_{l,k}) = w_{l,k}^i x + b_{l,k}^i,
\end{align}
with parameters $\theta_{l,k} = \{w_{l,k}^i, b_{l,k}^i  \}_{i=1}^3$ containing scalar weights and scalar biases. 

Finally, \emph{learned non-linear angle encoding}   implements function $f(x|\theta_{l,k})$ as a multi-layer neural network. In this case, the parameter vector $\theta_{l,k}$ includes the synaptic weights and biases of the neural network. 
  In particular, we consider a fully connected network with input $x$ followed by two hidden layers with $10$ neurons having exponential linear unit (ELU) activation in each layer, that outputs the $3nL$ angles $\{ \{ \phi_{l,k}^1, \phi_{l,k}^2, \phi_{l,k}^3 \}_{k=1}^n \}_{l=1}^L$ for the PQC, i.e., $[ \phi_{1,1}^1,\phi_{1,1}^2,\phi_{1,1}^3,\phi_{1,2}^1,...,\phi_{1,n}^3,\phi_{2,1}^1,...,\phi_{L,n}^3]$.     We note that non-parametric non-linear angle encoding \cite[Sec.~6.8.1]{simeone2022introduction} and exponential angle encoding \cite{kordzanganeh2022exponentially} are also non-linear forms of angle encoding, which can be approximated via a suitable choice of the neural network function $f(x|\theta_{l,k})$ \cite{hornik1991approximation}.

\subsection{Training}
\label{subsec:pqc_training}
In this subsection, we elaborate on the implementation of training for the PQCs described in the previous subsections. We first note that the na\"ive predictors \eqref{eq:naive_set_predictor_de} and \eqref{eq:naive_set_predictor_de_cond} use the entire data set $\mathcal{D}$ for training, while QCP splits data set $\mathcal{D}$ into a training set $\mathcal{D}^{\text{tr}}$ and a calibration set $\mathcal{D}^{\text{cal}}$. We adopt an equal split, with $|\mathcal{D}^\text{tr}|=|\mathcal{D}|/2$ training examples and  $|\mathcal{D}^\text{cal}|=|\mathcal{D}|/2$ calibration examples, unless specified otherwise. For the rest of this subsection, we write the data used for training as $\mathcal{D}^\text{tr}$, with the understanding that for na\"ive schemes, this set is intended to be the overall set $\mathcal{D}$. We focus on the more general supervised learning setting, for which each training data example is a pair of input $x^\text{tr}$ and output $y^\text{tr}$, since for unsupervised learning it is sufficient to remove the input $x^\text{tr}$.

Given a training data set $\mathcal{D}^\text{tr}$, the PQC model parameter vector $\theta_{\mathcal{D}^\text{tr}}$ is trained using Adam \cite{kingma2014adam} based on gradients evaluated via automatic differentiation in PyTorch \cite{paszke2019pytorch} on a classical simulator. {\color{black} When training on a quantum hardware device, the parameter-shift rule \cite{schuld2019evaluating} is  utilized to update the parameters of the PQC.} In the rest of this subsection, we detail the loss functions used for the training of deterministic predictors \eqref{eq:deterministic_output_pqc} and implicit probabilistic models  \eqref{eq:single_measurement_distribution}.

\subsubsection{Training PQCs as Deterministic Predictors}
When the PQC is used as a deterministic predictor, the prediction is given by  $\hat{y}=\langle O \rangle_{\rho(x|\theta)}$ as in \eqref{eq:deterministic_output_pqc}. For this case,  we adopt the standard \emph{quadratic loss}, yielding the empirical risk minimization problem
\begin{align}
    \label{eq:quadratic_loss}
    \theta_{\mathcal{D}^\text{tr}} = \arg\min_\theta \sum_{i=1}^{|\mathcal{D}^\text{tr}|} (y^\text{tr}[i]-\langle O  \rangle_{\rho(x^\text{tr}[i]|\theta)})^2.
\end{align}

\subsubsection{Training PQCs as Implicit Probabilistic Models}
When the PQC is used as an implicit probabilistic predictor, with distribution $p(\hat{\mathbf{y}}=o_j|x,\theta)=\text{Tr}(\Pi_j\rho(x|\theta))$ as in \eqref{eq:single_measurement_distribution}, we adopt the cross-entropy loss. To this end, we first quantize the true label $y$ of an example $(x,y)$ so that the quantization levels match the possible values $\{o_j\}_{j=1}^{N'}$ obtained from the measurement of the given observable $O=\sum_{j=1}^{N'} o_j \Pi_j$. Then, the empirical risk minimization problem tackled during training is written as 
\begin{align}
    \label{eq:CE_loss}
    \theta_{\mathcal{D}^\text{tr}} = \arg\min_\theta \sum_{i=1}^{|\mathcal{D}^\text{tr}|}(-\log \text{Tr}( \Pi_{j_i} \rho(x^\text{tr}[i]|\theta) )),
\end{align}
where $j_i\in \{1,....,N'\}$ is the index of the eigenvalue $o_{j_i}$ that is closest (in Euclidean distance) to the $i$-th  training output  $y^\text{tr}[i]$.

\section{Experimental Results}
\label{sec:experimental_results}
This section describes experimental results for both the unsupervised learning and supervised learning settings introduced in the previous section.

\subsection{Unsupervised Learning: Density Learning}
\label{subsec:unsup_density_learning}

As explained in the previous section, we compare the performance of {\color{black}conventional CP} (Sec.~\ref{sec:CCP}),   which requires an arbitrarily large number of shots, with the na\"ive predictor \eqref{eq:naive_set_predictor_de}, {\color{black} with the PCP predictor \cite{wang2022probabilistic},} and with the proposed QCP scheme (Sec. \ref{sec:qcp}), in terms of their coverage probability and of the average size of the predicted set.  These metrics are evaluated using $K=1000$ experiments as the averages $  \hat{\mathbf{P}}^\text{supp} = \frac{1}{K}\sum_{k=1}^K \int_{y \in \Gamma(\mathbfcal{D}_k^\text{cal},\theta_{\mathcal{D}^\text{tr}})} p^*(y)\mathrm{d}y$ and $\frac{1}{K}\sum_{k=1}^K |\Gamma(\mathbfcal{D}_k^\text{cal},\theta_{\mathcal{D}^\text{tr}})|$ for {\color{black}CP}; and as the averages $\hat{\mathbf{P}}_M^\text{supp} = \frac{1}{K}\sum_{k=1}^K \int_{y \in \mathbf{\Gamma}_M(\mathbfcal{D}_k^\text{cal},\theta_{\mathcal{D}^\text{tr}})} p^*(y)\mathrm{d}y$ and $\frac{1}{K}\sum_{k=1}^K | \mathbf{\Gamma}_M(\mathbfcal{D}_k^\text{cal},\theta_{\mathcal{D}^\text{tr}})|$ for {\color{black}PCP and} QCP. {\color{black} We use subscript $k$ to specify that the corresponding data is drawn for the $k$-th experiment (see Appendix~\ref{sec:finite_number_guarantee})}.

\begin{figure}
  \centering
  \includegraphics[width=0.9\textwidth]{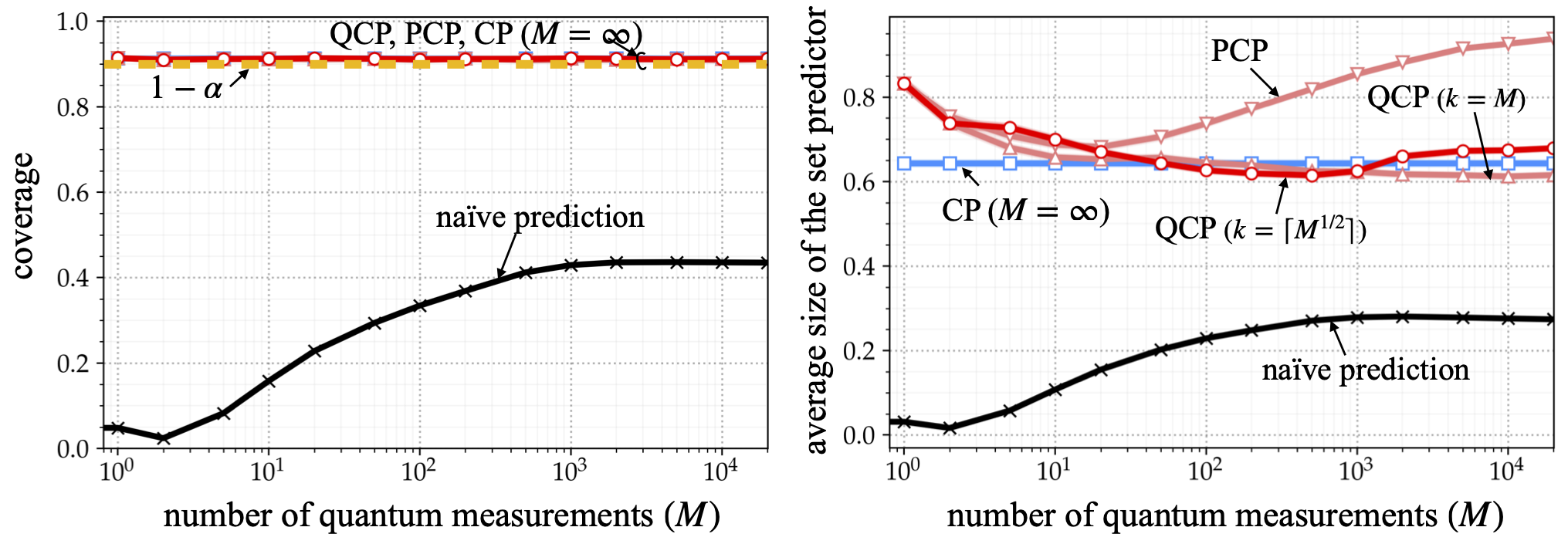}
  \caption{Density estimation with a weakly bimodal ground-truth Gaussian distribution: Coverage and average size of the set predictors as a function of number $M$ of quantum measurements given $|\mathcal{D}|=20$ available training samples. Training and testing are done on a classical simulator.   The shaded areas correspond to confidence intervals covering 95\% of the realized values.} 
\label{fig:density_weak}
\end{figure}

\subsubsection{A Visual Comparison}

Fig. \ref{fig:density_learning_intro} presents a visual comparison of the predicted sets produced by the different techniques with  $|\mathcal{D}|=20$ examples, assuming $M=200$ measurements from the PQC. Specifically, panel (e) in the figure displays examples of predicted sets obtained with the training data and with the  realizations of the calibration data set shown in panel (a) for $1-\alpha=0.9$. {\color{black} Conventional CP} is observed to fail when faced with a bimodal ground-truth distribution (shown in panel (a)). This is because it makes the underlying design assumption of a single-modal likelihood, as implied by the conventional choice of the quadratic loss \eqref{eq:quadratic_loss}. As a result, {\color{black}CP} produces inefficient set predictors. In contrast, a na\"ive set predictor tends to underestimate the coverage, since the samples produced by the trained PQC are excessively concentrated around the modes of the two Gaussians as seen in panels (b)-(d). {\color{black} While both PCP \cite{wang2022probabilistic} and QCP guarantee the validity condition \eqref{eq:validity_qcp}, it is observed that PCP} does not provide good performance in terms of efficiency of the set predictor in the presence of quantum hardware noise (see second row and third column of panel (e)), while  QCP ensures efficient set predictors.  This result suggests that increasing the value of $k$ enhances the robustness of the $k$-NN density estimator in \eqref{eq:pcp_scoring_from_generalized_with_k}, with $k=1$ yielding an excessive sensitivity to randomness due to shot and quantum noise.


\subsubsection{Quantitative Results with a Noiseless Simulator}

We now provide numerical evaluations of the coverage probability and of the average size of the set predictor as a function of the number of shots $M$ produced by the PQC. As in the example above, we also show the performance of {\color{black} conventional CP}, which assumes $M=\infty$. We assume a data set of  $|\mathcal{D}|=20$ data points, a target miscoverage level $\alpha=0.1$.

We start by assuming a \emph{weakly} bi-modal ground-truth distribution $p^*(y)$,  obtained as the mixture of Gaussians with similar means given their standard deviation. We specifically set $p^*(y) = \frac{1}{2}\mathcal{N}(-0.15, 0.1^2)+ \frac{1}{2}\mathcal{N}(0.15, 0.1^2)$. Fig.~\ref{fig:density_weak} shows 
coverage and average size of the set predictor as a function of number of quantum measurements $M$. While CP-based approaches provably provide well-calibrated support estimators, the na\"ive predictor \eqref{eq:naive_set_predictor_de} fails to cover the $1-\alpha$ of the mass of the ground-truth distribution $p^*(y)$. 

{\color{black} In terms of informativeness,} since the weakly bimodal distribution at hand can be well approximated by a unimodal Gaussian distribution, {\color{black} conventional CP} performs well. While using a finite number of samples, QCP performs comparably to CP as long as $M$ is not too small {\color{black}($M \geq 20$)}. {\color{black} In contrast, the set predictor produced by PCP tends to become uninformative with increased number $M$ of measurements ($M \geq 20$).} This is because, as illustrated in Fig.~\ref{fig:density_learning_intro}(b),  the trained PQC produces samples also in low-density regions of the ground-truth distribution $p^*(y)$, and the probability of obtaining one or more of such samples increases with the number of samples, $M$. {\color{black} This behavior becomes more evident by comparing QCP equipped with general scoring function  \eqref{eq:pcp_scoring_from_generalized_with_k} that uses $k=M$ -- labelled as ``QCP ($k=M$)'' in the figure -- to PCP.}

We now consider a ground-truth Gaussian distribution that presents a \emph{stronger} bimodality. To this end, we set $p^*(y) = \frac{1}{2}\mathcal{N}(-0.75, 0.1^2)+ \frac{1}{2}\mathcal{N}(0.75, 0.1^2)$, and the results are illustrated  in  Fig.~\ref{fig:density_strong}. While the conclusions in terms of coverage remain the same as in the previous example, comparisons in terms of average predicted set size reveal remarkably  different behaviors of the considered predictors. In particular, despite its requirements in terms of number of shots, {\color{black} conventional CP} significantly underperforms {\color{black} both PCP and QCP due to the underlying assumption of unimodality of the likelihood function, yielding a three times larger predicted set. Note that QCP equipped with general scoring function  \eqref{eq:pcp_scoring_from_generalized_with_k} that uses $k=M$ fails to perform well unlike the previous weakly bi-modal case, demonstrating the importance of our choice $k=\lceil M^{1/2} \rceil$ for QCP in \eqref{eq:closed_form_genreal_scoring_pcp_and_qcp}. Additional experimental results that study the impact of $k$ in the scoring function \eqref{eq:pcp_scoring_from_generalized_with_k} can be found in Appendix~\ref{appendix:additional_experiments}.}


\begin{figure}
  \centering
  \includegraphics[width=0.9\textwidth]{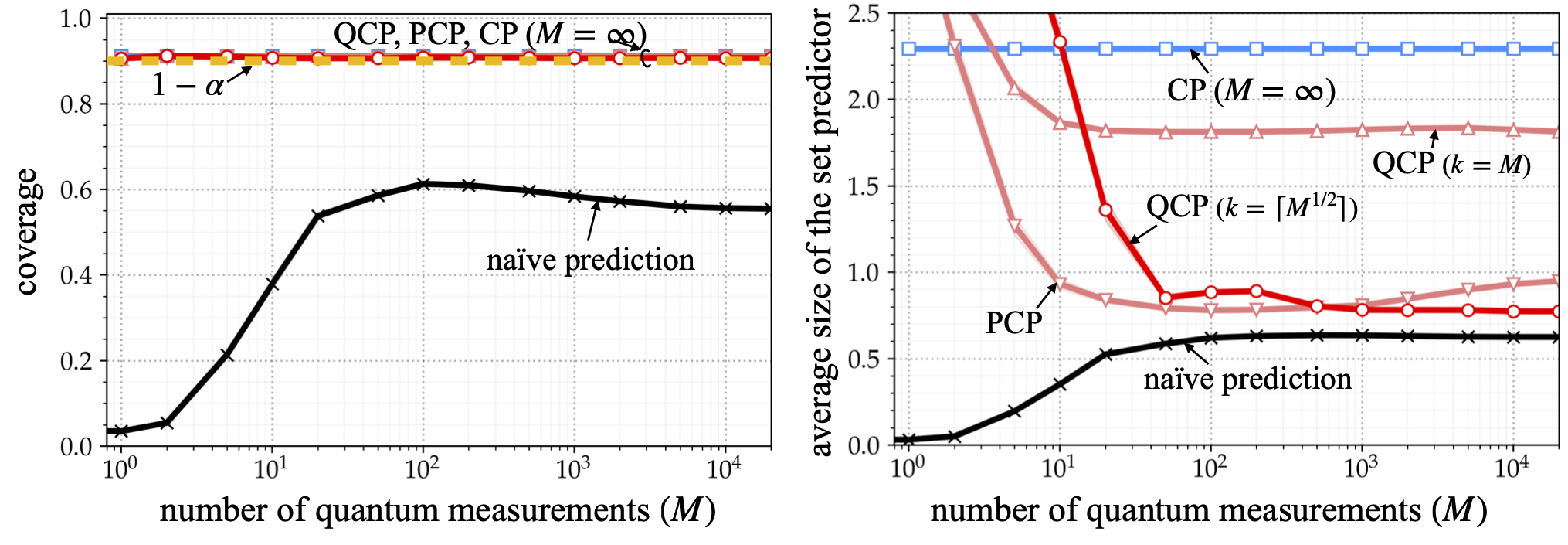}
  \caption{Density estimation with a strongly bimodal ground-truth Gaussian distribution: Coverage and average size of the set predictors as a function of number $M$ of quantum measurements given $|\mathcal{D}|=20$ available training samples. Training and testing are done on a classical simulator.   The shaded areas correspond to confidence intervals covering 95\% of the realized values.
  } 
\label{fig:density_strong}
\end{figure}

\begin{figure}
  \centering
  \includegraphics[width=0.9\textwidth]{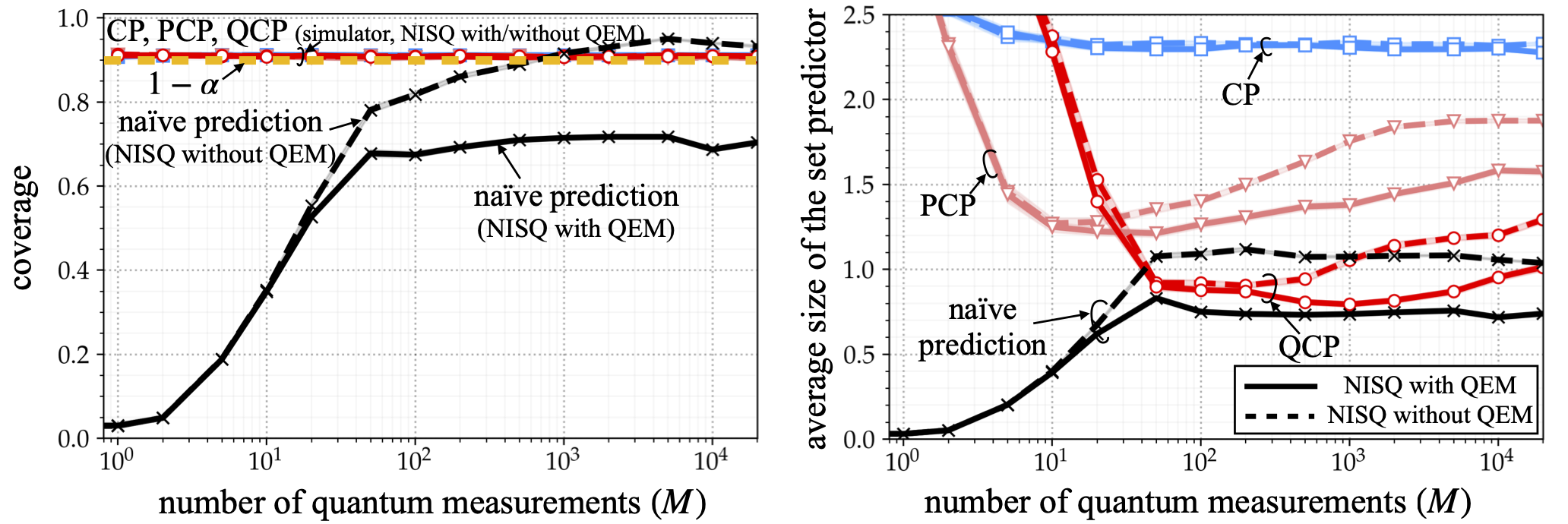}
  \caption{Density estimation with a strongly bimodal ground-truth Gaussian distribution: Coverage and average size of the set predictors as a function of number $M$ of quantum measurements given $|\mathcal{D}|=20$ available training samples.  Training is done on a classical simulator,  while testing is implemented on  \texttt{imbq\_quito} NISQ device, with or without M3 QEM \cite{nation2021scalable}. The shaded areas correspond to confidence intervals covering 95\% of the realized values.} 
\label{fig:density_ibmq}
\end{figure}

\subsubsection{Quantitative Results with Quantum Hardware Noise}
\label{subsubsec:experiment_ibmq_density}

Finally, we provide a quantitative performance comparison based on results obtained on the \texttt{imbq\_quito} NISQ device with or without  M3 quantum error mitigation (QEM) \cite{nation2021scalable}. As discussed in the previous section, for this case, training was done using a classical simulator, and  the quantum computer was used solely for testing, i.e., to produce the density support estimate. {\color{black} Additional experimental results with PQC trained on \texttt{imbq\_quito} NISQ device can be found in Appendix~\ref{appendix:additional_experiments}. } For reference, for this experiment, we show the performance of {\color{black} conventional CP} by using the empirical average of the measurements, i.e., $\frac{1}{M}\sum_{m=1}^M \hat{\mathbf{y}}^{m}$,  in lieu of the true expectation $\hat{y}=\langle O \rangle_{\rho(x|\theta_{\mathcal{D}^\text{tr}})}$. This allows to report results for CP that depend on the number of shots, $M$.

Fig.~\ref{fig:density_ibmq} validates the conclusion reported in Sec.~\ref{sec:QCP} that {\color{black} both PCP and  QCP are} provably well calibrated despite the presence of  quantum hardware noise. In contrast, the na\"ive predictor is not well calibrated, even in the absence of quantum hardware noise  (see  Fig.~\ref{fig:density_strong}). In this regard, the na\"ive predictor is observed to benefit from quantum hardware noise, achieving validity with a sufficiently large number of measurements $M\geq 10^3$. This can  be understoood as a  consequence of the fact that quantum hardware noise tends to produce samples that cover a wider range of output values \cite{nation2021scalable}, making it  easier to cover a fraction $1-\alpha$ of the mass of the original density $p^*(y)$ using the predictor \eqref{eq:naive_set_predictor_de}.

In terms of average size of the prediction set, {\color{black}PCP} is seen to be particularly sensitive to quantum hardware noise, as also anticipated with the examples in Fig.~\ref{fig:density_learning_intro}. In this case, increasing the number of shots can be deleterious as more noise is injected in the estimate of the predicted set. In contrast, {\color{black} QCP} set predictor provides a significantly more robust performance to quantum hardware noise,  even in the presence of QEM. {\color{black}Finally,} QEM is observed to improve the informativeness of both the {\color{black} PCP and QCP} set predictors by enhancing the quality of the underlying probabilistic predictor via the mitigation of quantum noise (see, e.g., \cite{dhillon2023expected}).

\subsection{Supervised Learning: Regression}
\label{subsec:supervised_learning_regression}
In this subsection, we turn to the supervised learning problem with ground-truth distribution \eqref{eq:regression_target}. As done for unsupervised learning, we compare QCP (Sec.~\ref{sec:QCP})  against {\color{black} conventional CP} -- in the ideal case of an infinite shots (Sec.~\ref{sec:CCP}) -- as well as against the na\"ive predictor \eqref{eq:naive_set_predictor_de_cond}  {\color{black} and the PCP predictor \cite{wang2022probabilistic}} by evaluating coverage probability and average size of the set predictors using $K=500$ experiments as discussed {\color{black} after Theorem~\ref{ther:qcp} and detailed in  Appendix \ref{sec:finite_number_guarantee}.}

\begin{figure}
  \centering
  \includegraphics[width=0.9\textwidth]{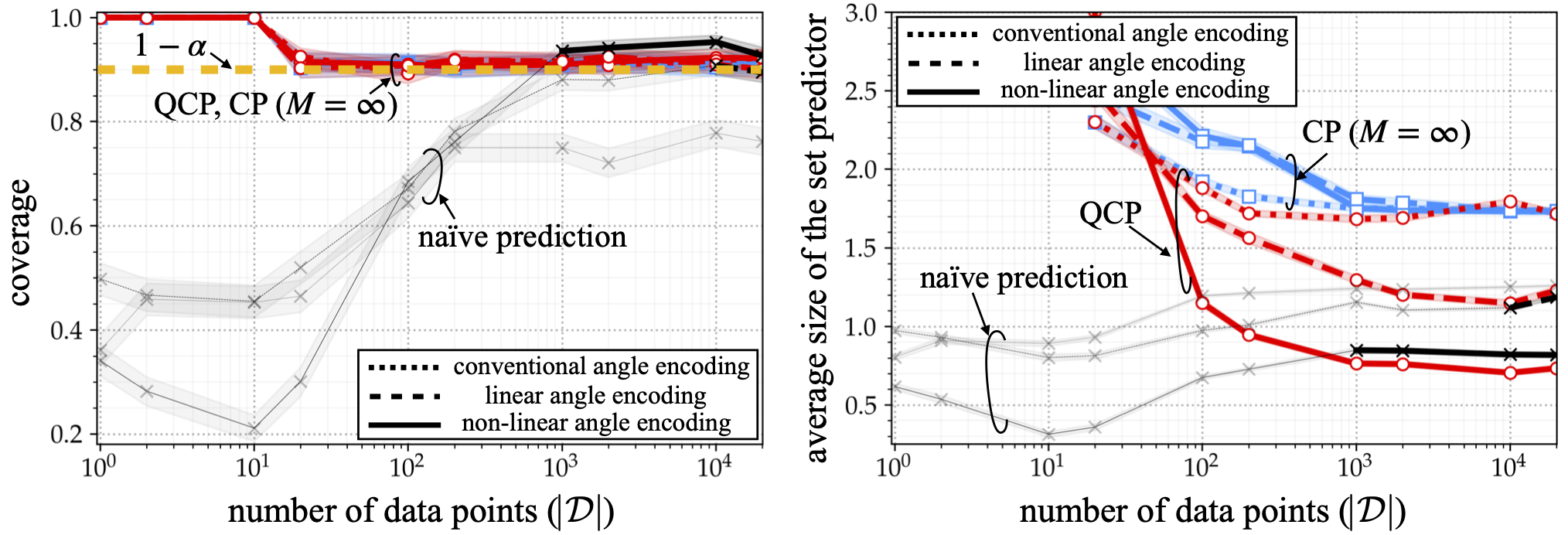}
  \caption{Regression for mixture of two sinusoidal functions \eqref{eq:regression_target}: Coverage and average size of the set predictors as a function of the number $|\mathcal{D}|$ of available data points given $M=1000$ quantum measurements. Training and testing are done on a classical simulator. The shaded areas correspond to confidence intervals covering 95\% of the realized values. The results are averaged over $K=1000$ experiments, and the transparent lines are used for set predictors that do not meet the coverage level $1-\alpha=0.9$. } 
\label{fig:regression_per_example_many_shots}
\end{figure}
\begin{figure}
  \centering
  \includegraphics[width=0.9\textwidth]{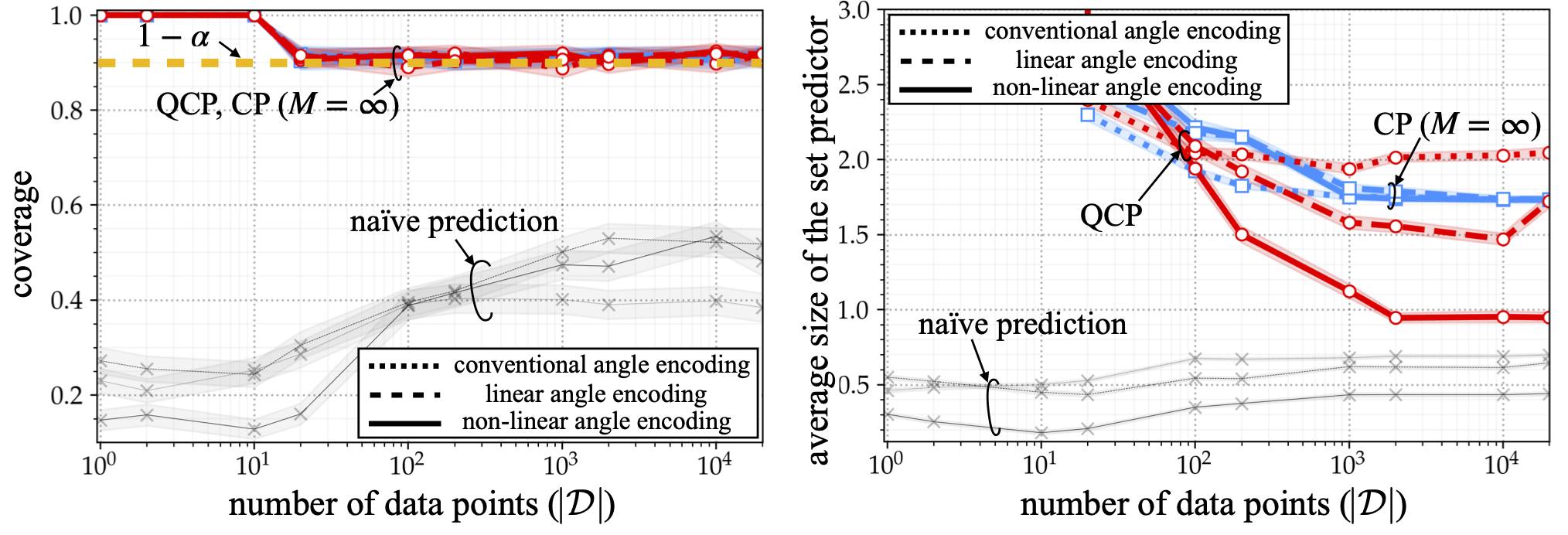}
  \caption{Regression for mixture of two sinusoidal functions \eqref{eq:regression_target}: Coverage and average size of the set predictors as a function of the number $|\mathcal{D}|$ of available training samples given $M=20$ quantum measurements. Training and testing are done on a classical simulator. The shaded areas correspond to confidence intervals covering 95\% of the realized values.  The results are averaged over $K=1000$ experiments, and the transparent lines are used for set predictors that do not meet the coverage level $1-\alpha=0.9$.} 
\label{fig:regression_per_example_few_shots}
\end{figure}

\subsubsection{A Visual Comparison}
Fig.~\ref{fig:regression_learning_intro} presents a visual comparison of the different set predictors assuming a data set of $|\mathcal{D}|=100$ data points with $M=100$ shots.  We adopt learned non-linear angle encoding as described in Sec.~\ref{subsubsec:pqc_ansatz_with_input} with neural network composed of two hidden layers, each with $10$ neurons having ELU activations. Panel (b) in the figure depicts examples of predicted sets obtained with the training data and with the  realizations of the calibration data set shown in panel (a) for $1-\alpha=0.9$. The na\"ive set predictor is observed to  underestimate the support of the distribution. This can be interpreted in light of the typical overconfidence of trained predictors \cite{kobayashi2022overfitting}, which causes  the na\"ive predictor \eqref{eq:naive_set_predictor_de_cond} to concentrate on a smaller subset of values as compared to the desired support, at level $1-\alpha=0.9$, of the ground-truth distribution. 
In contrast, {\color{black}both PCP and QCP provably satisfy} the predetermined coverage level of $1-\alpha=0.9$, {\color{black} while PCP yields uninformative set predictor, especially in the presence of quantum hardware noise, even with QEM \cite{nation2021scalable}. In contrast, QCP can successfully suppress quantum hardware noise to yield more informative set as compared to PCP,  as seen in the last column of panel (b).} 

\begin{figure}
  \centering
  \includegraphics[width=0.9\textwidth]{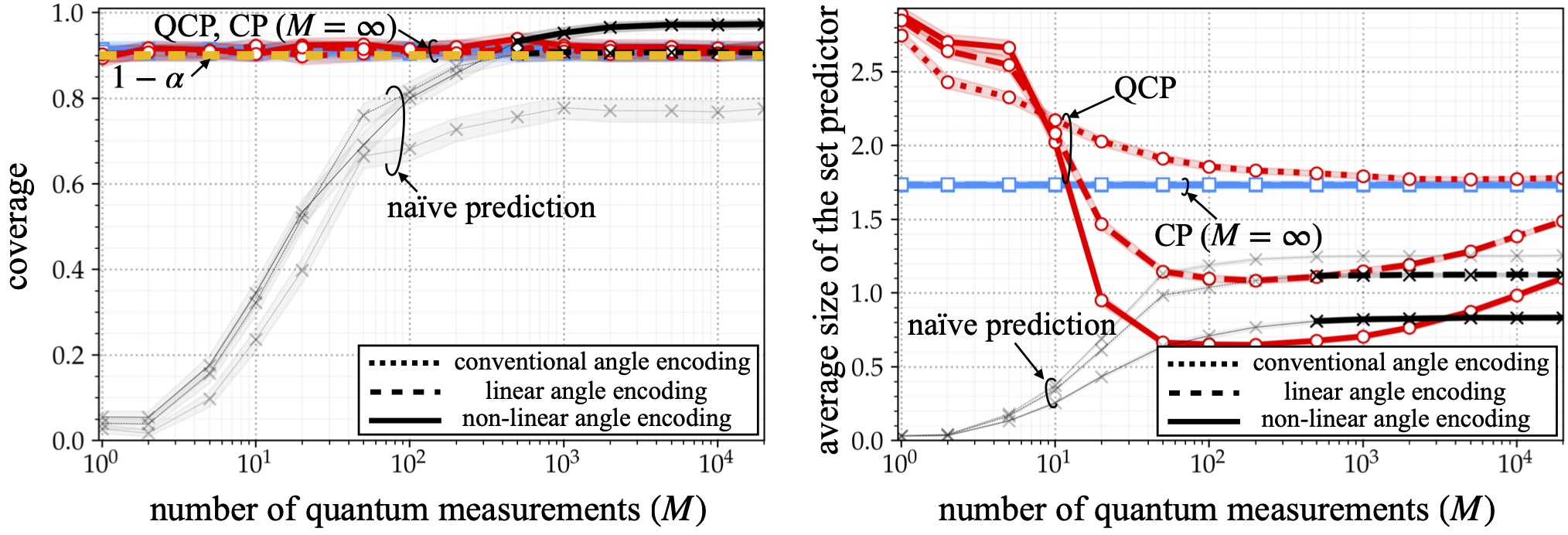}
  \caption{Regression for mixture of two sinusoidal functions \eqref{eq:regression_target}: Coverage and average size of the set predictors as a function of number $M$ of quantum measurements given $|\mathcal{D}|=10^4$ available training samples. Training and testing are done on a classical simulator. The shaded areas correspond to confidence intervals covering 95\% of the realized values.  The results are averaged over $K=1000$ experiments, and the transparent lines are used for set predictors that do not meet the coverage level $1-\alpha=0.9$.} 
\label{fig:regression_per_measurements_many_training}
\end{figure}

\begin{figure}
  \centering
  \includegraphics[width=0.9\textwidth]{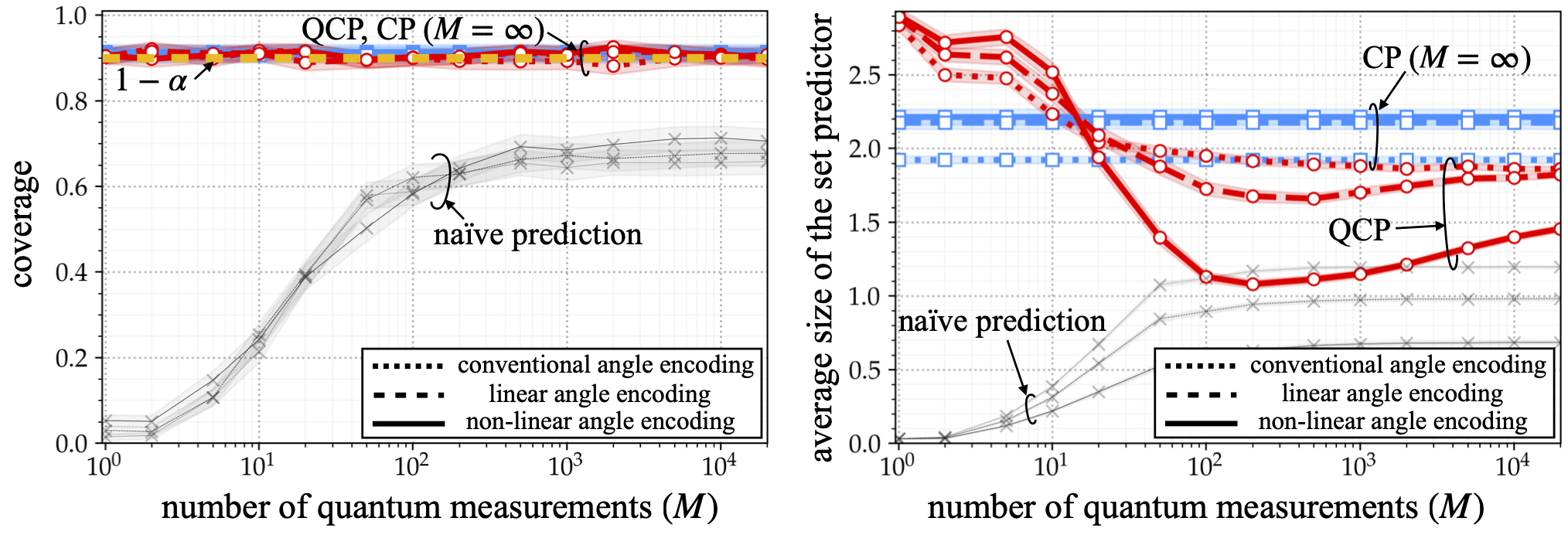}
  \caption{Regression for mixture of two sinusoidal functions \eqref{eq:regression_target}: Coverage and average size of the set predictors as a function of number $M$ of quantum measurements given $|\mathcal{D}|=100$ available training samples. Training and testing are done on a classical simulator. The shaded areas correspond to confidence intervals covering 95\% of the realized values.  The results are averaged over $K=1000$ experiments, and the transparent lines are used for set predictors that do not meet the coverage level $1-\alpha=0.9$. } 
\label{fig:regression_per_measurements_few_training}
\end{figure}

\subsubsection{Quantitative Results with a Noiseless Simulator}

We now provide numerical evaluations of the coverage probability and of the average size of the set predictor as a function of the number $|\mathcal{D}|$ of data points for different angle encoding strategies as described in Sec.~\ref{subsubsec:pqc_ansatz_with_input}. For learning non-linear angle encoding, we adopt the same architecture described above. For CP set predictors, if $|\mathcal{D}|\leq 20$, we split the data set $\mathcal{D}$ in equal parts for training and calibration; while, when $|\mathcal{D}|> 20$, we fix the number of calibration examples to $|\mathcal{D}^\text{cal}|=10$ to use all the remaining data points for training. We assume a target miscoverage level $\alpha=0.1$, and the reported values of  coverage and average size of the set predictor are averaged over $K=1000$ experiments as {\color{black} discussed after Theorem~\ref{ther:qcp} and detailed in  Appendix \ref{sec:finite_number_guarantee}.}

In Fig.~\ref{fig:regression_per_example_many_shots} and  Fig.~\ref{fig:regression_per_example_few_shots}, we show the coverage rate and average set size as a function of the number of data points, $|\mathcal{D}|$, with a large and small numbers of shots, namely $M=1000$ and $M=20$, respectively.  In the first case, with a larger $M$, given enough training examples, here, for $|\mathcal{D}|\geq 10^4$, na\"ive prediction yields a well-calibrated set predictor that achieves $1-\alpha$ coverage. This is the case when adopting either linear \cite{perez2020data} or non-linear angle encoding. In contrast, na\"ive prediction fails to meet the coverage requirements for smaller data set size; and also for the smaller value of $M$ irrespective of the data set size. 

In line with their theoretical properties, {\color{black} conventional CP} and QCP are guaranteed to provide  coverage at the desired level  $1-\alpha$, irrespective of data availability and number of shots. However, despite the use of an arbitrarily large number of shots, {\color{black} CP} produces larger prediction set sizes than QCP. As in the case of unsupervised learning studied in the previous subsection, this is caused by the unimodality of the likelihood function assumed by {\color{black} CP}. As an example, given $|\mathcal{D}|=2000$ with non-linear angle encoding, when $M=1000$, QCP yields set predictors with average size $0.76$, while the average size of {\color{black} CP} predictors is $1.74$; and with $M=20$ we obtain set size $1.74$ with {\color{black} CP} and $0.94$ for QCP.

We now further investigate the impact of the number $M$ of shots by focusing on regimes with abundant data, i.e., with $|\mathcal{D}|=10^4$, and with limited data, i.e., $|\mathcal{D}|=100$ in Fig.~\ref{fig:regression_per_measurements_many_training} and Fig.~\ref{fig:regression_per_measurements_few_training}, respectively. In a manner that parallels the discussion in the previous paragraph on the role of the data set size, if $|\mathcal{D}|$ is sufficiently large, na\"ive set prediction achieves the desired coverage level as long as the number of shots is also sufficiently large, here $M\geq 500$ (Fig.~\ref{fig:regression_per_measurements_many_training}). In contrast, CP schemes attain calibration in all regimes, with QCP significantly outperforming {\color{black}CP} when $M$ is not too small. As an example, given $M=200$ with non-linear angle encoding, when $|\mathcal{D}|=10^4$, QCP yields set predictors with average size $0.65$, while the average size of  {\color{black} conventional CP} predictors is $1.73$; and with $|\mathcal{D}|=100$ we obtain set size $2.21$ with {\color{black}CP} and $1.08$ for QCP.  Following the discussion in the previous subsection, an increase in the number of shots $M$ does not necessarily translate in more efficient set predictors,  as it becomes more likely for the PQC to draw outliers that cover low-density areas of the ground-truth distribution.

\begin{figure}
  \centering
  \includegraphics[width=0.9\textwidth]{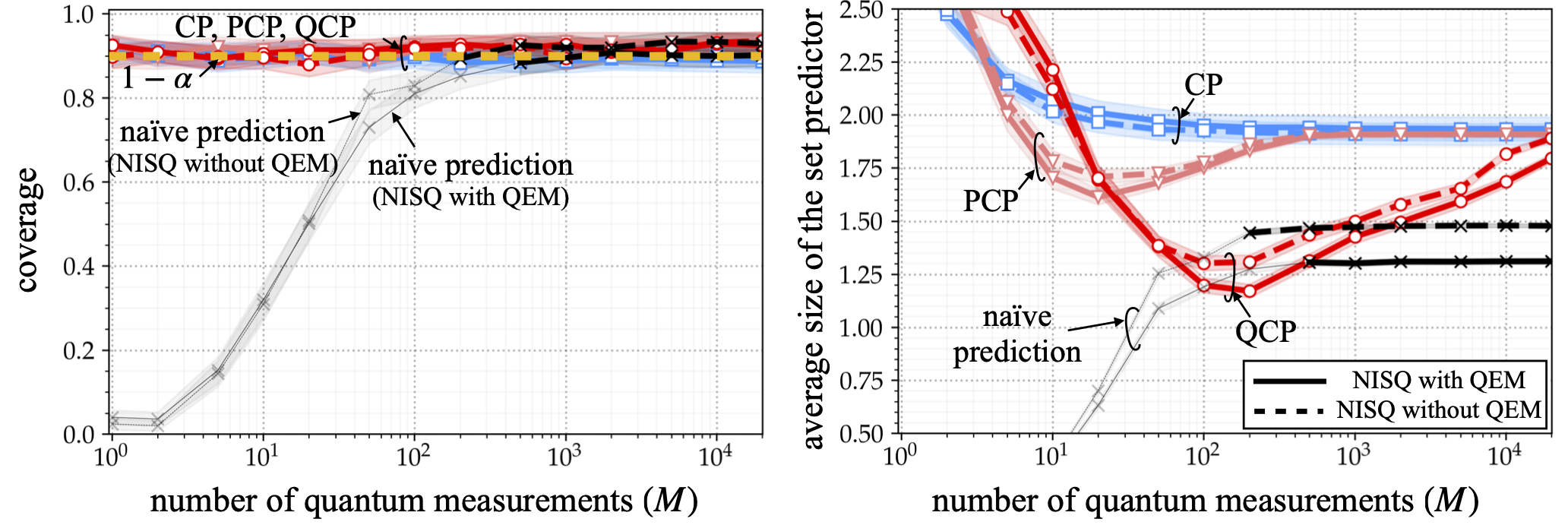}
  \caption{Regression for mixture of two sinusoidal functions \eqref{eq:regression_target}: Coverage and average size of the set predictors as a function of number $M$ of quantum measurements given $|\mathcal{D}|=100$ available training samples. Training is done on a classical simulator,  while testing is implemented on  \texttt{imbq\_quito} NISQ device, with or without M3 QEM \cite{nation2021scalable}. The shaded areas correspond to confidence intervals covering 95\% of the realized values.  The results are averaged over $K=500$ experiments, and the transparent lines are used for set predictors that do not meet the coverage level $1-\alpha=0.9$. } 
\label{fig:regression_per_measurements_few_training_ibmq}
\end{figure}

\subsubsection{Quantitative Results with Quantum Hardware Noise}
Moreover, we present  a quantitative performance comparison based on results obtained on the \texttt{imbq\_quito} NISQ device with or without  M3 QEM. As discussed in the previous section, for this case, training was done using a classical simulator, and  the quantum computer was used solely for testing, i.e., to produce the set prediction given a test input. In Fig.~\ref{fig:regression_per_measurements_few_training_ibmq}, we investigate the performance metrics as a function of number of quantum measurements $M$. QCP is confirmed to provide the best performance within a suitable range of values of $M$, significantly outperforming {\color{black} both PCP \cite{wang2022probabilistic} and conventional CP,}  even in the presence of quantum noise, with or without QEM.

{\color{black}


\subsection{Quantum Data Classification}\label{sec:exp_qc}
Finally,  we consider the problem of classifying quantum states  described in Sec.~\ref{sec:application_to_quantum}.
In particular, we aim at classifying $C=10$ density matrices of size $16\times 16$, assuming a uniform label probability $p(y)=1/C$ for all $y\in\{1,...,10\}$. The density matrix $\rho(y)$ for each class $y$ is expressed as the Gibbs state $\rho(y) = e^{-H(y)/T}/\text{Tr}(e^{-H(y)/T})$ with temperature $T>0$, where the Hamiltonian matrices $H(y)$ are independently generated so as to ensure a sparsity level of $0.2$ at temperature $T=1$ as in reference  \cite{ahmed2021classification}. We adopt a \emph{pretty good measurement} detector \cite{hausladen1994pretty, gambs2008quantum} as the pre-designed POVM $\mathcal{P}$. {\color{black}We note that we could have also adopted PQC-based detectors designed using a number of copies of the training states in a manner similar to reference \cite{deville2021new}, since the validity condition \eqref{eq:qcp_QC_main} holds for any fixed detector.}


\begin{figure}[t]
  \centering
  \includegraphics[width=0.95\textwidth]{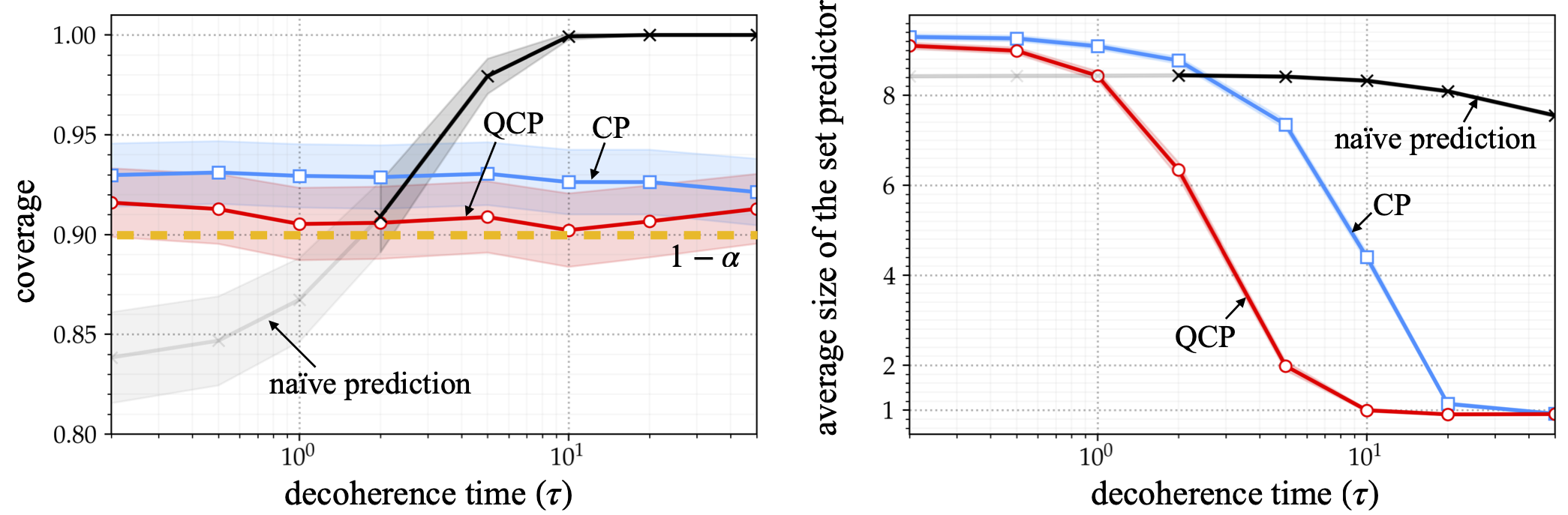}
  \caption{Quantum data classification under drift \cite{schwarz2011detecting}: Coverage and average size of the set predictors as a function of the decoherence time $\tau$ in \eqref{eq:decoherence_evolution} given $|\mathcal{D}^\text{cal}|=10$ calibration samples.  {\color{black} The ten possible density matrices to be classified are generated as $\rho(y) = e^{-H(y)/T}/\text{Tr}(e^{-H(y)/T})$ with temperature $T>0$, where the Hamiltonian matrices $H(y)$ are independently generated so as to ensure a sparsity level of $0.2$ at temperature $T=1$ as in \cite{ahmed2021classification}. Pretty good measurements detector is adopted \cite{hausladen1994pretty}. The shaded areas correspond to confidence intervals covering 95\% of the realized values.  The results are averaged over $K=1000$ experiments, and transparent lines are used to highlight regimes in which the  set predictors do not meet the coverage level $1-\alpha=0.9$.}} 
\label{fig:qc_exp_drift}
\vspace{-0.2cm}
\end{figure}

In order to study the impact of drift, we apply the noise model \eqref{eq:decoherence_evolution} to the density matrix $\rho(y)$, i.e., we replace $\rho(x|\theta_{\mathcal{D}^\text{tr}})$  with $\rho(y)$ in \eqref{eq:decoherence_evolution}. We further assume the noise density matrix $\rho^\mathcal{N}(y)$ in \eqref{eq:decoherence_evolution} to be the fully mixed state $I/16$ as in reference \cite{schwarz2011detecting}, and set the temperature as $T=0.01$. In Fig.~\ref{fig:qc_exp_drift}, we plot coverage and average size of the set predictor as a function of the decoherence time parameter $\tau$ in \eqref{eq:decoherence_evolution} for  CP, QCP, and for the na\"ive predictor for $M=100$ shots. CP and the na\"ive predictor are implemented by using the estimated histogram, while PCP is not applicable for classification problems \cite{wang2022probabilistic}.  Unlike CP and the na\"ive set predictor, QCP builds on the weighted histogram (Sec.~\ref{sec:qcp}), and we choose the weights $w_m$ in \eqref{eq:general_non_para_DE} as $w_m \propto e^{-m/\tau}$.

As per our theoretical results, CP and QCP always achieve coverage no smaller than the predetermined level $1-\alpha=0.9$, irrespective of the amount of drift, while the na\"ive prediction fails to achieve validity for strong drifts ($\tau < 2$).  Furthermore,  QCP significantly reduces the size of the set predictor as compared to CP by taking into account the decreasing  quality of measurement shots via the use of the weighted histogram. For instance, given decoherence time $\tau=10$, QCP yields set predictor with average size  $1$, which is $4.4$ times smaller than the CP set predictor. 

\section{Conclusions}
\label{sec:conclusion}
In this paper, we have proposed a general methodology for QML, referred to as \emph{quantum conformal prediction} (QCP), that  provides ``error bars'' with coverage guarantees that hold irrespective of the size of the training data set, of the number of shots, and of the type, correlation, and drift of  quantum hardware noise. Experimental results have shown that QCP can reduce the size of the error bars by up to eight times as compared to existing baselines,  and up to four times as compared to a direct application of CP to quantum models. Furthermore, it is concluded that, when quantum models are augmented with QCP, it is generally advantageous not to average over the shots, as typically done in the literature. Rather, treating the shots as separate samples allows QCP to obtain more informative error bars. Future directions for research include the generalization of the QCP framework to more general form of risk control beyond coverage \cite{bates2021distribution, angelopoulos2021learn, angelopoulos2022conformal}.
\appendices
\section{Generalization Analysis vs. Conformal Prediction}
\label{supp:sec:gen_vs_cp}
\label{appendix:gen_bounds}
In this work, we have focused on quantifying prediction uncertainty caused by noisy quantum models trained based on limited data when applied on any new input. Generalization analysis \cite{shalev2014understanding, simeone2022machine} also studies the model performance on test inputs. However, as we briefly discuss  below,  it typically fails to provide meaningful operational error bars, unlike QCP. 

As illustrated in Fig. \ref{fig:overall}, \emph{generalization analysis} focuses on the identification of \emph{analytical} scaling laws on the amount of data required to ensure desired performance levels on test data \cite{shalev2014understanding}, possibly as a function of the training algorithm \cite{mcallester1999pac, guedj2019primer} and of the data distribution \cite{russo2016controlling, xu2017information} (see also \cite{simeone2022machine}). Related studies have also been initiated for quantum machine learning, with recent results including \cite{caro2022generalization,banchi2021generalization, jose2022transfer, abbas2021power, weber2022toward}, as summarized by \cite{banchi2023statistical}.

As a notable example, reference \cite{caro2022generalization} reveals the important insight that the \emph{generalization error}, i.e., the discrepancy between training and test losses, for quantum models grows as the square root of the number of gates $T$ and with the inverse of the square root of the size of the data set. While critical to gauge the feasibility to train quantum circuits, such results provide limited \emph{operational} guidelines concerning the uncertainty associated with decisions made on test data when training data are limited (see Fig.~\ref{fig:overall}).

Furthermore, as mentioned in the main text, most papers on quantum machine learning, including on generalization analyses, treat the output of a quantum model as \emph{deterministic}, implicitly assuming that expected values of observables can be calculated exactly \cite{caro2022generalization,banchi2021generalization, jose2022transfer, abbas2021power, weber2022toward}. In practice, for this to be an accurate modelling assumption, one needs to carry out a sufficiently large -- strictly speaking, infinite -- number of measurements, or \emph{shots}, at the output of the quantum circuit (see Fig.~\ref{fig:overall}, left column). These measurements are averaged to obtain the final prediction. From a statistical viewpoint, this assumption conveniently makes a quantum  models \emph{likelihood-based}, in the sense that one can evaluate exactly the probability of any output given an input and the model parameters (see Sec.~\ref{sec:CCP}). 

However,  as described in Sec.~\ref{sec:PQC_as_implicit} of the main text, quantum models are more properly described as being \emph{implicit}, or \emph{simulation-based}, in the sense that they only provide random samples from a given, inaccessible, likelihood \cite{duffield2022bayesian}. The generalization capabilities of \emph{generative} quantum models -- an example of implicit models -- have been recently studied in a separate line of research, including in  \cite{du2022theory, gili2022evaluating}.

In contrast to generalization analysis, CP \cite{vovk2022algorithmic} does not aim at obtaining analytical conclusions concerning sample efficiency. Rather, it provides a  general methodology to obtain error bars with formal guarantees on the probability of covering the correct test output (see Sec. ~\ref{sec:CCP} and Sec.~\ref{sec:QCP}). Such guarantees hold irrespective of the size of the training data set \cite{angelopoulos2021gentle, vovk2022algorithmic, barber2021predictive} (see Fig. \ref{fig:overall}, right column). 


\begin{figure}
  \centering
  \includegraphics[width=0.65\textwidth]{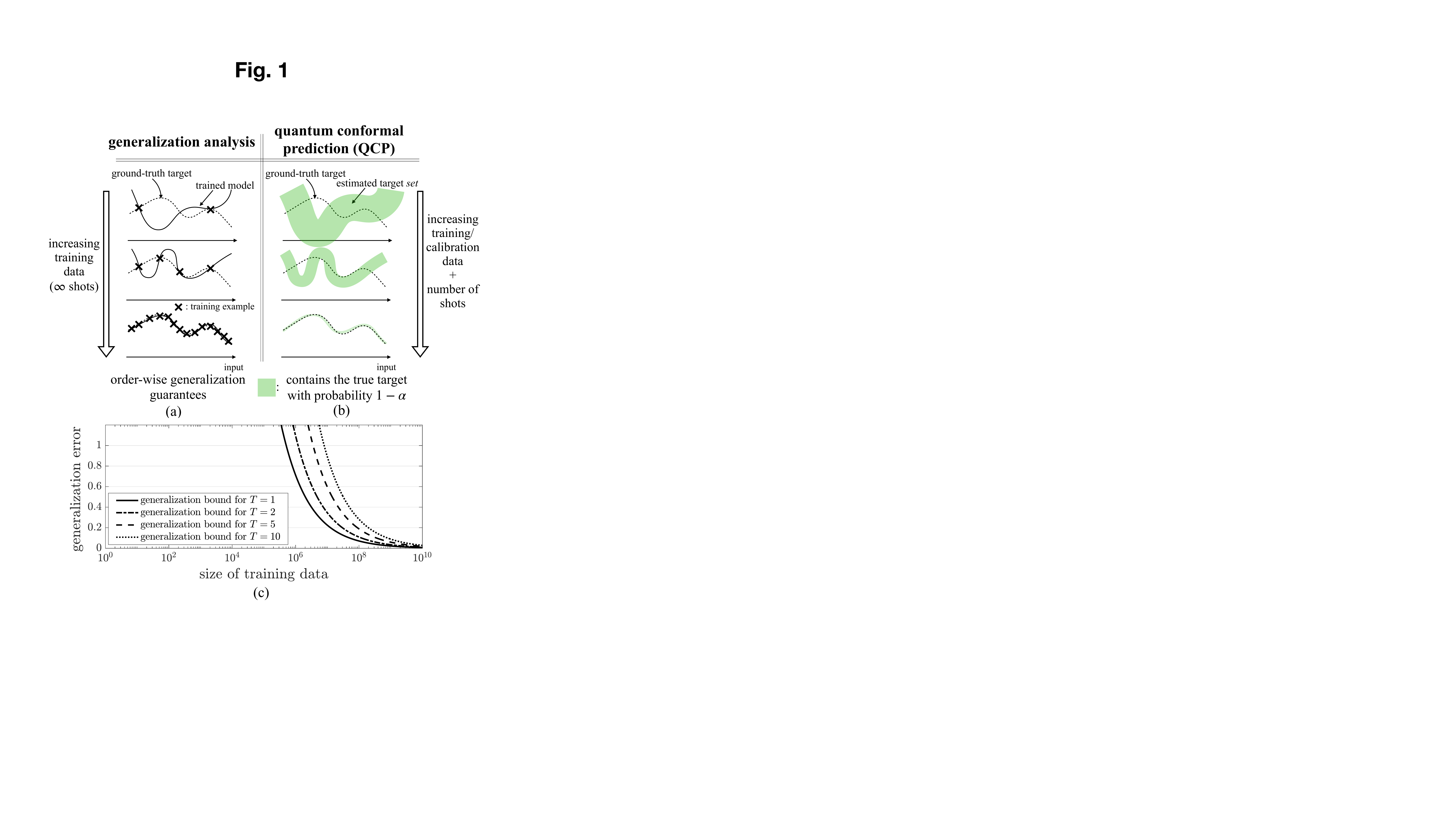}
  \caption{Comparison between (a)  quantum generalization analysis \cite{caro2022generalization,banchi2021generalization, jose2022transfer, abbas2021power, weber2022toward} and (b) quantum conformal prediction (QCP), which is introduced in this work. Quantum generalization analysis provides \emph{analytical} bounds on the generalization error that explicitly capture the general dependence on the number of training examples (typically assuming an infinite number of shots). Based on such bounds, one is able to conclude that, if the number of data points scales sufficiently quickly with respect to the model complexity, the trained model generalizes well outside the training data. In contrast, QCP provides an \emph{operational} way of quantifying the uncertainty of the decisions made outside the training set (shaded areas). The resulting ``error bars'' are  \emph{guaranteed} to contain the ground-truth output with a desired probability, irrespective of the amount of data, of the of the size of the training data set, of the ansatz of the QML model, of the training algorithm, of the number of shots, and of the type, correlation, and drift of quantum hardware noise. (c)  As an illustration of the results that can be obtained via generalization analysis, this panel shows the generalization bounds derived in \cite{caro2022generalization} as a function of number of training examples for different numbers $T$ of trainable local quantum gates (see Sec~\ref{appendix:gen_bounds} for details). As suggested by the plot, while very useful to identify general trends and scaling laws, generalization analyses only provide numerically meaningful bounds with a very large number of training examples. \label{fig:overall}}
\end{figure}

Details on the generalization bounds plotted in Fig.~\ref{fig:overall}(a) are provided next. As discussed in Sec. \ref{supp:sec:gen_vs_cp}, the figure reports the generalization bounds derived in \cite{caro2022generalization} as a function of the size of training data, $|\mathcal{D}^\text{tr}|$, for PQCs with $T=1,2,5,10$ trainable local quantum gates.

Let us fix a loss function $l(x,y|\theta)$ of the form  $l(x,y|\theta)=\text{Tr}(O_y^\text{loss}\rho(x|\theta)) $ for any example $(x,y)$, where $O_y$ is the \emph{loss observable} for target variable value $y$. We recall that $\rho(x|\theta)$ is the density matrix that describes the state produced by the PQC with model parameters $\theta$. As an example, choosing the loss observable as $O_y^\text{loss}= \sum_{j}\mathbbm{1}(o_j \neq y)\Pi_{j} $, with $\Pi_j = |j\rangle \langle j|$, ensures that the loss function $l(x,y|\theta)$ measures the \emph{probability of error}, i.e., the probability that the measurement output of the PQC is not equal to the label $y$.

We are interested in bounding the \emph{generalization error}, which is given by the difference between the \emph{population loss} $\mathbb{E}_{(\mathbf{x},\mathbf{y})\sim p(x,y)}[l(\mathbf{x},\mathbf{y}|\theta)]$, evaluated with respect to the ground-truth, unknown, distribution $p(x,y)$ and the training loss $1/|\mathcal{D}^\mathrm{tr}|\sum_{(x,y)\in \mathcal{D}^\text{tr}}l(x,y|\theta)$. Reference \cite{caro2022generalization} showed that the following  bound holds with probability at least $1-\delta$ over the choice of training data set $\mathcal{D}^\text{tr}$ \cite[Theorem 6]{caro2022generalization}
\begin{align}
    \label{eq:app_gen_caro}\mathbb{E}_{(\mathbf{x},\mathbf{y})\sim p(x,y)}[l(\mathbf{x},\mathbf{y}|\theta)]-& \frac{1}{|\mathcal{D}^\mathrm{tr}|}\sum_{(x,y)\in \mathcal{D}^\text{tr}}l(x,y|\theta) \\ \nonumber
    \leq \frac{24C_{\text{loss}}}{\sqrt{|\mathcal{D}^\text{tr}|}}\sqrt{512T}  \bigg( \frac{1}{2}&\sqrt{\log(6T)} + \frac{1}{2}\sqrt{\log 2} - \frac{\sqrt{\pi}}{2} \text{erf}(\sqrt{\log2}) + \frac{\sqrt{\pi}}{2} \bigg) + 3C_\text{loss}\sqrt{\frac{2\log(2/\delta)}{|\mathcal{D}^\text{tr}|}},
\end{align}where $T$ is the number of trainable gates in the PQC and $C_{\text{loss}}$ is the maximum spectral norm of the loss observables $O_y^\text{loss}$. The error function in \eqref{eq:app_gen_caro} is defined as $\text{erf}(x)=\frac{2}{\sqrt{\pi}} \int_{0}^x \exp{(-t^2)}\mathrm{d}t$. Fig.~\ref{fig:overall}(c) plots this bound for different values of $|\mathcal{D}^\text{tr}|$ and $T$ by assuming the probability of error loss described above, which has $C_{\text{loss}}=1$. We choose $\delta=0.1$, but changes in $\delta$ have a negligible impact on the bound. 

\section{Conformal Prediction for Classical Probabilistic Models (PCP)}
\label{sec:probabilistic_CP}

In this section, we review PCP \cite{wang2022probabilistic}. To this end, consider a parametric probabilistic predictor defined by a conditional distribution  $p(y|x,\theta)$ of target output $\mathbf{y}$ given input $\mathbf{x}=x$. For example, in regression, the distribution $p(y|x,\theta)$ may describe a Gaussian random variable $\mathbf{y}$ with mean and covariance dependent on input $x$ and parameter vector $\theta$; or it may describe a categorical random variable $\mathbf{y}$ with logit vector dependent on input $x$ and parameter vector $\theta$. The mentioned functions are typically implemented as neural networks with weight vector $\theta$ and input $x$. Model $p(y|x,\theta)$ can be trained using standard tools from machine learning, yielding a trained model parameter vector $\theta_{\mathcal{D}^\text{tr}}$ \cite{simeone2022machine, shalev2014understanding}. 

PCP constructs a set predictor not directly as a function of the likelihood $p(y|x,\theta_{\mathcal{D}^\text{tr}})$ of the trained model, but rather as a function of a number of random predictions $\hat{\mathbf{y}}$ generated from the model $p(y|x,\theta_{\mathcal{D}^\text{tr}})$. As mentioned in  Sec.~\ref{sec:intro}, the motivation of PCP is to obtain more flexible set predictors that can describe disjoint error bars \cite[Fig. 1]{wang2022probabilistic}. An illustration of PCP can be found {\color{black} in  Fig.~\ref{fig:qcp_scoring}(c)--(d).} 

To elaborate, given test input $x$ and trained model $p(y|x,\theta_{\mathcal{D}^\text{tr}})$, PCP generates $M$ i.i.d. predictions $\hat{\mathbf{y}}^{1:M} = \{ \hat{\mathbf{y}}^m \}_{m=1}^M$, with each sample drawn from the model as
\begin{align}
    \label{eq:sampling_classical}
    \hat{\mathbf{y}}^m \sim p(y|x,\theta_{\mathcal{D}^\text{tr}}). 
\end{align} It then calibrates these predictions by using the calibration set. To this end, given the calibration set $\mathcal{D}^\text{cal}$, in an \emph{offline} phase, for each calibration example $z[i]\in \mathcal{D}^\text{cal}$, PCP generates $M$ i.i.d. random predictions $\hat{\mathbf{y}}^{1:M}[i] = \{ \hat{\mathbf{y}}^m[i] \}_{m=1}^M$, with each sample obtained from the model as $\hat{\mathbf{y}}^m[i]\sim p(y|x[i],\theta_{\mathcal{D}^\text{tr}})$.

Like CP, PCP also relies on the use of a  scoring function that evaluates the loss of the trained model on each data point. Unlike CP, the scoring function of PCP is not a function of a single, deterministic, prediction $\hat{y}$, but rather of the $M$ random predictions $\hat{\mathbf{y}}^{1:M}$ generated i.i.d. from the model $p(y|x,\theta_{\mathcal{D}^\text{tr}})$ for the given test input $x$. Accordingly, we write as $s((x,y)|\hat{\mathbf{y}}^{1:M})$ the scoring function, which measures the loss obtained by the trained model on an example $z=(x,y)$ based on the random predictions $\hat{\mathbf{y}}^{1:M}$. 

Reference \cite{wang2022probabilistic} proposed the scoring function
\begin{align}
    \label{eq:scoring_pcp}
    s(z=(x,y)|\hat{\mathbf{y}}^{1:M}) = \min_{m \in \{1,...,M\}}|| y - \hat{\mathbf{y}}^m||.
\end{align}This loss metric uses the best among the $M$ random predictions $\hat{\mathbf{y}}^m$ to evaluate the score on example $z$ as  the Euclidean distance $|| y - \hat{\mathbf{y}}^m ||$ between the best prediction $\hat{\mathbf{y}}^m$ and the output $y$ \cite{wang2022probabilistic}.

Having defined the scoring function as in  \eqref{eq:scoring_pcp}, PCP obtains a set predictor  \eqref{eq:classical_CP_set_predictor} as for CP by replacing the scoring function  $s(z|\theta)$, e.g., the quadratic loss $(y-f(x|\theta))^2$, with the scoring function \eqref{eq:scoring_pcp} for both test and calibration pairs. This yields the set predictor
\begin{align}
\label{eq:prob_CP_set_predictor}
\mathbf{\Gamma}_M(x|\mathcal{D}^\text{cal},\theta_{\mathcal{D}^\text{tr}}) &=\Big\{ y' \in \mathcal{Y} \hspace{-0.1cm}\: : \hspace{-0.1cm}\: {s}((x,y')|\hat{\mathbf{y}}^{1:M})  \:\leq\:  Q_{1-\alpha} \Big(\big\{ {s}( z[i]|\hat{\mathbf{y}}^{1:M}[i]  )\big\}_{i=1}^{ |\mathcal{D}^\text{cal}| } \Big)\Big\}.   
 \end{align} When evaluated using the scoring function (\ref{eq:scoring_pcp}), this results in a generally disjoint set of intervals for scalar variables (and circles for two-dimensional variables) \cite{wang2022probabilistic}.

As long as the examples in the calibration set $\mathbfcal{D}^\text{cal}$ and the test example $\mathbf{z}$  i.i.d. random variables, the PCP set predictor \eqref{eq:prob_CP_set_predictor} is well calibrated as formalized next. 

\begin{theorem}[Calibration of PCP \cite{wang2022probabilistic}]\label{ther:pcp} 
Assuming that calibration data $\mathbfcal{D}^\text{cal}$ and test data $\mathbf{z}$ are i.i.d., with sampling procedure following \eqref{eq:sampling_classical}, for any miscoverage level $\alpha \in (0,1)$ and for any trained model $\theta_{\mathcal{D}^{\text{tr}}}$, the PCP set predictor \eqref{eq:prob_CP_set_predictor} satisfies the inequality \begin{align}
    \label{eq:validity_pcp}
    \Pr( \mathbf{y} \in \mathbf{\Gamma}_M(\mathbf{x}|\mathbfcal{D}^\text{cal},\theta_{\mathcal{D}^\text{tr}}) ) \geq 1-\alpha,
\end{align}with probability taken over the joint distribution of the test data $\mathbf{z}$, of the calibration data set $\mathbfcal{D}^\text{cal}$, and also over the independent random predictions $\hat{\mathbf{y}}^{1:M}$ and $\{\hat{\mathbf{y}}^{1:M}[i]\}_{i=1}^{|\mathcal{D}^\text{cal}|}$ produced by the model for test and calibration points.
\end{theorem}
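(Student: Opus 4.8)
The plan is to reduce the statement to the standard exchangeability-based coverage guarantee of split conformal prediction, applied to suitably \emph{augmented} data points. First I would define, for each calibration index $i$, the augmented variable $\boldsymbol{\zeta}[i] = (\mathbf{z}[i], \hat{\mathbf{y}}^{1:M}[i])$ that bundles the $i$-th calibration pair with the $M$ random predictions drawn for it via \eqref{eq:sampling_classical}, and likewise $\boldsymbol{\zeta} = (\mathbf{z}, \hat{\mathbf{y}}^{1:M})$ for the test point. Since the pairs $\mathbf{z}[1],\dots,\mathbf{z}[|\mathcal{D}^\text{cal}|],\mathbf{z}$ are i.i.d. and, conditioned on each input, the $M$ predictions are generated i.i.d. from the same trained model $p(y|x,\theta_{\mathcal{D}^\text{tr}})$ independently across indices, the augmented variables $\boldsymbol{\zeta}[1],\dots,\boldsymbol{\zeta}[|\mathcal{D}^\text{cal}|],\boldsymbol{\zeta}$ are themselves i.i.d., hence exchangeable.

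Next I would introduce the scores $\mathbf{S}_i = s(\mathbf{z}[i]\,|\,\hat{\mathbf{y}}^{1:M}[i])$ for $i=1,\dots,|\mathcal{D}^\text{cal}|$ and $\mathbf{S} = s(\mathbf{z}\,|\,\hat{\mathbf{y}}^{1:M})$, the latter evaluated with the \emph{true} test label. Each score is the image of the corresponding augmented variable under the single fixed measurable map $s$ in \eqref{eq:scoring_pcp}, so $\mathbf{S}_1,\dots,\mathbf{S}_{|\mathcal{D}^\text{cal}|},\mathbf{S}$ inherit exchangeability. The key structural observation, from the definition of the set predictor in \eqref{eq:prob_CP_set_predictor}, is that the event $\{\mathbf{y}\in\mathbf{\Gamma}_M(\mathbf{x}\,|\,\mathbfcal{D}^\text{cal},\theta_{\mathcal{D}^\text{tr}})\}$ is exactly the event $\{\mathbf{S}\le Q_{1-\alpha}(\{\mathbf{S}_i\}_{i=1}^{|\mathcal{D}^\text{cal}|})\}$, i.e., the test score falls below the $\lceil(1-\alpha)(|\mathcal{D}^\text{cal}|+1)\rceil$-th smallest calibration score.

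From there I would invoke the standard quantile lemma for exchangeable sequences: the rank of $\mathbf{S}$ within $\mathbf{S}_1,\dots,\mathbf{S}_{|\mathcal{D}^\text{cal}|},\mathbf{S}$ is stochastically dominated by a uniform variable on $\{1,\dots,|\mathcal{D}^\text{cal}|+1\}$, whence
\begin{align}
\Pr\!\Big(\mathbf{S}\le Q_{1-\alpha}\big(\{\mathbf{S}_i\}_{i=1}^{|\mathcal{D}^\text{cal}|}\big)\Big)\ \ge\ \frac{\lceil(1-\alpha)(|\mathcal{D}^\text{cal}|+1)\rceil}{|\mathcal{D}^\text{cal}|+1}\ \ge\ 1-\alpha,
\end{align}
which is exactly \eqref{eq:validity_pcp}. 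Ties among the scores, when they occur with positive probability, are broken by an auxiliary randomization independent of the data, and the degenerate regime $\lceil(1-\alpha)(|\mathcal{D}^\text{cal}|+1)\rceil > |\mathcal{D}^\text{cal}|$ is handled by the convention $Q_{1-\alpha}(\cdot)=+\infty$, under which the predicted set is all of $\mathcal{Y}$ and coverage is trivial.

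The main subtlety to flag --- and essentially the only place the argument could fail --- is that the probability in \eqref{eq:validity_pcp} must be taken jointly over the generated predictions, not conditionally on them: conditioning on the realized $\hat{\mathbf{y}}^{1:M}$ would break the symmetry between calibration and test points. It is precisely because the test predictions $\hat{\mathbf{y}}^{1:M}$ are drawn by the same i.i.d. mechanism \eqref{eq:sampling_classical} as the calibration predictions that the augmented variables are exchangeable. Once the problem is recast at the level of these augmented variables, the proof is a verbatim instance of split-CP coverage, so no genuine obstacle remains beyond this bookkeeping.
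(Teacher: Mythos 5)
Your proposal is correct and follows essentially the same route as the paper: the paper's Appendix~\ref{appendix:proofs} proves exactly this by first establishing an exchangeability lemma for scores $t(\mathbf{z}[i]|\boldsymbol{\nu}[i])$ whose context variables $\boldsymbol{\nu}[i]=\hat{\mathbf{y}}^{1:M}[i]$ are drawn from a common conditional distribution given $\mathbf{z}[i]$ --- which is precisely your ``augmented variable'' construction --- and then applying the standard quantile lemma for exchangeable sequences. Your remarks on ties, the degenerate quantile regime, and the necessity of taking the probability jointly over the generated predictions are consistent with the paper's treatment.
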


A proof of Theorem~\ref{ther:pcp} is provided for completeness in the next section.

\section{Proof for the Theoretical Calibration Guarantees of QCP} 
\label{appendix:proofs}
In this section, we provide a unified proof for conventional CP (Sec.~\ref{sec:CCP}), PCP (Sec.~\ref{sec:probabilistic_CP}), and QCP (Sec.~\ref{sec:QCP}).  We first define \emph{finite exchangeability} as follows.
\begin{assumption}[Finite exchangeability \protect{\cite[Sec.~II]{caves2002unknown}}]\label{assum:classical} Calibration data set $\mathbfcal{D}^\text{cal}$ and a test data point $\mathbf{z}$ are finitely exchangeable random variables, i.e., the joint distribution $p(\mathcal{D}^\text{cal},z) = p(z[1],...,z[|\mathcal{D}^\text{cal}|],z)$ is invariant to any permutation of the variables $\{ \mathbf{z}[1],...,\mathbf{z}[|\mathcal{D}^\text{cal}|],\mathbf{z} \}$. Mathematically, we have the equality $p(z[1],...,z[|\mathcal{D}^\text{cal}|+1]) = p(z[\pi(1)],...,z[\pi(|\mathcal{D}^\text{cal}|+1)])$ with $z=z[|\mathcal{D}^\text{cal}|+1]$, for any permutation operator $\pi(\cdot)$. Note that the standard assumption of i.i.d. random variables satisfies finite exchangeability.
\end{assumption}

We  unify the expression for the scoring function as $t(z|\boldsymbol{\nu})$, where $\boldsymbol{\nu}$ is a context variable. This way, for deterministic CP (Sec.~\ref{sec:CCP}), we have $t(z|\boldsymbol{\nu}) = s(z|\theta_{\mathcal{D}^\text{tr}})$ with the context being deterministic; for PCP, we have $t(z|\boldsymbol{\nu}) = s(z| \hat{\mathbf{y}}^{1:M}) $ as in \eqref{eq:scoring_pcp}, with random variable $\boldsymbol{\nu} = \hat{\mathbf{y}}^{1:M}$ generated i.i.d. from the classical model $p(y|x,\theta_{\mathcal{D}^\text{tr}})$; and for QCP (Sec.~\ref{sec:QCP}), we have $t(z|\boldsymbol{\nu}) = s(z| \hat{\mathbf{y}}^{1:M} )$ with random variable $\boldsymbol{\nu} = \hat{\mathbf{y}}^{1:M}$ generated by the PQC with trained model $\theta_{\mathcal{D}^\text{tr}}$ via \eqref{eq:joint_disribution_M_shots}.

With this notation, the validity conditions proved by CP, PCP, and QCP can be expressed as \begin{align}
\label{eq:validity_condition_for_all_theorems}
&\Pr( \mathbf{y} \in {\Gamma}(\mathbf{x}|\mathbfcal{D}^\text{cal},\theta_{\mathcal{D}^\text{tr}}) )  \nonumber\\ &=  \Pr(  t(\mathbf{z}[N+1]|\boldsymbol{\nu}[N+1]) \leq Q_{1-\alpha}( \{ t(\mathbf{z}[i]|\boldsymbol{\nu}[i] ) \}_{i=1}^N  )   )   \nonumber\\&\geq 1-\alpha,
\end{align}
where $t(\mathbf{z}[N+1]|\boldsymbol{\nu}[N+1] )$ is the score for the test data $\mathbf{z}$. We  recall that, given a set of real numbers  $\{ s[1],...,s[N]\}$, the notation $Q_{1-\alpha}( \{ s[i]  \}_{i=1}^N  )$ represents the $\lceil (1-\alpha)(N+1)\rceil$-th smallest value in the set (for $\alpha \geq 1/(|\mathcal{D}^\text{cal}|+1)$). To prove the inequality \eqref{eq:validity_condition_for_all_theorems}, we introduce the following lemmas.

\begin{lemma}[Exchangeability lemma]\label{lemma:exchangeability}
Assume that $\mathbf{z}[1],...,\mathbf{z}[N+1] \in \mathcal{Z}$ are exchangeable random variables. Furthermore, assume that the joint distribution of random variables $(\mathbf{z}[1],\boldsymbol{\nu}[1]),  ..., (\mathbf{z}[N+1],\boldsymbol{\nu}[N+1])$ can be written as
\begin{align}
    p( (z[1],\nu[1]),...,(z[N+1],\nu[N+1])  )
    = \prod_{i=1}^{N+1} p(\nu[i]|z[i]) p(z[1],...,z[N+1])\label{eq:permutation_invariant_for_lemma}
    \end{align}for some conditional distribution $p(\nu[i]|z[i])$ that does not depend on $i=1,...,N+1$. Then, for any real-valued function $t(z|\nu)$, the random variables
\begin{align}
    t(\mathbf{z}[1]|\boldsymbol{\nu}[1]),...,t(\mathbf{z}[N+1]|\boldsymbol{\nu}[N+1])
\end{align}
are exchangeable.

\end{lemma}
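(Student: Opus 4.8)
The plan is to reduce the statement to two elementary facts: (i) the exchangeability of $\mathbf{z}[1],\dots,\mathbf{z}[N+1]$ lifts to joint exchangeability of the augmented pairs $(\mathbf{z}[i],\boldsymbol{\nu}[i])$; and (ii) exchangeability of a family of random vectors is preserved when a single fixed measurable map is applied to each coordinate.

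\textbf{Step 1 (exchangeability of the pairs).} Fix an arbitrary permutation $\pi$ of $\{1,\dots,N+1\}$. Using the factorization \eqref{eq:permutation_invariant_for_lemma}, the joint law of the permuted pairs $(\mathbf{z}[\pi(1)],\boldsymbol{\nu}[\pi(1)]),\dots,(\mathbf{z}[\pi(N+1)],\boldsymbol{\nu}[\pi(N+1)])$ is
\[
\prod_{i=1}^{N+1} p\big(\nu[\pi(i)]\,\big|\,z[\pi(i)]\big)\cdot p\big(z[\pi(1)],\dots,z[\pi(N+1)]\big).
\]
The product over $i$ is merely a reindexing of the same $N+1$ factors and is therefore unchanged — this is exactly where it is essential that the conditional kernel $p(\nu\,|\,z)$ does not depend on the index $i$ — while $p(z[\pi(1)],\dots,z[\pi(N+1)])=p(z[1],\dots,z[N+1])$ by the assumed exchangeability of $\mathbf{z}[1],\dots,\mathbf{z}[N+1]$. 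Hence the joint law of the pairs equals $\prod_{i=1}^{N+1} p(\nu[i]\,|\,z[i])\cdot p(z[1],\dots,z[N+1])$, independently of $\pi$, so the pairs $(\mathbf{z}[i],\boldsymbol{\nu}[i])$, $i=1,\dots,N+1$, are exchangeable.

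\textbf{Step 2 (pushforward by a common map).} Apply the fixed measurable map $g(z,\nu):=t(z\,|\,\nu)$ coordinatewise. For any measurable sets $A_1,\dots,A_{N+1}\subseteq\mathbb{R}$ and any permutation $\pi$,
\[
\Pr\!\big(t(\mathbf{z}[i]\,|\,\boldsymbol{\nu}[i])\in A_i,\ i=1,\dots,N+1\big)=\Pr\!\big((\mathbf{z}[i],\boldsymbol{\nu}[i])\in g^{-1}(A_i),\ i=1,\dots,N+1\big),
\]
and by the exchangeability established in Step 1 the right-hand side is invariant under replacing each $A_i$ by $A_{\pi(i)}$. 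Therefore $t(\mathbf{z}[1]\,|\,\boldsymbol{\nu}[1]),\dots,t(\mathbf{z}[N+1]\,|\,\boldsymbol{\nu}[N+1])$ are exchangeable, which is the claim.

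\textbf{Main obstacle.} There is no deep obstacle here; the only point demanding genuine care is the index-independence of the conditional kernel in Step 1, since if $p(\nu[i]\,|\,z[i])$ were allowed to vary with $i$ the product would fail to be permutation-invariant and the conclusion would break. The remaining measure-theoretic bookkeeping — densities versus probability mass functions, and replacing integrals by sums in the discrete-output case — is entirely routine and can be stated in whichever form matches the paper's conventions.
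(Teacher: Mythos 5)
Your proof is correct and takes the same route as the paper, which simply asserts that the result ``follows directly from the permutation-invariance of the distribution'' in \eqref{eq:permutation_invariant_for_lemma}; your Steps 1 and 2 are exactly the expanded version of that one-line argument, making explicit both the reindexing of the product of identical kernels and the pushforward by the common map $t$.
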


\begin{proof}
The result follows directly from the permutation-invariance of the distribution (\ref{eq:permutation_invariant_for_lemma}). 
\end{proof}

\begin{lemma}[Quantile lemma \cite{tibshirani2019conformal}]\label{lemma:quantile_lemma}
If $\mathbf{s}[1],...,\mathbf{s}[N],\mathbf{s}[N+1]$ are exchangeable random variables, then for any $\alpha \in (0,1)$, the following inequality holds
\begin{align}
 \Pr(  \mathbf{s}[N+1] \leq Q_{1-\alpha}( \{ \mathbf{s}[i]  \}_{i=1}^N  )   )   \geq 1-\alpha.
\end{align}
\end{lemma}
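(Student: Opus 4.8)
The plan is to establish Lemma~\ref{lemma:quantile_lemma} via a pure symmetry argument, using only exchangeability of $\mathbf{s}[1],\dots,\mathbf{s}[N+1]$ and no distributional assumptions. Write $k=\lceil(1-\alpha)(N+1)\rceil$; under the standing hypothesis $\alpha\geq 1/(N+1)$ recalled before the lemma one has $1\leq k\leq N$, so that $Q_{1-\alpha}(\{\mathbf{s}[i]\}_{i=1}^N)$ equals the $k$-th order statistic $\mathbf{s}_{(k)}$ of the first $N$ variables and is well defined. I would also introduce the order statistics $\tilde{\mathbf{s}}_{(1)}\leq\cdots\leq\tilde{\mathbf{s}}_{(N+1)}$ of the full pooled sample of $N+1$ variables.

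The first step is the deterministic set inclusion $\{\mathbf{s}[N+1]\leq\tilde{\mathbf{s}}_{(k)}\}\subseteq\{\mathbf{s}[N+1]\leq\mathbf{s}_{(k)}\}$. The number of pooled variables strictly below $\tilde{\mathbf{s}}_{(k)}$ is at most $k-1$; on the event in question the removed variable $\mathbf{s}[N+1]$ satisfies $\mathbf{s}[N+1]\leq\tilde{\mathbf{s}}_{(k)}$, so deleting it leaves at most $k-1$ of the remaining $N$ variables strictly below $\tilde{\mathbf{s}}_{(k)}$, whence $\mathbf{s}_{(k)}\geq\tilde{\mathbf{s}}_{(k)}$, and combining with $\mathbf{s}[N+1]\leq\tilde{\mathbf{s}}_{(k)}$ gives the inclusion. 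It therefore suffices to lower bound $\Pr(\mathbf{s}[N+1]\leq\tilde{\mathbf{s}}_{(k)})$.

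The second step invokes exchangeability. Since $\tilde{\mathbf{s}}_{(k)}$ is a permutation-invariant function of $(\mathbf{s}[1],\dots,\mathbf{s}[N+1])$, the probability $\Pr(\mathbf{s}[j]\leq\tilde{\mathbf{s}}_{(k)})$ is the same for every $j$. Summing the corresponding indicators over $j$ and taking expectations,
\begin{align}
(N+1)\,\Pr(\mathbf{s}[N+1]\leq\tilde{\mathbf{s}}_{(k)}) = \mathbb{E}\Big[\,\#\{j:\mathbf{s}[j]\leq\tilde{\mathbf{s}}_{(k)}\}\,\Big] \geq k,
\end{align}
because the $k$ smallest pooled values $\tilde{\mathbf{s}}_{(1)},\dots,\tilde{\mathbf{s}}_{(k)}$ all lie at or below $\tilde{\mathbf{s}}_{(k)}$. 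Hence $\Pr(\mathbf{s}[N+1]\leq\tilde{\mathbf{s}}_{(k)})\geq k/(N+1)$, and together with the Step-one inclusion and $k/(N+1)=\lceil(1-\alpha)(N+1)\rceil/(N+1)\geq 1-\alpha$ this yields the claim.

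The main obstacle I anticipate is purely the bookkeeping around ties: one must check that both the inclusion $\{\mathbf{s}[N+1]\leq\tilde{\mathbf{s}}_{(k)}\}\subseteq\{\mathbf{s}[N+1]\leq\mathbf{s}_{(k)}\}$ and the counting bound $\#\{j:\mathbf{s}[j]\leq\tilde{\mathbf{s}}_{(k)}\}\geq k$ remain valid under the non-strict ($\leq$) convention used in defining $Q_{1-\alpha}$ — which they do, since ties only enlarge the events of interest. In the tie-free special case the whole argument collapses to the familiar fact that the rank of $\mathbf{s}[N+1]$ among the $N+1$ pooled variables is uniform on $\{1,\dots,N+1\}$, so that $\Pr(\text{rank}(\mathbf{s}[N+1])\leq k)=k/(N+1)\geq 1-\alpha$.
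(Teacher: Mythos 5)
Your proof is correct and follows essentially the same route as the paper's: both pass through the pooled $(N+1)$-sample quantile $\tilde{\mathbf{s}}_{(k)}$ (the paper's $Q^*_{1-\alpha}$) and then relate the event $\{\mathbf{s}[N+1]\leq Q_{1-\alpha}(\{\mathbf{s}[i]\}_{i=1}^N)\}$ to $\{\mathbf{s}[N+1]\leq\tilde{\mathbf{s}}_{(k)}\}$. The only substantive difference is that the paper delegates the bound $\Pr(\mathbf{s}[N+1]\leq\tilde{\mathbf{s}}_{(k)})\geq 1-\alpha$ to a citation on exchangeability, whereas you prove it from scratch via the symmetry-plus-counting identity $(N+1)\Pr(\mathbf{s}[N+1]\leq\tilde{\mathbf{s}}_{(k)})=\mathbb{E}\big[\#\{j:\mathbf{s}[j]\leq\tilde{\mathbf{s}}_{(k)}\}\big]\geq k$, and you use only the one-directional inclusion (which is all the lower bound needs) rather than the full equivalence the paper states.
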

\begin{proof}
Defining $Q^*_{1-\alpha}( \{ \mathbf{s}[i]  \}_{i=1}^{N+1}  )$ as the $\lceil (1-\alpha)(N+1)\rceil$-th smallest value in the set $\{ \mathbf{s}[1],...,\mathbf{s}[N],\mathbf{s}[N+1] \}$, we have the inequality 
\begin{align}
 \Pr(  \mathbf{s}[N+1] \leq Q^{*}_{1-\alpha}( \{ \mathbf{s}[i]  \}_{i=1}^{N+1}  )   )   \geq 1-\alpha,
\end{align} by the exchangeability of the random variables \cite{kuchibhotla2020exchangeability}. Furthermore, we have the following equivalence \cite[Sec.~A.1]{tibshirani2019conformal}
\begin{align}
    \label{eq:app_quantile_equiv}
    \mathbf{s}[N+1] > Q_{1-\alpha}( \{ \mathbf{s}[i]  \}_{i=1}^N  )
    \Leftrightarrow &\mathbf{s}[N+1] > Q^*_{1-\alpha}( \{ \mathbf{s}[i]  \}_{i=1}^{N+1}  ),
\end{align} which can be readily checked by noting that $\mathbf{s}[N+1]$ cannot be strictly larger than itself or of $\infty$.

\end{proof}

Combining Lemma~\ref{lemma:exchangeability} and Lemma~\ref{lemma:quantile_lemma} for the random variables $\mathbf{s}[i] = t(\mathbf{z}[i]|\boldsymbol{\nu}[i])$,  we obtain the desired condition \eqref{eq:validity_condition_for_all_theorems}.

\section{Calibration Guarantees over a Finite Number of Experiments}
\label{sec:finite_number_guarantee}
{\color{black} Throughout the paper}, we have described CP schemes that satisfy calibration conditions, namely \eqref{eq:relfirst}, \eqref{eq:validity_pcp}, and \eqref{eq:validity_qcp}, that are defined on average over the calibration and test data points. For PCP and QCP, the average is also taken with respect to the random predictions produced for calibration and test data points. In this section, we elaborate on the practical significance of this expectation.

Suppose that we run CP, PCP or QCP (Algorithm~\ref{alg:QCP}) over $K$ runs that use independent calibration and test data. What is the fraction of such runs that  meet the condition that the true output is included in the predictive set? Ideally, this fraction will be close to the desired target $1-\alpha$ with high probability. In fact, by the results in the previous sections and by the law of large numbers, as $K$ grows large, this fraction of ``successful'' experiments will tend to $1-\alpha$. What can be guaranteed for a finite number $K$ of experiments? In the following, we address this question for conventional CP first, and then for PCP and QCP.

\subsection{Conventional CP}
Following the setting described above, let us consider $K$ experiments, such that in each $k$-th experiment we draw calibration and test data from the joint distribution $p(\mathcal{D}^\text{cal},z)$, i.e., $(\mathbfcal{D}^\text{cal}_k, \mathbf{z}_k) \sim p(\mathcal{D}^\text{cal},z)$. For each experiment, we evaluate whether the prediction set \eqref{eq:classical_CP_set_predictor} produced by CP contains the true target label $\mathbf{y}_k$ or not. Accordingly, the fraction of ``successful'' experiments is given by \begin{align}
    \label{eq:empirical_cov_dcp}
    \hat{\mathbf{P}}= \frac{1}{K}\sum_{k=1}^K \mathbbm{1}\big(\mathbf{y}_k \in \Gamma(\mathbf{x}_k|\mathbfcal{D}_k^\text{cal},\theta_{\mathcal{D}^\text{tr}})\big),
\end{align} where $\mathbbm{1}(\cdot)$ is the indicator function ($\mathbbm{1}(\text{true})=1$ and $\mathbbm{1}(\text{false})=0$). To restate the question posed at the beginning of this section, given $K$, how large can we guarantee the success rate $\hat{P}$ to be?

By the exchangeability of calibration and test data (Assumption 1), which implies the exchangeability of the $|\mathcal{D}^\text{cal}|+1$ scores evaluated on calibration and test data (see Appendix~\ref{appendix:proofs}), the distribution of random variable $K\hat{\mathbf{P}}$ is given by the binomial
\begin{align}
    \label{eq:convergence_pmf}
    K\hat{\mathbf{P}} \sim \text{Binom}\bigg(K, \frac{\lceil (1-\alpha)(|\mathcal{D}^\text{cal}|+1) \rceil}{|\mathcal{D}^\text{cal}|+1}\bigg),
\end{align}
if ties between the $|\mathcal{D}^\text{cal}|+1$ scores occur with probability zero (see \cite{ lei2018distribution, tibshirani2019conformal} for other uses of this assumption). Note that  the distribution \eqref{eq:convergence_pmf} can be recovered from \cite[Sec.~C]{angelopoulos2021gentle} by setting the number of test points to be one (see also \cite{vovk2012conditional}).

This implies that the success rate $\hat{\mathbf{P}}$ is larger than $1-\alpha-\epsilon$ for any $\epsilon>0$ with probability \begin{align}
    \label{eq:convergence_theor}
    &\Pr(  \hat{\mathbf{P}} \geq 1-\alpha -\epsilon ) = I_{ \frac{\lceil (1-\alpha)(|\mathcal{D}^\text{cal}|+1) \rceil}{|\mathcal{D}^\text{cal}|+1}   } ( \lceil K(1-\alpha-\epsilon) \rceil, \lfloor K(\alpha+\epsilon)\rfloor+1 ),
\end{align}
with regularized incomplete beta function $I_x(a,b)=B(x;a,b)/B(a,b)$, where $B(x;a,b)=\int_{0}^x t^{a-1}(1-t)^{b-1}\mathrm{d}t$ and $B(a,b)=B(1;a,b)$ \cite{jowett1963relationship}.

Fig.~\ref{fig:convergence_cp_theory} shows the probability distribution of random variable $\hat{\mathbf{P}}$, as well as the probability (\ref{eq:convergence_theor})  as a function of tolerance level $\epsilon$ for different number of experiments $K=100,1000,10000$, given  $\alpha=0.1$ and $|\mathcal{D}^\text{cal}|=9$. The top figure confirms that, by the law of large numbers, the distribution of success rate $\hat{\mathbf{P}}$ concentrates around the value $1-\alpha=0.9$. The bottom figure can be used to identify a value of the \emph{backoff probability} $\epsilon$ that allows one to obtain finite-$K$ guarantees on the success rate $1-\alpha-\epsilon$. For instance,  when $K = 1000$, we observe that setting $\epsilon=0.03$ guarantees a success rate $\hat{\mathbf{P}}$ no smaller than $0.87$ with probability larger than $0.999$.


\begin{figure}
  \centering
  \includegraphics[width=0.48\textwidth]{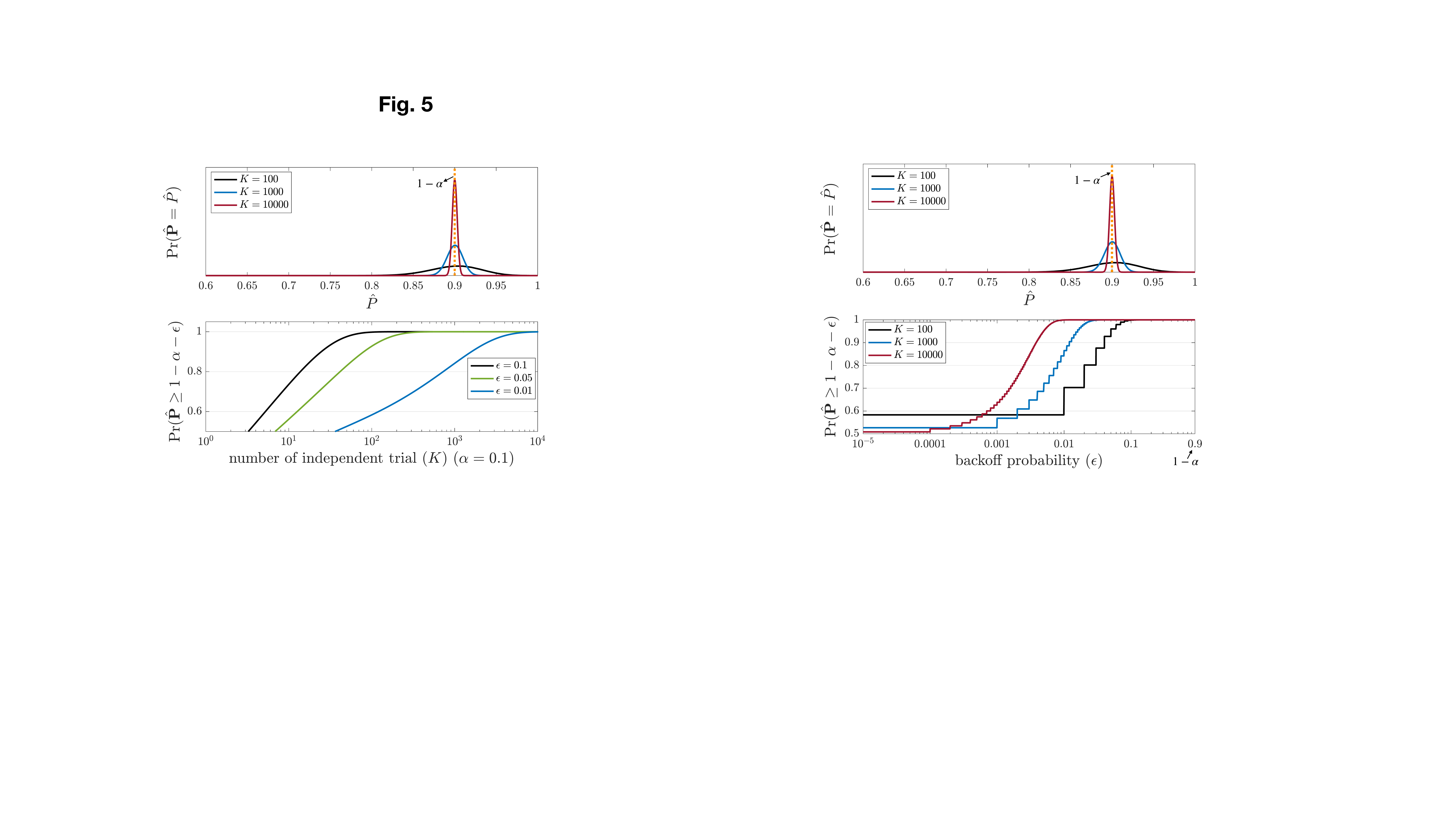}
  \caption{(top) Normalized density of the empirical coverage rate $\hat{\mathbf{P}}$ obtained from $K=100,1000,10000$ independent trials. (bottom) Probability that the empirical coverage rate $\hat{\mathbf{P}}$ satisfies the validity condition with tolerance level, or backoff probability, $\epsilon$ as a function of $\epsilon$, for $K=100,1000,10000$ independent trials  ($\alpha=0.1, |\mathcal{D}^\text{cal}|=9$).}
  \label{fig:convergence_cp_theory}
\end{figure}

\subsection{Probabilistic and Quantum CP}
In the case of PCP and QCP, each $k$-th experiment involves also the  predictions $\{\hat{\mathbf{y}}_k^{1:M}[i]\}_{i=1}^{|\mathcal{D}^\text{cal}|}$ for the calibration points $\mathbfcal{D}^\text{cal}_k$, and  $\hat{\mathbf{y}}_k^{1:M}$ for the test data $\mathbf{z}_k$, following either \eqref{eq:sampling_classical} (PCP) or \eqref{eq:single_measurement_distribution_noisy} (QCP). Despite the presence of the additional randomness due to the stochastic predictions, the finite-$K$ guarantees \eqref{eq:convergence_pmf}-\eqref{eq:convergence_theor} still hold for the fraction of ``successful'' experiments
\begin{align}
    \label{eq:empirical_cov_qcp}
    \hat{\mathbf{P}}_M := \frac{1}{K}\sum_{k=1}^K \mathbbm{1}\big(\mathbf{y}_k \in \mathbf{\Gamma}_M(\mathbf{x}_k|\mathbfcal{D}_k^\text{cal},\theta_{\mathcal{D}^\text{tr}})\big)
\end{align}
for PCP and QCP if ties between the $|\mathcal{D}^\text{cal}|+1$ scores occur with probability zero. This is because the $|\mathcal{D}^\text{cal}|+1$ scores for calibration and test data are exchangeable also for PCP and QCP due to the exchangeability of calibration and test data (Assumption~\ref{assum:classical}), and to the independence of $M$ predictions \eqref{eq:sampling_classical}, \eqref{eq:single_measurement_distribution_noisy} for distinct inputs.

\section{Additional Experiments} \label{appendix:additional_experiments}
{\color{black}
\subsection{Impact of the Choice of Parameter $\lowercase{k}$ in the Scoring Function (18)}
\begin{figure}
  \centering
  \includegraphics[width=0.9\textwidth]{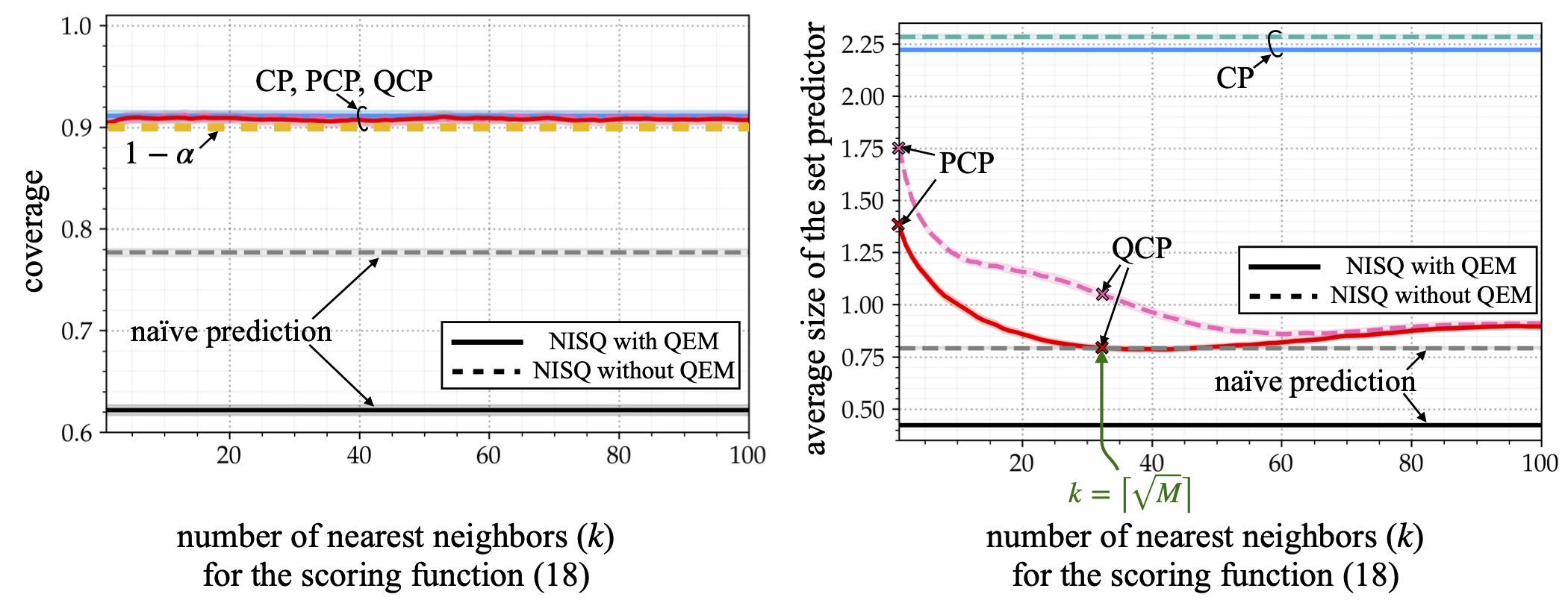}
      \caption{{\color{black} Density estimation with a strongly bimodal ground-truth Gaussian distribution: Coverage and average size of the set predictors as a function of $k$ given $|\mathcal{D}|=20$ available training samples. Training is done on a classical simulator,  while testing is implemented on  \texttt{imbq\_quito} NISQ device, with or without M3 QEM \cite{nation2021scalable}. The shaded areas correspond to confidence intervals covering 95\% of the realized values.} } \label{fig:per_k_exp}
\end{figure}

To elaborate on the impact of the choice of parameter $k$ {\color{black} in the scoring function \eqref{eq:pcp_scoring_from_generalized_with_k}  when used in conjunction with QCP,} we plot in Fig.~\ref{fig:per_k_exp} the coverage and average size of the QCP set predictor for the density learning problem (see Sec.~\ref{subsec:unsup_learning}) as a function of $k$ given availability of $M=1000$ measurements. Recall that $k=1$ corresponds to the choice of scoring function assumed in PCP \cite{wang2022probabilistic}, while the selection $k=\lceil \sqrt{M} \rceil$ ensures consistency of the $k$-NN density estimator as summarized in Sec.~\ref{subsubsec:quantum_scoring}. From Fig.~\ref{fig:per_k_exp} we conclude  that the proposed scoring function \eqref{eq:pcp_scoring_from_generalized} with the theoretically motivated choice $k=\lceil \sqrt{M} \rceil$  achieves nearly minimal average predicted set size, decreasing the set size by $57.25\%$ as compared to the case $k=1$, {\color{black}which PCP assumes. }

\subsection{QCP with PQC Trained in the Presence of Quantum Hardware Noise}
In order to further verify that the reliability guarantees of QCP in Theorem \ref{ther:qcp} hold irrespective of the quality of the trained PQC, even trained in the presence of quantum hardware noise, in Fig. \ref{fig:tr_on_ibmq}, we plot the coverage and average size of the set predictors for the problem of density estimation using a PQC trained on \texttt{imbq\_quito} NISQ device with or without QEM. The other settings are same as in Fig. \ref{fig:density_learning_intro}. QCP is observed to guarantee  reliability also for the model trained on the quantum computer. This is in contrast to the na\"ive set predictor, which only covers $40\%$ of the support, falling far short of the target coverage level $90\%$.
\begin{figure}
  \centering
  \includegraphics[width=0.9\textwidth]{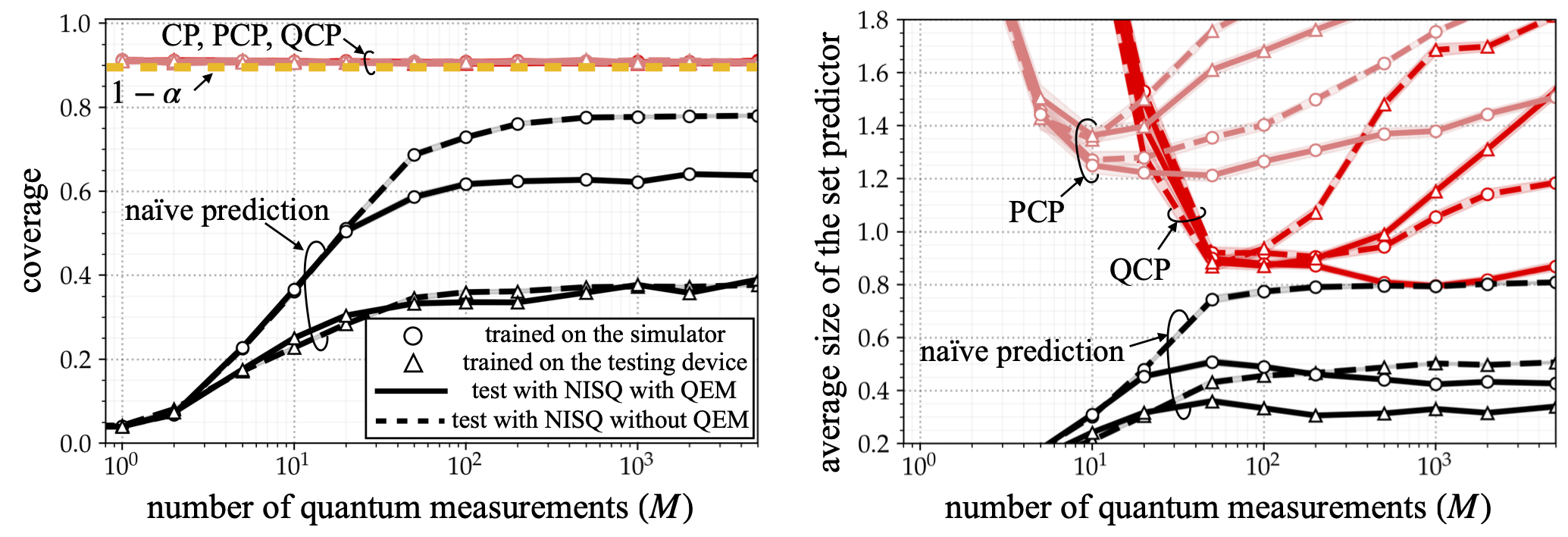}
  \caption{{\color{black}Density estimation with a strongly bimodal ground-truth Gaussian distribution: Coverage and average size of the set predictors as a function of number $M$ of quantum measurements given $|\mathcal{D}|=20$ available training samples.  Training is done either on a classical simulator or on a \texttt{imbq\_quito} NISQ device with or without M3 QEM \cite{nation2021scalable}. When trained on the quantum device, the same QEM strategy is applied during the testing phase.  The shaded areas correspond to confidence intervals covering 95\% of the realized values.} }
\label{fig:tr_on_ibmq}
\vspace{-0.5cm}
\end{figure}

\subsection{Impact of the Temperature in Quantum Data Classification}\label{sec:exp_qc_app}
In this subsection, we provide additional experimental results to study the impact of temperature $T$ in the Gibbs state $\rho(y)$ (Sec.~\ref{sec:exp_qc}), as well as of the number of measurements $M$, for quantum data classification.

\begin{figure}
  \centering
  \includegraphics[width=0.9\textwidth]{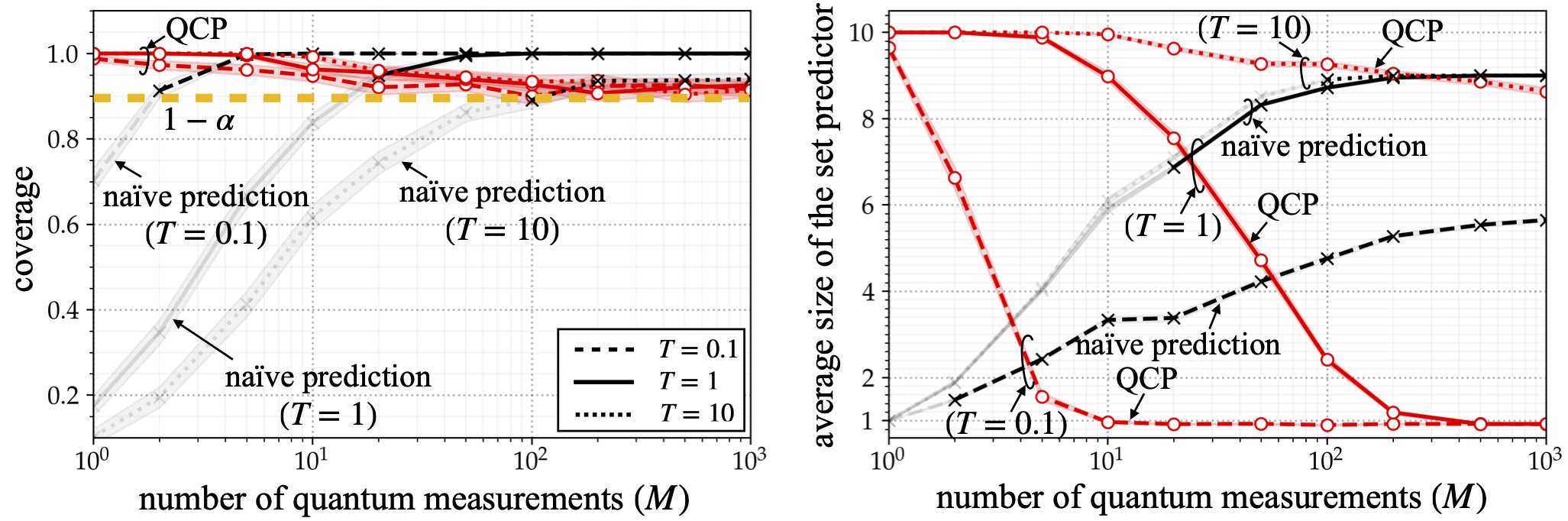}
  \caption{{\color{black}Quantum data classification: Coverage and average size of the set predictors as a function of the number $M$ of quantum measurements given $|\mathcal{D}^\text{cal}|=10$ calibration samples. The ten possible density matrices to be classified are generated as $\rho(y) = e^{-H(y)/T}/\text{Tr}(e^{-H(y)/T})$ with temperature $T>0$, where the Hamiltonian matrices $H(y)$ are independently generated so as to ensure a sparsity level of $0.2$ at temperature $T=1$ as in \cite{ahmed2021classification}. Pretty good measurements detector is adopted \cite{hausladen1994pretty}. The shaded areas correspond to confidence intervals covering 95\% of the realized values.  The results are averaged over $1000$ experiments, and transparent lines are used to highlight regimes in which the  set predictors do not meet the coverage level $1-\alpha=0.9$.}} 
\label{fig:qc_exp_main}
\end{figure}

In Fig.~\ref{fig:qc_exp_main}, we plot coverage and average size of the set predictor as a function of number of measurements $M$ for QCP and for the na\"ive predictor.  We consider no drift, i.e., $\tau=\infty$ in \eqref{eq:decoherence_evolution}. Note that an increased temperature $T$ makes the classification problem more challenging since all the $C$ density matrices $\{\rho(y)\}_{y=1}^C$ become increasingly close to the maximally mixed state. As per our theoretical results, QCP always achieves coverage no smaller than the predetermined level $1-\alpha=0.9$, while the na\"ive prediction fails to achieve validity unless it is supplied a sufficiently large number of quantum measurements $M$, i.e., $M\geq 100$ for temperature $T=10$. In the regime of many shots, i.e., for $M \geq 100$, although both na\"ive and QCP set predictors are valid, the  na\"ive set predictor tends to be extremely conservative, yielding predicted set that include $9$ out of $10$ labels for $T=1$, while the set predictions output by QCP include on average a single class.

}

\bibliographystyle{IEEEtran}
\bibliography{mybib}

\begin{thebibliography}{10}
\providecommand{\url}[1]{#1}
\csname url@samestyle\endcsname
\providecommand{\newblock}{\relax}
\providecommand{\bibinfo}[2]{#2}
\providecommand{\BIBentrySTDinterwordspacing}{\spaceskip=0pt\relax}
\providecommand{\BIBentryALTinterwordstretchfactor}{4}
\providecommand{\BIBentryALTinterwordspacing}{\spaceskip=\fontdimen2\font plus
\BIBentryALTinterwordstretchfactor\fontdimen3\font minus \fontdimen4\font\relax}
\providecommand{\BIBforeignlanguage}[2]{{%
\expandafter\ifx\csname l@#1\endcsname\relax
\typeout{** WARNING: IEEEtran.bst: No hyphenation pattern has been}%
\typeout{** loaded for the language `#1'. Using the pattern for}%
\typeout{** the default language instead.}%
\else
\language=\csname l@#1\endcsname
\fi
#2}}
\providecommand{\BIBdecl}{\relax}
\BIBdecl

\bibitem{biamonte2017quantum}
J.~Biamonte, P.~Wittek, N.~Pancotti, P.~Rebentrost, N.~Wiebe, and S.~Lloyd, ``Quantum machine learning,'' \emph{Nature}, vol. 549, no. 7671, pp. 195--202, 2017.

\bibitem{schuld2021machine}
M.~Schuld and F.~Petruccione, \emph{Machine Learning with Quantum Computers}.\hskip 1em plus 0.5em minus 0.4em\relax Springer, 2021.

\bibitem{simeone2022introduction}
O.~Simeone, ``An introduction to quantum machine learning for engineers,'' \emph{Foundations and Trends{\textregistered} in Signal Processing}, vol.~16, no. 1-2, pp. 1--223, 2022.

\bibitem{tran2022plex}
D.~Tran, J.~Liu, M.~W. Dusenberry, D.~Phan, M.~Collier, J.~Ren, K.~Han, Z.~Wang, Z.~Mariet, H.~Hu \emph{et~al.}, ``Plex: Towards reliability using pretrained large model extensions,'' \emph{arXiv preprint arXiv:2207.07411}, 2022.

\bibitem{vovk2022algorithmic}
V.~Vovk, A.~Gammerman, and G.~Shafer, \emph{Algorithmic {L}earning in a {R}andom {W}orld}.\hskip 1em plus 0.5em minus 0.4em\relax Springer Nature, 2022.

\bibitem{wang2022probabilistic}
Z.~Wang, R.~Gao, M.~Yin, M.~Zhou, and D.~M. Blei, ``Probabilistic conformal prediction using conditional random samples,'' \emph{arXiv preprint arXiv:2206.06584}, 2022.

\bibitem{tibshirani2019conformal}
R.~J. Tibshirani, R.~Foygel~Barber, E.~Candes, and A.~Ramdas, ``Conformal prediction under covariate shift,'' in \emph{Proc. of Adv. in Neural Inf. Processing Sys. (NIPS)}, 2019.

\bibitem{barber2021predictive}
R.~F. Barber, E.~J. Candes, A.~Ramdas, and R.~J. Tibshirani, ``Predictive inference with the jackknife+,'' \emph{The Annals of Statistics}, vol.~49, no.~1, pp. 486--507, 2021.

\bibitem{angelopoulos2021gentle}
A.~N. Angelopoulos and S.~Bates, ``A gentle introduction to conformal prediction and distribution-free uncertainty quantification,'' \emph{arXiv preprint arXiv:2107.07511}, 2021.

\bibitem{cai2022quantum}
Z.~Cai, R.~Babbush, S.~C. Benjamin, S.~Endo, W.~J. Huggins, Y.~Li, J.~R. McClean, and T.~E. O'Brien, ``Quantum error mitigation,'' \emph{arXiv preprint arXiv:2210.00921}, 2022.

\bibitem{georgopoulos2021modeling}
K.~Georgopoulos, C.~Emary, and P.~Zuliani, ``Modeling and simulating the noisy behavior of near-term quantum computers,'' \emph{Physical Review A}, vol. 104, no.~6, p. 062432, 2021.

\bibitem{bravyi2021mitigating}
S.~Bravyi, S.~Sheldon, A.~Kandala, D.~C. Mckay, and J.~M. Gambetta, ``Mitigating measurement errors in multiqubit experiments,'' \emph{Physical Review A}, vol. 103, no.~4, p. 042605, 2021.

\bibitem{smith2021qubit}
A.~W. Smith, K.~E. Khosla, C.~N. Self, and M.~Kim, ``Qubit readout error mitigation with bit-flip averaging,'' \emph{Science advances}, vol.~7, no.~47, p. eabi8009, 2021.

\bibitem{schwarz2011detecting}
L.~Schwarz and S.~van Enk, ``Detecting the drift of quantum sources: not the de {F}inetti theorem,'' \emph{Physical Review Letters}, vol. 106, no.~18, p. 180501, 2011.

\bibitem{van2013quantum}
S.~J. van Enk and R.~Blume-Kohout, ``When quantum tomography goes wrong: drift of quantum sources and other errors,'' \emph{New Journal of Physics}, vol.~15, no.~2, p. 025024, 2013.

\bibitem{calderbank1996good}
A.~R. Calderbank and P.~W. Shor, ``Good quantum error-correcting codes exist,'' \emph{Physical Review A}, vol.~54, no.~2, p. 1098, 1996.

\bibitem{jose2022error}
S.~T. Jose and O.~Simeone, ``Error-mitigation-aided optimization of parameterized quantum circuits: Convergence analysis,'' \emph{IEEE Transactions on Quantum Engineering}, vol.~3, pp. 1--19, 2022.

\bibitem{banchi2023statistical}
L.~Banchi, J.~L. Pereira, S.~T. Jose, and O.~Simeone, ``Statistical complexity of quantum learning,'' \emph{arXiv preprint arXiv:2309.11617}, 2023.

\bibitem{cp2023}
\BIBentryALTinterwordspacing
R.~Tibshirani, ``\emph{Conformal Prediction: Advanced Topics in Statistical Learning (Lecture note)},'' 2023. [Online]. Available: \url{https://www.stat.berkeley.edu/~ryantibs/statlearn-s23/lectures/conformal.pdf}
\BIBentrySTDinterwordspacing

\bibitem{koczor2021dominant}
B.~Koczor, ``The dominant eigenvector of a noisy quantum state,'' \emph{New Journal of Physics}, vol.~23, no.~12, p. 123047, 2021.

\bibitem{alexander2020qiskit}
T.~Alexander, N.~Kanazawa, D.~J. Egger, L.~Capelluto, C.~J. Wood, A.~Javadi-Abhari, and D.~C. McKay, ``Qiskit pulse: programming quantum computers through the cloud with pulses,'' \emph{Quantum Science and Technology}, vol.~5, no.~4, p. 044006, 2020.

\bibitem{jayakumar2023universal}
A.~Jayakumar, S.~Chessa, C.~Coffrin, A.~Y. Lokhov, M.~Vuffray, and S.~Misra, ``Universal framework for simultaneous tomography of quantum states and spam noise,'' \emph{arXiv preprint arXiv:2308.15648}, 2023.

\bibitem{nation2021scalable}
P.~D. Nation, H.~Kang, N.~Sundaresan, and J.~M. Gambetta, ``Scalable mitigation of measurement errors on quantum computers,'' \emph{PRX Quantum}, vol.~2, no.~4, p. 040326, 2021.

\bibitem{sarovar2020detecting}
M.~Sarovar, T.~Proctor, K.~Rudinger, K.~Young, E.~Nielsen, and R.~Blume-Kohout, ``Detecting crosstalk errors in quantum information processors,'' \emph{Quantum}, vol.~4, p. 321, 2020.

\bibitem{inzenman1991recent}
A.~Inzenman, ``Recent developments in nonparametric density estimation,'' \emph{Journal of the American Statistical Association}, vol.~86, no. 413, pp. 205--224, 1991.

\bibitem{lei2014distribution}
J.~Lei and L.~Wasserman, ``Distribution-free prediction bands for non-parametric regression,'' \emph{Journal of the Royal Statistical Society: Series B (Statistical Methodology)}, vol.~76, no.~1, pp. 71--96, 2014.

\bibitem{cacoullos1964estimation}
T.~Cacoullos, ``Estimation of a multivariate density,'' University of Minnesota, Tech. Rep., 1964.

\bibitem{hand1982kernel}
D.~J. Hand, ``Kernel discriminant analysis,'' \emph{Research Studies Press}, 1982.

\bibitem{loftsgaarden1965nonparametric}
D.~O. Loftsgaarden, C.~P. Quesenberry \emph{et~al.}, ``A nonparametric estimate of a multivariate density function,'' \emph{The Annals of Mathematical Statistics}, vol.~36, no.~3, pp. 1049--1051, 1965.

\bibitem{simeone2022machine}
O.~Simeone, \emph{Machine Learning for Engineers}.\hskip 1em plus 0.5em minus 0.4em\relax Cambridge University Press, 2022.

\bibitem{lei2011efficient}
J.~Lei, J.~Robins, and L.~Wasserman, ``Efficient nonparametric conformal prediction regions,'' \emph{arXiv preprint arXiv:1111.1418}, 2011.

\bibitem{wang2022optimizing}
Z.~Wang, Q.~Xu, Z.~Yang, Y.~He, X.~Cao, and Q.~Huang, ``Optimizing partial area under the top-k curve: Theory and practice,'' \emph{IEEE Transactions on Pattern Analysis and Machine Intelligence}, 2022.

\bibitem{du2022theory}
Y.~Du, Z.~Tu, B.~Wu, X.~Yuan, and D.~Tao, ``Theory of quantum generative learning models with maximum mean discrepancy,'' \emph{arXiv preprint arXiv:2205.04730}, 2022.

\bibitem{gili2022evaluating}
K.~Gili, M.~Mauri, and A.~Perdomo-Ortiz, ``Evaluating generalization in classical and quantum generative models,'' \emph{arXiv preprint arXiv:2201.08770}, 2022.

\bibitem{zoufal2019quantum}
C.~Zoufal, A.~Lucchi, and S.~Woerner, ``Quantum generative adversarial networks for learning and loading random distributions,'' \emph{npj Quantum Information}, vol.~5, no.~1, pp. 1--9, 2019.

\bibitem{masegosa2020learning}
A.~Masegosa, ``Learning under model misspecification: Applications to variational and ensemble methods,'' in \emph{Proc. of Adv. in Neural Inf. Processing Sys. (NIPS)}, 2020.

\bibitem{morningstar2022pacm}
W.~R. Morningstar, A.~Alemi, and J.~V. Dillon, ``{P}{A}{C}$^{m}$ {B}ayes: Narrowing the empirical risk gap in the misspecified bayesian regime,'' in \emph{Proc. of Artificial Intelligence and Statistics (AISTATS)}.\hskip 1em plus 0.5em minus 0.4em\relax PMLR, 2022.

\bibitem{zecchin2022robust}
M.~Zecchin, S.~Park, O.~Simeone, M.~Kountouris, and D.~Gesbert, ``Robust {P}{A}{C}$^m$: Training ensemble models under model misspecification and outliers,'' \emph{arXiv preprint arXiv:2203.01859}, 2022.

\bibitem{perez2020data}
A.~P{\'e}rez-Salinas, A.~Cervera-Lierta, E.~Gil-Fuster, and J.~I. Latorre, ``Data re-uploading for a universal quantum classifier,'' \emph{Quantum}, vol.~4, p. 226, 2020.

\bibitem{schuld2021effect}
M.~Schuld, R.~Sweke, and J.~J. Meyer, ``Effect of data encoding on the expressive power of variational quantum-machine-learning models,'' \emph{Physical Review A}, vol. 103, no.~3, p. 032430, 2021.

\bibitem{miao2023neural}
J.~Miao, C.-Y. Hsieh, and S.-X. Zhang, ``Neural network encoded variational quantum algorithms,'' \emph{arXiv preprint arXiv:2308.01068}, 2023.

\bibitem{kordzanganeh2022exponentially}
M.~Kordzanganeh, P.~Sekatski, L.~Fedichkin, and A.~Melnikov, ``An exponentially-growing family of universal quantum circuits,'' \emph{Machine Learning: Science and Technology}, vol.~4, 2023.

\bibitem{hornik1991approximation}
K.~Hornik, ``Approximation capabilities of multilayer feedforward networks,'' \emph{Neural Networks}, vol.~4, no.~2, pp. 251--257, 1991.

\bibitem{kingma2014adam}
D.~P. Kingma and J.~Ba, ``Adam: A method for stochastic optimization,'' \emph{arXiv preprint arXiv:1412.6980}, 2014.

\bibitem{paszke2019pytorch}
A.~Paszke, S.~Gross, F.~Massa, A.~Lerer, J.~Bradbury, G.~Chanan, T.~Killeen, Z.~Lin, N.~Gimelshein, L.~Antiga \emph{et~al.}, ``Pytorch: An imperative style, high-performance deep learning library,'' in \emph{Proc. of Adv. in Neural Inf. Processing Sys. (NIPS)}, 2019.

\bibitem{schuld2019evaluating}
M.~Schuld, V.~Bergholm, C.~Gogolin, J.~Izaac, and N.~Killoran, ``Evaluating analytic gradients on quantum hardware,'' \emph{Physical Review A}, vol.~99, no.~3, p. 032331, 2019.

\bibitem{dhillon2023expected}
G.~S. Dhillon, G.~Deligiannidis, and T.~Rainforth, ``On the expected size of conformal prediction sets,'' \emph{arXiv preprint arXiv:2306.07254}, 2023.

\bibitem{kobayashi2022overfitting}
M.~Kobayashi, K.~Nakaji, and N.~Yamamoto, ``Overfitting in quantum machine learning and entangling dropout,'' \emph{Quantum Machine Intelligence}, vol.~4, no.~2, p.~30, 2022.

\bibitem{ahmed2021classification}
S.~Ahmed, C.~S. Munoz, F.~Nori, and A.~F. Kockum, ``Classification and reconstruction of optical quantum states with deep neural networks,'' \emph{Physical Review Research}, vol.~3, no.~3, p. 033278, 2021.

\bibitem{hausladen1994pretty}
P.~Hausladen and W.~K. Wootters, ``A ‘pretty good’measurement for distinguishing quantum states,'' \emph{Journal of Modern Optics}, vol.~41, no.~12, pp. 2385--2390, 1994.

\bibitem{gambs2008quantum}
S.~Gambs, ``Quantum classification,'' \emph{arXiv preprint arXiv:0809.0444}, 2008.

\bibitem{deville2021new}
Y.~Deville and A.~Deville, ``New single-preparation methods for unsupervised quantum machine learning problems,'' \emph{IEEE Transactions on Quantum Engineering}, vol.~2, pp. 1--24, 2021.

\bibitem{bates2021distribution}
S.~Bates, A.~Angelopoulos, L.~Lei, J.~Malik, and M.~Jordan, ``Distribution-free, risk-controlling prediction sets,'' \emph{Journal of the ACM (JACM)}, vol.~68, no.~6, pp. 1--34, 2021.

\bibitem{angelopoulos2021learn}
A.~N. Angelopoulos, S.~Bates, E.~J. Cand{\`e}s, M.~I. Jordan, and L.~Lei, ``Learn then test: Calibrating predictive algorithms to achieve risk control,'' \emph{arXiv preprint arXiv:2110.01052}, 2021.

\bibitem{angelopoulos2022conformal}
A.~N. Angelopoulos, S.~Bates, A.~Fisch, L.~Lei, and T.~Schuster, ``Conformal risk control,'' \emph{arXiv preprint arXiv:2208.02814}, 2022.

\bibitem{shalev2014understanding}
S.~Shalev-Shwartz and S.~Ben-David, \emph{Understanding {M}achine {L}earning: From {T}heory to {A}lgorithms}.\hskip 1em plus 0.5em minus 0.4em\relax Cambridge University Press, 2014.

\bibitem{mcallester1999pac}
D.~A. McAllester, ``{PAC}-{B}ayesian model averaging,'' in \emph{Proc. of Annual Conf. Computational Learning Theory (COLT)}, July 1999, pp. 164--170.

\bibitem{guedj2019primer}
B.~Guedj, ``A primer on {P}{A}{C}-{B}ayesian learning,'' \emph{arXiv preprint arXiv:1901.05353}, 2019.

\bibitem{russo2016controlling}
D.~Russo and J.~Zou, ``Controlling bias in adaptive data analysis using information theory,'' in \emph{Proc. of Artificial Intelligence and Statistics (AISTATS)}, May 2016, pp. 1232--1240.

\bibitem{xu2017information}
A.~Xu and M.~Raginsky, ``Information-theoretic analysis of generalization capability of learning algorithms,'' in \emph{Proc. of Adv. in Neural Inf. Processing Sys. (NIPS)}, Dec. 2017, pp. 2524--2533.

\bibitem{caro2022generalization}
M.~C. Caro, H.-Y. Huang, M.~Cerezo, K.~Sharma, A.~Sornborger, L.~Cincio, and P.~J. Coles, ``Generalization in quantum machine learning from few training data,'' \emph{Nature Communications}, vol.~13, no.~1, p. 4919, 2022.

\bibitem{banchi2021generalization}
L.~Banchi, J.~Pereira, and S.~Pirandola, ``Generalization in quantum machine learning: A quantum information standpoint,'' \emph{PRX Quantum}, vol.~2, no.~4, p. 040321, 2021.

\bibitem{jose2022transfer}
S.~T. Jose and O.~Simeone, ``Transfer learning in quantum parametric classifiers: An information-theoretic generalization analysis,'' \emph{arXiv preprint arXiv:2201.06297}, 2022.

\bibitem{abbas2021power}
A.~Abbas, D.~Sutter, C.~Zoufal, A.~Lucchi, A.~Figalli, and S.~Woerner, ``The power of quantum neural networks,'' \emph{Nature Computational Science}, vol.~1, no.~6, pp. 403--409, 2021.

\bibitem{weber2022toward}
M.~Weber, A.~Anand, A.~Cervera-Lierta, J.~S. Kottmann, T.~H. Kyaw, B.~Li, A.~Aspuru-Guzik, C.~Zhang, and Z.~Zhao, ``Toward reliability in the {N}{I}{S}{Q} era: Robust interval guarantee for quantum measurements on approximate states,'' \emph{Physical Review Research}, vol.~4, no.~3, p. 033217, 2022.

\bibitem{duffield2022bayesian}
S.~Duffield, M.~Benedetti, and M.~Rosenkranz, ``Bayesian learning of parameterised quantum circuits,'' \emph{Machine Learning: Science and Technology}, vol.~4, no.~2, 2023.

\bibitem{caves2002unknown}
C.~M. Caves, C.~A. Fuchs, and R.~Schack, ``Unknown quantum states: {T}he quantum de {F}inetti representation,'' \emph{Journal of Mathematical Physics}, vol.~43, no.~9, pp. 4537--4559, 2002.

\bibitem{kuchibhotla2020exchangeability}
A.~K. Kuchibhotla, ``Exchangeability, conformal prediction, and rank tests,'' \emph{arXiv preprint arXiv:2005.06095}, 2020.

\bibitem{lei2018distribution}
J.~Lei, M.~G’Sell, A.~Rinaldo, R.~J. Tibshirani, and L.~Wasserman, ``Distribution-free predictive inference for regression,'' \emph{Journal of the American Statistical Association}, vol. 113, no. 523, pp. 1094--1111, 2018.

\bibitem{vovk2012conditional}
V.~Vovk, ``Conditional validity of inductive conformal predictors,'' in \emph{Proc. Asian Conference on Machine Learning}.\hskip 1em plus 0.5em minus 0.4em\relax PMLR, 2012, pp. 475--490.

\bibitem{jowett1963relationship}
G.~Jowett, ``The relationship between the binomial and ${F}$ distributions,'' \emph{Journal of the Royal Statistical Society. Series D (The Statistician)}, vol.~13, no.~1, pp. 55--57, 1963.

\end{thebibliography}
\end{document}